\documentclass[11pt]{article}
\usepackage{amsfonts,amssymb,amsmath,amsthm,bbm}
\usepackage{graphicx, color, psfrag,hyperref,xcolor}

\topmargin -0.5in \textheight 9in \oddsidemargin 0.15in
\evensidemargin 0.25in \textwidth 6.15in
\usepackage[english]{babel}
\selectlanguage{english}

\usepackage{color}


\parskip=3pt plus 1pt minus 1pt

\newcommand{\ew}{\color{black}}

\newtheorem{theorem}{Theorem}
\newtheorem{definition}{Definition}
\newtheorem{remark}{Remark}

\newtheorem{proposition}{Proposition}
\newtheorem{lemma}{Lemma}

\def\d{\mathop{\textrm{\rm d}}\nolimits}                  
\def\exp{\mathop{\textrm{\rm exp}}\nolimits}              
\def\supp{\mathop{\textrm{\rm supp}}\nolimits}            
\def\Ext{\mathop{\textrm{\rm Ext}}\nolimits}            
\def\sf{\mathop{\textrm{\rm sf}}\nolimits}            

\newcommand\blue[1]{\textcolor{blue}{}}

\newcommand{\beq}{\begin{eqnarray}}
\newcommand{\eeq}{\end{eqnarray}}

\newcommand{\beqt}{\begin{eqnarray*}}
\newcommand{\eeqt}{\end{eqnarray*}}

\newcommand{\be}{\begin{equation}}
\newcommand{\ee}{\end{equation}}

\newcommand{\bl}{\begin{lemma}}
\newcommand{\el}{\end{lemma}}

\newcommand{\bt}{\begin{theorem}}
\newcommand{\et}{\end{theorem}}

\newcommand{\bd}{\begin{definition}}
\newcommand{\ed}{\end{definition}}

\newcommand{\bp}{\begin{proposition}}
\newcommand{\ep}{\end{proposition}}

\newcommand{\bi}{\begin{itemize}}
\newcommand{\ei}{\end{itemize}}

\newcommand{\ben}{\begin{enumerate}}
\newcommand{\een}{\end{enumerate}}


\newcommand{\Z}{\mathbb Z}

\newcommand{\E}{\mathbb E}

\newcommand{\bbP}{\mathbb P}

\newcommand{\mee}{\ensuremath{\mathcal{M}}}

\newcommand{\eps}{\ensuremath{\varepsilon}}

\newcommand{\bbp}{\mathbb P}

\newcommand{\calz}{\ensuremath{\mathcal{Z}}}
\newcommand{\calk}{\ensuremath{\mathcal{K}}}

\newcommand{\lf}{\ensuremath{\left(}}
\newcommand{\ri}{\ensuremath{\right)}}

\newcommand{\calr}{\ensuremath{\mathcal{R}}}

\newcommand{\un}{\ensuremath{\underline}}

\def\free{\mathop{\textrm{\rm free}}\nolimits}           
\def\d{\mathop{\textrm{\rm d}}\nolimits}                  
\def\sign{\mathop{\textrm{\rm sign}}\nolimits}                  
\def\sf{\mathop{\textrm{\rm sf}}\nolimits}            
\def\ex{\mathop{\textrm{\rm ex}}\nolimits}            
\def\emp{\mathop{\textrm{\rm emp}}\nolimits}           

\def\Ext{\mathop{\textrm{\rm Ext}}\nolimits}           
\def\ex{\mathop{\textrm{\rm ex}}\nolimits}            
\def\inte{\mathop{\textrm{\rm in}}\nolimits}            

\begin{document}

\title{{\bf On  Long Range Ising Models with Random Boundary Conditions}}
 
 \author{Eric O. Endo 
 \ew 
 \footnote{NYU-ECNU Institute of Mathematical Sciences at NYU Shanghai, 3663 Zhongshan Road North, Shanghai, 200062, China,
 \newline
 email: ericossamiendo@gmail.com}\\
 Aernout C.D.  van Enter \footnote{ Bernoulli Institute, University of Groningen, Nijenborgh 9, 9747AG,Groningen, Netherlands,
 \newline
 email: aenter@phys.rug.nl}\\ 
  Arnaud Le Ny \footnote{LAMA UMR CNRS 8050, UPEC, Universit\'e Paris-Est,  94010 Cr\'eteil, France,
 \newline
 email:  arnaud.le-ny@u-pec.fr}
 }

\maketitle

\begin{center}
{\bf Abstract} 
\end{center}
We consider  polynomial long-range Ising models in one dimension, with ferromagnetic pair interactions decaying with power $2-\alpha$ (for $0 \leq \alpha < 1$), and prepared with randomly chosen boundary conditions. We show that at low temperatures in the thermodynamic limit the finite-volume Gibbs measures do not converge, but have a distributional limit, the so-called metastate. We find that there is a distinction between the values of $\alpha$ less than  or larger than $\frac{1}{2}$. For moderate, or intermediate, decay $\alpha < \frac{1}{2}$, the metastate is very dispersed and supported on the set of all Gibbs measures, both extremal and non-extremal, whereas for slow decays $\alpha > \frac{1}{2}$ the metastate is still dispersed, but has its support just on the set of the two extremal Gibbs measures, the plus measure  and the minus measure. 

The former, moderate decays case, appears to be new and is due to the occurrence  of almost sure  boundedness of the random variable which is the sum of all interaction (free) energies between random and ordered half-lines,  when the decay is fast enough, but still slow enough to get a phase transition ($\alpha>0$); while the latter, slow decays  case, is more reminiscent of and similar to the behaviour of higher-dimensional nearest-neighbour Ising models with diverging boundary (free) energies. 

We leave the threshold case $\alpha=\frac{1}{2}$ for further studies.

\footnotesize

\vspace{2cm}

 {\em  AMS 2000 subject classification}: Primary- 60K35 ; secondary- 82B20

{\em Keywords and phrases}: Long-range Ising models, metastates, random boundary conditions.

\tableofcontents

\normalsize
\section{Introduction}

 We consider  polynomial long-range Ising models in one dimension, also called  Dyson or Dyson-Ising models in previous works\footnote{The term ``Dyson model" we have used before has turned out to be slighly controversial. Apart from the fact that the model was not really invented by Dyson -- as quite frequently  occurs in  model-naming --, another problem is that the terminology is more often used for the hierarchical version, which Dyson invented as an auxiliary model to study the polynomial long-range one. However, the alternative, ``long-range Ising model" which was used in other works, does not remove the ambiguity either, as this term is also often used either for mean-field, or for Kac-type interactions. We therefore chose here  either to add the polynomial characterization, or to sometimes keep only the  ``Long-range Ising model" terminology of Cassandro {\em et al.}, but we also will still  sometimes keep referring to "Dyson models" for the sake of brevity, although we are  aware of that  this   also is a bit problematic.}. They consist of one-dimensional ferromagnetic  Ising models with long-range  pair interactions, polynomially decaying with a decay power $2-\alpha$,  which recently have been studied for a variety of  reasons. They are of interest  especially when $0\le  \alpha <1$, as in this regime they  mimic the behaviour of high-dimensional models, but in a version in which one can continuously vary the dimension, by varying $\alpha$ between $0$ and $1$. They display a phase transition in this regime; however, in contrast to high-dimensional models\footnote{That is, for dimensions $d>2$,  as in $d=2$ it follows that there are no interface states  from the famous Aizenman-Higuchi Theorem.} there are no interface states, but  there exist maximally two extremal Gibbs measures, $\mu^+$ and $\mu^{-}$, which differ at low temperatures. Moreover, when the decay power equals $2$ (so $\alpha=0$), there exists a hybrid (random first-order) transition in temperature,  as well as an intermediate phase  with temperature-dependent slow decay of correlations in a temperature interval below this transition temperature.
 
 Recently, these polynomial long-range Ising models were shown to provide the first example of a Gibbs measure which is not a $g$-measure \cite{BEEL}. For a selection of further results   and variations thereof, around these one-dimensional long-range models, see \cite{ACCN,BEEKR,BS,CFMP,CMP17,CMPR,COP,Dys,EKRS,F74,FILS,FrSp,IN,
 Joh,Kac,Ken,LP}.

Spin systems with random boundary conditions are among the simplest examples of quenched disordered systems. In the case of nearest-neighbour ferromagnetic  Ising models on trees, they are equivalent to Edwards-Anderson spin glass models \cite{CCST}. For an analysis of what happens  for Ising models on  the regular lattice $\Z^{d}$, $d\geq 2$, a question which was raised in \cite{NS2}, see \cite{EMN,ENS1, ENS2}.

In this latter case it was proven that at low temperatures a sequence of finite-volume Gibbs measures, for growing volumes with  random boundary conditions which are drawn in a symmetric i.i.d. manner,  will not converge. In other words, there is ``Chaotic Size Dependence (CSD)''  \cite{NS2}, see also \cite{Ent}. Typically,  the random boundary condition selects, with probability about $\frac{1}{2}$, either a plus measure or a minus measure, depending on the overall sign of the random boundary term; and there is no interface in the system. The resulting metastate \cite{NS1}, which describes the {\em distribution} of the different possible limit points, is not trivial but dispersed, as there are two different limit points, namely the plus state and the minus state, which both occur with probability $\frac{1}{2}$.

We remind the reader that metastates, objects introduced in 1995/1996  by Newman and Stein \cite{NS96c} to provide a framework to investigate the arising of CSD (which they formalized for spin-glasses in 1992 \cite {NS2}), in general are  (random) measures (thus measures on measures) on a product space of  disorder variables which we denote by $\eta$'s and the (Gibbs) measures on the spin variables $\sigma$, describing the weight distribution over the possible limit Gibbs measures, whether obtained as a kind of translation  averages, as in \cite{NS1, NS3}  or via conditioning, as in  \cite{AW}. Due to the randomness of the Gibbs measures in the support of the metastates, in general the metastates thus become measures on measures on measures...See also \cite{Bov,N, NS3}, or the more recent review \cite{NRS} where metastates and also the role of random boundary conditions are mentioned. 

The notions of Chaotic Size-Dependence and metastates have been developed to describe spin glasses, but as there is a lack of tractable non-mean-field spin-glass models, toy examples, such as ferromagnets with random boundary conditions, can provide instructive  illustrations of these notions. Their marginal on the disorder variables is the disorder measure, whereas conditioned on the disorder, they provide a measure on the Gibbs measures for that disorder realisation. Some studies regarding metastates in mean-field settings  are \cite{BG, FKR,IK,Kul}. For the theory of infinite-volume Gibbs measures, we refer to \cite{Bov,EFS,FV,HOG,Kos}.

Let us also mention that by  imposing a Mattis disorder, $J'(i,j)(\eta)=J(i,j)  \eta_i \eta_j $, $\eta_i= \pm 1$ i.i.d., 
and by using the random  gauge transformation of the pair  ($\sigma$,$\eta$) into $\sigma'_i =\eta_i \sigma_i$, statements which hold for almost all boundary conditions $\eta$ for a ferromagnet become statements for a fixed boundary condition $\sigma'$, but now valid for almost all realizations of the (Mattis) coupling disorder. For more on Mattis spin-glasses, see {\em e.g.} \cite{CCST, Kul98, Mattis}.
 
In this paper we investigate how polynomial long-range Ising models behave under random boundary conditions, and more in particular  what their metastate description is in that situation. This case shows substantial simplification as compared to the general case, as neither the set of  Gibbs measures, nor  the limit distribution on them, depend on the disorder variables. Thus the metastate is a product measure of the measure on the disorder variables (which is an independent product measure on the sequences of pluses and minuses, describing the random boundary conditions) and a measure (distribution) on the Gibbs measures for the Dyson models. By abuse of terminology we will also call this last distribution the metastate.    

It will turn out that there is a difference between two cases, namely the cases where $\alpha $ is smaller or larger than $\frac{1}{2}$.  The difference between those two cases rests on the fact that boundary terms are almost surely finite in the first case,  but they diverge in the second one.  In both  cases the metastate is dispersed, that is, it is not concentrated on a single (pure or mixed) Gibbs measure. But in the case $\alpha <   \frac{1}{2}$, the metastate will be concentrated on mixed states, whereas, in the case $\alpha > \frac{1}{2}$, similarly to what happens in higher-dimensional short-range models \cite{EMN,ENS1, ENS2}, we have a non-trivial metastate concentrated on the two pure Gibbs states, a situation which in some sense might be expected to be more ``typical", as seemed suggested by the recent results in \cite{CJK}. 
The case $\alpha = \frac{1}{2}$ seems more similar to the larger-$\alpha$ (equals slower decay) regime, but we have not checked if all our estimates go through in that more sensitive case.  Our proof strategy in the slow-decay regime loosely follows \cite{ENS1}, with some modifications which are needed due to the lack of independence of boundary conditions between different volumes, and the difference in low-temperature excitations (contours) showing up in the expansions we used to show the stability of similar zero-temperature results, which we obtain in Section \ref{SectionToyModel} for a toy version of our model (Ground states with weak (low-T)  boundary conditions). 
We need on the one hand to show that the boundary (free) energy is not too large, so we can with large probability still apply a contour analysis. On the other hand, we will use our weak local limit theorem for the slow-decay case to derive that the (free) energy is not too small, so we can exclude mixed states appearing in the support of the metastates.

In the next sections we will first give a short overview of the general theory of  lattice models and metastates we will need, and then give some of the  probabilistic tools we use, as well as the main results we will obtain. After discussing a toy version of our problem in Section 4, in  Sections 5 and 6 
we describe the contour representation and related cluster expansion estimates.
Proofs are further given  in Section 7 and in an Appendix (Section \ref{Appendix}). Section 8 contains a short summary of our conclusions. 


\section{Ising Model, random boundary conditions  and metastates}

\subsection{(Polynomial) Long-range Ising models}


Given a finite volume $\Lambda$ in $\mathbb{Z}$, we consider  polynomial long-range Ising models, {\em i.e.} ferromagnetic long-range one-dimensional Ising models, with a Hamiltonian defined on $\Omega_{\Lambda}=\{-1,1\}^{\Lambda}$ with boundary condition  $\eta_{\Lambda^c} \in \Omega_{\Lambda^c}=\{-1,1\}^{\Lambda^c}$ given by

\begin{equation}\label{Ham}
H^{\eta}_\Lambda(\sigma_{\Lambda}) = - \frac{1}{2}\sum_{x,y \in \Lambda} J_{xy} \sigma_x \sigma_y - \sum_{\substack{\substack{x \in \Lambda\\ y \in \Lambda^{c}}}} J_{xy} \sigma_x \eta_y,
\end{equation}
where $\sigma_{\Lambda}=(\sigma_i)_{i\in \Lambda}\in \Omega_{\Lambda}$ and, for a fixed $J\ge 1$ and $0\le \alpha<1$, the interaction $(J_{xy})_{x,y\in \mathbb{Z}}$ is defined by
\begin{eqnarray*}\label{interaction}
J_{xy}=
\begin{cases}
J, &\text{ if }|x-y|=1,\\
\frac{1}{|x-y|^{2-\alpha}}, &\text{ if }|x-y|>1,\\
0, &\text{ if }x=y.
\end{cases}
\end{eqnarray*}

For a fixed inverse temperature $\beta>0$, the Gibbs specification is determined by a family of probability measures 
$\gamma=\{\mu^{(\cdot)}_\Lambda\}_{\Lambda\Subset \Z}$ defined by
\be \label{GibbsMea}
\mu^\eta_\Lambda (\sigma_{\Lambda})= 
\frac{1}{Z_\Lambda^\eta} e^{-\beta H^{\eta}_\Lambda (\sigma_{\Lambda})}
\ee
where 
$Z^{\eta}_{\Lambda}$ is the partition function defined as usual by
$
Z^{\eta}_{\Lambda} = 
\sum_{\sigma_{\Lambda}\in \Omega_{\Lambda}} e^{-\beta H^{\eta}_\Lambda (\sigma_{\Lambda})}$. We write it $Z^{f}$, and similarly $H_\Lambda^{f}$ for free b.c. ($\eta_i=0,\; \forall i \in \Lambda^c$).


Consider $\Omega=\{-1,1\}^{\mathbb{Z}}$, let $\mathcal{E}$ be the Borel sigma-algebra  on $\{-1,1\}$, and let $\mathcal{F}=\mathcal{E}^{\otimes \Z}$ be the product sigma-algebra on $\Omega$. We denote by $\mee_1(\Omega)$  the set of probability measures on the measurable space $(\Omega,\mathcal{F})$.

{A function $f:\Omega \to \mathbb{R}$ is said to be \emph{local} if there exists a finite set $D\subset \mathbb{Z}$ such that $\sigma_D=\sigma'_D$ implies $f(\sigma)=f(\sigma')$. Denote by $D_f$  the smallest set satisfying this property, called the \emph{dependence set} of the function $f$. On the set of local functions, we attach the supremum norm $\lVert f \rVert = \sup_{\sigma \in \Omega}|f(\sigma)|$ to each function. 

 For a fixed probability measure $\mu\in \mathcal{M}_1(\Omega)$, define the family of seminorms
$$
\lVert \mu \rVert_X = \sup_{\substack{\lVert f \rVert = 1 \\ D_f \subset X}}|\mu(f)|
$$
over all finite $X\subset \mathbb{Z}$. The weak topology is generated by the set of open balls
$$
B_X^{\varepsilon}(\mu)=\{\nu\in \mathcal{M}_1(\Omega): \lVert \nu-\mu \rVert_X<\varepsilon\},
$$
where $\varepsilon>0$ and $X$ is finite. A sequence $\mu_n\in \mathcal{M}_1(\Omega)$ weakly converges to $\mu$ if, and only if, $\lVert \mu_n-\mu \rVert_X \to 0$ for every finite set $X\subset \mathbb{Z}$. In this weak topology, the space $\mathcal{M}(\Omega)$ is compact.}

A Gibbs measure $\mu$ is defined to be a probability measure on $\mee_1(\Omega)$ whose conditional probabilities with boundary condition $\eta$ outside $\Lambda$, are of the form of the kernels $\mu^\eta_\Lambda$, and thus satisfy the DLR equations
\be \label{DLR}
\mu \mu^{(\cdot)}_\Lambda = \mu, \text{ for all } \Lambda \Subset \Z,
\ee
where $\Lambda \Subset \Z$ means that $\Lambda$ is a finite subset of $\Z$. We denote by $\mathcal{G}(\gamma)$ the (convex) set of Gibbs measures for a given specification (or interaction), and by ${\rm ex} \; \mathcal{G}(\gamma)$ the set of its extreme elements. 

It is known that there exist at low temperatures only two extremal  Gibbs measures for these long-range models in the phase transition region (slow polynomial decay, low temperature) \cite{BS,F74,HOG},  {\em the plus measure}  $\mu^+$ and  {\em the minus measure} $\mu^{-}$. These measures are obtained as weak limits with the all$+$ (resp. all$-$) boundary conditions and are trivially translation-invariant, so there is no non-translation-invariant Gibbs measure. The fact that they are the only extremal elements in the set of Gibbs measures implies in particular that any other Gibbs measure $\mu$ is a convex combination of them:
$$
\mu =\lambda  \mu^+ + (1- \lambda) \mu^-,
$$
where $\lambda\in [0,1]$.
The weights $\lambda$ represent the relative probabilities of typical sets for both these phases. In particular, all the Gibbs measures are translation invariant -- just as is the case in the $2d$ nearest neighbor (n.n.) Ising model -- and the ``Dobrushin boundary condition'' $\eta=\pm$ defined by $\eta_x=\sign(x)$ (and $\omega_0=1$) leads  to  the symmetric convex mixture 
$$
\mu^\pm := \lim_{N \to \infty} \mu_{\Lambda_N}^{\pm}=\frac{1}{2} \mu^+ + \frac{1}{2} \mu^-, \quad \text{ where } \Lambda_N=[-N,N] \cap \mathbb{Z}.
$$

There exist various proofs of the existence of a phase transition for these polynomial long-range Ising  models \cite{ACCN,CFMP,Dys, FILS, FrSp,Joh}. As we noted, it is also  known that there are no interface states.  
In situations where one imposes Dobrushin boundary conditions, the interface has either mesoscopic (when $0<\alpha <1$) or macroscopic (when $\alpha=0$) fluctuations \cite{CMPR}. In this paper we refine in some sense  the analysis and show that for long-range models in dimension one, even if there are no interface states, some inhomogeneities near boundaries could  still manifest themselves at other scales.


\subsection{Random boundary conditions and metastates}

For each $x\in \Z$, define $\bbP_x$ to be the Bernoulli distribution on $(\{-1,1\},\mathcal{E})$ given by
$$
\bbP_x(\eta_x=1)=\frac{1}{2},
$$
and let $\bbP=\otimes_{x\in \Z}\bbP_x$ be the product measure on $(\Omega,\mathcal{F})$.

For a fixed finite interval $\Lambda\Subset \Z$, the Hamiltonian (\ref{Ham}) with random boundary condition~$\eta$ is a random variable on $(\Omega,\mathcal{F},\bbP)$, taking values in the set of bounded functions on $\Omega_{\Lambda}$.

We are interested in their behaviour in the infinite-volume limit of a family of Gibbs measures $\gamma[\eta]=\{\mu^{\eta}_{\Lambda}\}_{\Lambda\Subset \Z}$. Although  the extremal Gibbs measures and their mixtures are not disorder-dependent (and moreover, they  are translation invariant), we will find a  metastate, living on them, which  still will be a random quantity, inheriting its randomness from its construction using  the random boundary conditions. As the Gibbs measures themselves will not converge, only a distributional limit, the metastate,  will be the limit object. This limit object, however, lives on a different probability space from the one we used to describe the random boundary conditions.

In general a measure-on-measures-valued map $\kappa$ which is an element of
$ \mee_1(\mee_1(\Omega))$ will be  called a \emph{metastate}, if the set of Gibbs measures corresponding to different $\eta$ obtain full mass, i.e.,
$$
\kappa[\eta](\mathcal{G}(\gamma[\eta]))=1.
$$

The theory of metastates has been developed for general (quenched) disordered systems, and in particular for spin-glasses. As in such models it was not a priori clear whether any kind of pointwise  thermodynamic limit for the finite-volume states could be defined, it was investigated whether a weaker sense of distributional convergence still might make sense. It has been shown, mainly by Newman and Stein, that a theory of such convergence can be set up. The distribution (measure) on the possible limit points is called the metastate. 

 Our example, built from  a very simple model (a one-dimensional ferromagnet with pair interactions), will provide a new type of metastate for lattice systems. Its simplicity allows us to avoid requiring the most general version of the theory. In particular, we can identify the possible limit points (the plus and  the minus measure), and  due to the ferromagnetic character of the model we can avoid most of the measurability questions which the general theory was designed to address. Moreover, the measures themselves do not depend on the disorder, avoiding thus other potential measurability issues.

In the general theory, one way to construct a metastate is due to Aizenman and Wehr \cite{AW}. For a fixed Gibbs measure with random boundary condition $\mu_{\Lambda}^{\eta}$, consider the product measure
$$
\mathcal{K}_{\Lambda} = \bbP \otimes \delta_{\mu_{\Lambda}^\eta}.
$$
By Theorem 6.2.8 in   Bovier's book \cite{Bov}, for some increasing and absorbing sequence $(\Lambda_n)_{n\ge 1}$, i.e., $\Lambda_n \subset \Lambda_{n+1}$ and $\Lambda_n\uparrow \Z$, the weak limit
$$
\lim_{n\to \infty}\mathcal{K}_{\Lambda_n} = \mathcal{K}
$$
exists,  and its regular conditional distribution $\kappa=\mathcal{K}(\cdot|\mathcal{F}\times \Omega)$ is a metastate. This $\kappa$ is often called the \emph{Aizenman-Wehr metastate.} We also introduce $\kappa_\rho$ to serve as an {\em a priori} measure on the metastate space, 
\begin{equation}\label{kappa-rho}
\kappa_\rho= \lim_n \mathbb{P} \otimes \delta_{\rho_{\Lambda_n}}.
\end{equation}

 We will denote the $\bbP$-average by $K = \bbP \kappa$, and will by abuse of terminology call both $K$ and $\kappa$ the metastate. As in our example the $\kappa$ is constant and independent of the disorder parameter $\eta$, so we have a product state, ($K = \bbP \otimes \kappa$), this should not lead to confusion.  

Another way to construct metastates (in quite  broad generality) is due to Newman and Stein \cite{NS96c, NS1, NS4}, using what they called the \emph{empirical metastate}. Define the empirical measures $\kappa^{{\rm emp}}_N[\eta]:=\kappa_N[\eta]$ given by
\be \label{empirical}
\kappa_N[\eta](B) = \frac{1}{N} \sum_{n=1}^N \delta_{\mu_{\Lambda_n}^{\eta}\in B}.
\ee
Newman and Stein \cite{NS4} proved that, for sufficiently sparse sequences $(\Lambda_{N_k})_{k\ge 1}$, the limit point
\be \label{empMet}
\lim_{k\uparrow \infty}\kappa_{N_k}[\eta] = \kappa
\ee
exists for $\bbP$-a.e. $\eta \in \Omega$. Newman and Stein conjectured that it is not necessary to use sparse subsequences to have convergence. However, K\"ulske \cite{Kul} disproved this conjecture when he exhibited an example where the convergence almost-surely {\em only}  holds for sparse subsequences.  Moreover, under this construction, one also recovers the {\em a priori} metastate as

\begin{equation}\label{kappa-rho2}
\kappa_\rho (\cdot):= \lim_N \frac{1}{N} \sum_{n=1}^N \delta_{\rho_{\Lambda_n} \in \cdot}.
\end{equation}

 For spin models, including  our  long-range Ising models, for which ${\rm ex} \ \mathcal{G}(\gamma)=\{\mu^+,\mu^-\}$, the extremal elements of the set of Gibbs measures always have the property that they are selected  by  ``typical'' boundary conditions:
$$
\lim_{\Lambda\uparrow \Z} \mu^\eta_\Lambda = \mu, \quad {\rm for} \; \mu \text{-a.e. } \eta.
$$
However, this is far from being the case when  $\mu$ is a non-trivial mixture: the boundary condition will typically select one of the (here two) extremal phases, with a probability corresponding to their relative weights in the mixture.

Moreover, the behaviour of these limits can be more complex  when the boundary condition is not typical for either of the extreme phases, for example in the case of random  {\em incoherent} boundary conditions drawn from i.i.d. sampling, which is the case we will consider here.

At a few occasions, when needed, we denote by $\kappa^{\emp}$ this empirical metastate (\ref{empMet}) of Newman and Stein. In \cite{ENS1}, concerning the standard $2d$ n.n. ferromagnetic Ising model, these empirical measures have been used to describe the almost-sure structure of the set of limiting points of the finite-volume Gibbs measures, possibly along sparse enough sequences of squares $(\Lambda_{N_k})_{N_k\in \mathbb{N}}$. 
{There it was proved that under 
the sparseness condition $N_k \geq k^{2+\varepsilon}$, at low enough temperatures the metastates concentrate on $\{ \mu^-,\mu^+\}$, and also the null-recurrent character of other Gibbs measures, different from the plus and minus phases, was shown without such a sparseness condition.}

A quick way to describe this asymptotic non-triviality (which implies   {\em Chaotic Size Dependence}) is to consider that  convergence to the infinite-volume mixed metastate, which is  concentrated on two pure states, holds:
\be\label{metaconcentrated}
\lim_{k\to \infty}\kappa_{N_k}[\eta] =\frac{1}{2} \delta_{\mu^{+}} + \frac{1}{2} \delta_{\mu^{-}}.
\ee
As described  in Remark 3.2 in \cite{ENS1}, (\ref{metaconcentrated}) does not exclude other measures as almost-sure limit points provided that other (non-sparse) sequences of squares are taken. Their conjecture is in fact that at low enough temperature, the set of all weak limit points coincides $\mathbb{P}$-a.s. (thus for $\mathbb {P}$-almost all $\eta$) with the set $\mathcal{G}(\gamma[\eta])$. It is also added there that in dimension 3, for sequences of (hyper-)cubes, the set of limit points  is expected to coincide with the set of all translation-invariant Gibbs measures,  while in dimension higher than 3, with the set $\{\mu^-,\mu^+\}$ only.

Our main estimates  concern  random (free) energy differences between systems with free and with  random boundary conditions. These  differences will enter in the expression for the weights of the convex mixture and will behave differently depending on the speed of decay of the long-range interaction, which will create the different metastate dispersions (full dispersion on the full set of Gibbs measures, or minimal dispersion on the set of the two extremal and ``maximal'' Gibbs measures $\{\mu^-,\mu^+\}$. Indeed, depending on the decay parameter $\alpha$, random boundary energies will be almost surely finite  or infinite (c.q. diverging), giving rise either to two equal weights in the metastate (and thus to (\ref{metaconcentrated})), or to  weights leading to a dispersion over all the mixtures, for sufficiently sparse  volume subsequences (see Theorem \ref{thm1}).


We remark that recently some general properties of metastates have been studied  by Cotar {\em et al.} \cite{CJK}. In that work it was shown how to obtain a metastate supported by pure states, starting from a metastate supported by mixed states.

 In our case we have two situations, one where the metastates on the pure states are  the relevant ones, and another one, where the metastates live on mixtures of pure Gibbs states. 
Our example, where the phenomenon of a  dispersion over the set of all Gibbs measures occurs,  appears to be new for lattice systems. A similar dispersion has been described in the case of bulk disorder for  Hopfield models with a finite number of patterns by K\"ulske in \cite{Kul}. That model being of Mean-Field type, it lacks  boundary terms and the disorder is introduced in the bulk. It therefore seems rather different from our random-boundary-condition long-range model.

Note also another recent paper \cite{Read}, in which
 a class of one-dimensional long-range (in that case spin-glass) examples was considered with the help of the metastate formalism.
 
 {Very recently Chatterjee \cite{Chat23} has studied the low-temperature $d$-dimensional ($d \geq 2$) nearest-neighbour Ising model in a random field which is scaled in such a way that the total energy of the external-field term remains bounded (has a bounded variance) in the infinite-volume limit. In this model he obtains a very similar behaviour as occurs in the Dyson models with 
 $ 0 <  \alpha < \frac{1}{2} $:  The weights of the plus and minus states in the mixture are random variables, which will have a non-trivial limit distribution.  His statement that "the quenched distribution of the overlaps is not self-averaging" in our terms corresponds to the statement that there is a dispersed metastate, living on mixed states. 
 His result that these weights are given in terms of a Gaussian distribution however  has no analogue in our Dyson models.  }

\section{Non-trivial metastates and new dispersed behaviour}

Our main observation is then that if we take a sequence of increasing intervals 
$[-N_k, N_k]$ with $k\ge 1$, and consider a sufficiently sparse sequence, the metastate in the case $0 \leq \alpha < \frac{1}{2}$ has its support on (infinitely many) mixed states, while in case $\frac{1}{2}<\alpha < 1$, the metastate is supported by  the two extremal states and equals $\frac{1}{2} (\delta_{\mu^+} + \delta_{\mu^{-}})$, similarly to what happens in the higher-dimensional short-range cases treated in  \cite{EMN,ENS1,ENS2}.

The reason is that for intermediate decays  $0 \leq \alpha < \frac{1}{2}$ the boundary-energy  terms, even between two infinite half-lines, thus ``for infinite length of the interval", are almost surely finite;  in fact they approach a random variable $W^\eta$ which still is almost surely finite (but  drawn from a distribution which has some support outside any finite interval), whereas in the second situation $\frac{1}{2}< \alpha < 1$ the 
boundary-(free-)energy terms
 diverge and the (free) energy difference between plus and minus becomes infinitely large, almost surely and will generate weights equally concentrated on the extremal states $\mu^\pm$. 
 
 In contrast to it,  finite-energy boundary terms will generate mixed measures, which are absolutely continuous with respect to the symmetric mixture $\frac{1}{2}(\mu^+ + \mu^{-}) $ obtainable with free boundary conditions, compare \cite{BLP}. 
The limit distribution of the free boundary energy $F(W^{\eta})$ then translates into a distribution on the weights on the mixtures, formally proportional to the constrained partition functions $Z^{\pm,\eta} = e^{-\beta F(W^{\pm, \eta})}$.
 Therefore we obtain a metastate supported by (infinitely many) mixtures of $\mu^+$ and $\mu^{-}$, while in the very slow decay case $\frac{1}{2}< \alpha < 1$, infinite (diverging) boundary conditions give rise to a metastate supported on the two pure states $\mu^+$ and $\mu^{-}$.

For $\alpha = \frac{1}{2}$ the expected boundary energy also diverges, but slowly (logarithmically).

\begin{remark}  In the  higher-dimensional nearest-neighbour case, the choice of sparse sequences was sometimes needed to exclude null-recurrence of the set of mixed states.
In our long-range Ising model situation it is the case that  without imposing a sparsity condition we will always have such null-recurrence. Nevertheless, we need the sparsity in the sequence of intervals for a number of different reasons, namely to be able to treat the boundary-condition terms  for different intervals  as approximately independent, as well as for avoiding the null-recurrence which is due to the step-by-step adding of small energy amounts by adding sites to the interval one by one. Moreover,  sparsity will also guarantee that summability conditions hold which we can use for Borel-Cantelli arguments. 
\end{remark}

The main result of our our paper is the following:

\begin{theorem}\label{thm1}
For a fixed $0\le \alpha<1$,  different from $\alpha = \frac{1}{2}$, at sufficiently low temperatures, the metastate $\kappa$ is given by the following expression:
\begin{enumerate}
\item[(a)] For $\frac{1}{2}<\alpha <1$,
$$
\kappa
 = \frac{1}{2} \delta_{\mu^{-}} + \frac{1}{2} \delta_{\mu^{+}}.
$$

\item[(b)] For  $0 \le \alpha < \frac{1}{2}$,
let $\mu_\lambda = \lambda \mu^- + (1-\lambda) \mu^+$ be a generic mixed Gibbs measure,  where $\lambda \in [0,1]$. Then there is a non-trivial distribution $\d\lambda$ with support on the whole  interval $[0,1]$ (inducing a non-trivial distribution on the mixed Gibbs measures) such that 
$$
\kappa
 = \int_0^1 \delta_{\mu_{\lambda}} \kappa_\rho(\d\lambda)
$$
where $\kappa_\rho$ is the trivial, a priori metastate provided by (\ref{kappa-rho}) and (\ref{kappa-rho2}).
\end{enumerate}
\end{theorem}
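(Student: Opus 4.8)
The plan is to reduce, at low enough temperature, each finite-volume measure $\mu^\eta_{\Lambda_N}$ (with $\Lambda_N=[-N,N]$) to an explicit convex combination of the two extremal states, and then to analyse the random weight along the sparse sequence $(\Lambda_{N_k})$. Using the contour representation and cluster-expansion estimates of Sections~5--6 (whose zero-temperature skeleton is the toy model of Section~\ref{SectionToyModel}), I would first show that, up to errors exponentially small in $\beta$,
\be
\mu^\eta_{\Lambda_N}\;\approx\;\lambda^\eta_N\,\mu^- + (1-\lambda^\eta_N)\,\mu^+,\qquad \lambda^\eta_N=\frac{Z^{-,\eta}_{\Lambda_N}}{Z^{+,\eta}_{\Lambda_N}+Z^{-,\eta}_{\Lambda_N}},
\ee
where $Z^{\pm,\eta}_{\Lambda_N}$ are the partition functions restricted to the plus/minus ``phase''. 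The ratio $Z^{-,\eta}_{\Lambda_N}/Z^{+,\eta}_{\Lambda_N}$ factorizes into a bulk part, which cancels by the spin-flip symmetry, and a boundary part governed by the random boundary free-energy difference
\be
W^\eta_N\;=\;\sum_{x\in\Lambda_N}\sum_{y\in\Lambda_N^c}J_{xy}\,\eta_y,
\ee
so that $\lambda^\eta_N=\bigl(1+e^{2\beta W^\eta_N+R^\eta_N}\bigr)^{-1}$ with a fluctuation correction $R^\eta_N$ delivered by the expansion. The dichotomy of the theorem is then produced by the second-moment computation: since the $\eta_y$ are centered and i.i.d., $\Var(W^\eta_N)=\sum_{y\in\Lambda_N^c}\bigl(\sum_{x\in\Lambda_N}J_{xy}\bigr)^2$, and writing $y=\pm(N+d)$ with $\sum_{x\in\Lambda_N}J_{xy}\asymp d^{-(1-\alpha)}$ for $1\le d\lesssim N$, the summand is of order $d^{-(2-2\alpha)}$. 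Hence $\Var(W^\eta_N)$ stays bounded precisely when $\alpha<\tfrac12$, diverges like $N^{2\alpha-1}$ when $\alpha>\tfrac12$, and diverges logarithmically at $\alpha=\tfrac12$.

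For part~(a), $\tfrac12<\alpha<1$, the boundary energy has typical size $N^{\alpha-1/2}\to\infty$. A local limit theorem for the lattice variable $W^\eta_N$ (the ``weak local limit theorem'' of the slow-decay regime) gives $\bbP(|W^\eta_N|\le C)=O(N^{-(\alpha-1/2)})$, which is summable along a sufficiently sparse sequence $N_k$; by Borel--Cantelli, $|W^\eta_{N_k}|\to\infty$ $\bbP$-a.s., so $\lambda^\eta_{N_k}$ tends to $0$ or $1$ according to $\sign(W^\eta_{N_k})$ and no mixed state survives as a limit point. Sparsity also makes the signs $\sign(W^\eta_{N_k})$ asymptotically independent, and the $\eta\mapsto-\eta$ symmetry forces each sign with probability $\tfrac12$; the empirical frequencies of $\mu^+$ and $\mu^-$ therefore converge to $\tfrac12$ each, giving $\kappa=\tfrac12\delta_{\mu^-}+\tfrac12\delta_{\mu^+}$, exactly as in the scheme of \cite{ENS1}.

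For part~(b), $0\le\alpha<\tfrac12$, the coefficients $c^{(N)}(d)=\sum_{x\in\Lambda_N}J_{x,\pm(N+d)}$ converge as $N\to\infty$ to $c_\infty(d)\asymp d^{-(1-\alpha)}$ with $\sum_d c_\infty(d)^2<\infty$, so $W^\eta_N$ converges in distribution to the finite random variable $W=\sum_{d\ge1}c_\infty(d)(\xi_d+\xi'_d)$ (with i.i.d.\ signs $\xi,\xi'$), whose law has support reaching outside every bounded interval. Thus $\lambda^\eta_N\to g(W):=(1+e^{2\beta W})^{-1}$ in distribution, with $g(W)$ supported on all of $[0,1]$. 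The decisive structural point is that each $W^\eta_{N_k}$ is dominated by the $\eta_y$ in a thin shell just outside $\Lambda_{N_k}$ (the coefficients being square-summable and concentrated near the boundary), so along a sufficiently sparse sequence these shells are essentially disjoint and the $W^\eta_{N_k}$ are asymptotically independent copies of $W$. A Glivenko--Cantelli / ergodic argument then shows that the empirical distribution of $\{\lambda^\eta_{N_k}\}_{k\le K}$ converges $\bbP$-a.s.\ to the law $\d\lambda$ of $g(W)$, which identifies the empirical metastate with $\int_0^1\delta_{\mu_\lambda}\,\kappa_\rho(\d\lambda)$.

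The step I expect to be the main obstacle is the control of the correction $R^\eta_N$ in the reduction to a mixture, \emph{uniformly in the random boundary condition}. One must show that with $\bbP$-probability tending to one the boundary free energy is small enough for the contour/cluster expansion to converge at all, while simultaneously, in case~(a), that the diverging term $2\beta W^\eta_N$ dominates $R^\eta_N$ so that the selected phase is genuinely pure, and, in case~(b), that $R^\eta_N$ neither destroys the finite limiting law of $W^\eta_N$ nor spoils the asymptotic independence across the sparse scales. This is precisely where the long-range (non-summable-at-the-threshold) tails, the low-temperature contour structure, and the lack of independence of boundary conditions between nested volumes interact, and it is the part for which the probabilistic estimates of the later sections and the Appendix are designed.
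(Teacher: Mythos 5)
Your overall scheme for part (a) --- decomposition of $\mu^\eta_{\Lambda_N}$ into the two constrained measures with weights driven by a random boundary free energy, anti-concentration along sparse sequences, Borel--Cantelli, and the $\pm$ symmetry --- is exactly the paper's route (Theorem \ref{thma} via Proposition \ref{mainprop}). However, one step in your plan, as stated, would fail: you propose to apply the weak local limit theorem to $W^\eta_N$ \emph{alone} and then to show that ``the diverging term $2\beta W^\eta_N$ dominates $R^\eta_N$''. That domination is not available. For good boundary conditions the only bound the expansion yields on the correction is (\ref{good-field}), i.e. $|R^\eta_N|=\frac{1}{2\beta}\bigl|\log\Xi^{\eta}_{N}-\log\Xi^{-\eta}_{N}\bigr|=O\bigl(N^{\alpha-\frac12+\varepsilon}\bigr)$, whereas the typical size (standard deviation) of $W^\eta_N$ is of order $N^{\alpha-\frac12}$; so the a priori bound on $R^\eta_N$ is \emph{larger} than the typical value of $W^\eta_N$, and the event that $W^\eta_N$ dominates has vanishing, not full, probability. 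Moreover $R^\eta_N$ is a function of the same disorder $\eta$ (to leading order it is again a boundary-field sum of the same order as $W^\eta_N$), so anti-concentration of $W^\eta_N$ does not transfer to the sum: a dependent perturbation of comparable size can in principle concentrate $W^\eta_N+R^\eta_N$ near zero. The paper's resolution is to prove the weak local limit theorem directly for the full free-energy difference $F^\eta_N=W^\eta_N+\frac{1}{2\beta}\bigl(\log\Xi^{\eta}_{N}-\log\Xi^{-\eta}_{N}\bigr)$, controlling its characteristic function by a low-temperature Mayer/cluster expansion conditioned on the good-$\eta$ event, with windows of growing size $N^{\varepsilon}$ (Section 7 and Appendix 4); sign symmetry and the decoupling along sparse volumes (the annulus-truncated measures $\tilde\mu^\eta_{N_k,N_{k+1}}$ of Appendix 1) then finish the argument as you outline. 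The missing idea, concretely, is that anti-concentration must be established for the sum $F^\eta_N$ itself, not inferred from its leading term.

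For part (b) you take a genuinely different route from the paper. You track the weights explicitly: $W^\eta_N$ converges in law to an a.s. finite $W$ with full support, the dominant shells decouple along sparse sequences, and a Glivenko--Cantelli argument identifies the empirical metastate with the law of $(1+e^{2\beta W})^{-1}$. The paper instead avoids the contour machinery entirely in this regime: it invokes the equivalence-of-boundary-conditions theorem of \cite{BLP}, by which boundary conditions with finite energy difference produce mutually absolutely continuous limits; comparing with the free boundary condition (whose limit is the symmetric mixture) shows every limit point is a non-trivial mixture, and the dispersion of the metastate is inherited from the limiting weight distribution (Appendix 3). Your route is more constructive and would identify the limiting weight law, but it leans on convergence of the cluster expansion for $0\le\alpha<\frac12$, which the paper only develops for $\frac12<\alpha<1$ (with a remark on its extension, and with extra care needed at $\alpha=0$), and it inherits the same entanglement issue as in (a): the weights are functions of $F^\eta_N$, not of $W^\eta_N$, so you must show that the free-energy correction converges jointly and does not alter the limit law. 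The paper's softer absolute-continuity argument buys uniformity over the whole intermediate regime, including $\alpha=0$, at the price of less explicit information about the limiting distribution.
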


For $\varepsilon<1-\alpha$, define a sequence $(m_k)_{k\ge 1}$ of positive integers satisfying
$$
\sum_{\ell = m_k}^{\infty} e^{-\ell^\varepsilon}<\frac{1}{k^2}.
$$
The reason we will need  the inequality above is that we can use it  to  restrict the boundary conditions $\eta$ to be ``good'',  with large enough probability,  to bound the boundary energy. We can see the precise statement in Proposition \ref{prop:goodeta}.

Theorem \ref{thm1} item (a) follows directly from the following theorem.

\begin{theorem}\label{thma}
For every $1/2<\alpha<1$ and $a>0$ there exists a $\beta_1=\beta_1(a)$ such that for
every $\beta>\beta_1$ the set of all weak limit points of any sequence $\{\mu^{
\eta}_{\Lambda_{N_k}}\}_{k\ge 1}$, with 
$$
N_k>\max \left\{k^{(\alpha-\frac{1}{2})^{-1}+a}, m_k^{2\left(\alpha-\frac{1}{2}\right)^{-1}} \right\}
$$
is $\{\mu^+,\mu^-\}$, $\mathbb{P}$-a.s.
\end{theorem}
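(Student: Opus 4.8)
The plan is to follow the contour-expansion strategy of \cite{ENS1}, the crucial quantitative input being that in the regime $\frac12<\alpha<1$ the random boundary (free) energy diverges, so that a single pure phase is selected in each sufficiently large volume. Using the contour representation and cluster expansion of Sections~5--6, I would decompose the finite-volume measure as
\[
\mu^{\eta}_{\Lambda_{N_k}} \;=\; p_k^{+}\,\nu^{+,\eta}_{\Lambda_{N_k}} \;+\; p_k^{-}\,\nu^{-,\eta}_{\Lambda_{N_k}},
\]
where $\nu^{\pm,\eta}_{\Lambda_{N_k}}$ are the measures conditioned on the dominant phase being $\pm$, and the weights satisfy $p_k^{+}/p_k^{-}=e^{-\beta\Delta^{\eta}_{N_k}}$ with $\Delta^{\eta}_{N_k}=F^{+,\eta}_{\Lambda_{N_k}}-F^{-,\eta}_{\Lambda_{N_k}}$ the constrained free-energy difference. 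At low enough temperature the cluster expansion shows that the conditioned measures converge to the extremal states, $\nu^{\pm,\eta}_{\Lambda_{N_k}}\to\mu^{\pm}$, uniformly in the (good) boundary data, so the only way to produce a limit point other than $\mu^{+}$ or $\mu^{-}$ is to have $\Delta^{\eta}_{N_k}$ stay bounded along a subsequence. The whole proof therefore reduces to showing that $|\Delta^{\eta}_{N_k}|\to\infty$ with a well-defined sign, $\bbP$-a.s.

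Next I would isolate the leading term of $\Delta^{\eta}_{N_k}$. The bulk contour free energies of the two phases coincide by spin-flip symmetry, so the difference is carried entirely by the coupling to the boundary, whose zero-temperature part is $-2W^{\eta}_{N_k}$, where
\[
W^{\eta}_{N_k} \;=\; \sum_{\substack{x\in\Lambda_{N_k}\\ y\in\Lambda_{N_k}^{c}}} J_{xy}\,\eta_y \;=\; \sum_{y\in\Lambda_{N_k}^{c}} g_{N_k}(y)\,\eta_y,\qquad g_N(y)=\sum_{x\in\Lambda_N}J_{xy},
\]
so that $\Delta^{\eta}_{N_k}=-2W^{\eta}_{N_k}+R^{\eta}_{N_k}$ with $R^{\eta}_{N_k}$ a temperature-dependent boundary-contour correction controlled by the expansion. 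A direct computation gives $g_N(y)\asymp(|y|-N)^{-(1-\alpha)}$ near the boundary, whence
\[
\Var\!\big(W^{\eta}_{N_k}\big)=\sum_{y\in\Lambda_{N_k}^{c}} g_{N_k}(y)^{2}\;\asymp\; N_k^{\,2\alpha-1},
\]
which diverges precisely because $\alpha>\frac12$; the standard deviation grows like $\sigma_{N_k}\asymp N_k^{\alpha-\frac12}$. This is the structural reason the slow-decay case differs from $\alpha<\frac12$.

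The two remaining ingredients are an upper and a lower control, both made summable by the sparseness of $(N_k)$. For the upper control I would restrict to \emph{good} boundary conditions via Proposition~\ref{prop:goodeta}: the choice of $m_k$ through $\sum_{\ell\ge m_k}e^{-\ell^{\varepsilon}}<k^{-2}$ guarantees that, with probability at least $1-k^{-2}$, the local boundary field never creates an excitation too strong for the cluster expansion to converge, so that $R^{\eta}_{N_k}$ is well defined and stays bounded; the constraint $N_k>m_k^{2(\alpha-\frac12)^{-1}}$ keeps this truncation scale compatible with the energy scale $\sigma_{N_k}\asymp N_k^{\alpha-\frac12}$. Summability of $k^{-2}$ and Borel--Cantelli then make $\eta$ good for all large $k$. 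For the lower control I would use our weak local limit theorem: since $W^{\eta}_{N_k}$ is a sum of independent, bounded, symmetric variables with variance $\asymp N_k^{\,2\alpha-1}$, its law has a bounded density on the scale $\sigma_{N_k}$, giving $\bbP(|W^{\eta}_{N_k}|\le C)\lesssim C/N_k^{\alpha-\frac12}$ for fixed $C$. Under the hypothesis $N_k> k^{(\alpha-\frac12)^{-1}+a}$ this is $\lesssim C\,k^{-(1+a(\alpha-\frac12))}$, and the margin $a>0$ makes the exponent strictly larger than $1$, so the probabilities are summable; choosing $C$ larger than the bounded correction $R^{\eta}_{N_k}$ and applying Borel--Cantelli yields $|\Delta^{\eta}_{N_k}|\to\infty$ with $\sign\Delta^{\eta}_{N_k}=-\sign W^{\eta}_{N_k}$ eventually, $\bbP$-a.s.

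Combining the two controls shows that $\bbP$-a.s.\ every weak limit point of $\{\mu^{\eta}_{\Lambda_{N_k}}\}$ is $\mu^{+}$ or $\mu^{-}$; and the symmetry $\eta\mapsto-\eta$ together with a second Borel--Cantelli argument on the independent boundary blocks isolated by the sparseness forces both signs of $W^{\eta}_{N_k}$ to occur infinitely often, so the limit set is exactly $\{\mu^{+},\mu^{-}\}$. I expect the main obstacle to be making the lower control rigorous for the \emph{long-range} contours: one must show that the boundary-contour correction $R^{\eta}_{N_k}$, itself a random, $\eta$-dependent output of the cluster expansion, is with summable failure probability smaller than the diverging but only slowly growing ($\asymp N_k^{\alpha-\frac12}$) fluctuation $W^{\eta}_{N_k}$. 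Reconciling the heavy tails of the Dyson interaction with the requirement that this correction be genuinely subleading, uniformly enough to survive Borel--Cantelli, is the delicate point, and is exactly where the modifications to the \cite{ENS1} scheme and the weak local limit theorem are needed.
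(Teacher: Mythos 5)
Your overall architecture matches the paper's: a probability estimate of the form $\mathbb{P}\left( \lVert \mu^{\eta}_{\Lambda_N}-\mu^+ \rVert_X \land \lVert \mu^{\eta}_{\Lambda_N}-\mu^- \rVert_X \ge \tau\right)\lesssim N^{\frac{1}{2}-\alpha+\varepsilon}$ (this is Proposition \ref{mainprop}), made summable by the sparseness condition and fed into Borel--Cantelli to exclude all other limit points, then symmetry plus an independence argument to force both $\mu^+$ and $\mu^-$ to recur. The genuine gap is in how you obtain that key estimate. You write the free-energy difference as $\Delta^{\eta}_{N_k}=-2W^{\eta}_{N_k}+R^{\eta}_{N_k}$ and claim that on the good event of Proposition \ref{prop:goodeta} the cluster-expansion correction $R^{\eta}_{N_k}$ ``stays bounded'', so that anti-concentration of $W^{\eta}_{N_k}$ at a fixed scale $C$ suffices. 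But the bound the expansion actually yields for good $\eta$ --- equation (\ref{good-field}) in the paper --- is $|R^{\eta}_{N_k}|\le C\,(N_k-n_k)^{\alpha-\frac{1}{2}+\varepsilon}$, which diverges and indeed exceeds the fluctuation scale $\sigma_{N_k}\asymp N_k^{\alpha-\frac{1}{2}}$ of $W^{\eta}_{N_k}$; deterministically $R$ is a sum over the bulk of terms of size $|\tilde{h}^{\eta}(x)|\asymp (N_k-|x|)^{\alpha-\frac{3}{2}+\varepsilon}$, and only cancellation of random signs could make it $O(1)$, which is a distributional statement, not a consequence of the good-$\eta$ event. Consequently your Borel--Cantelli step fails: you cannot choose $C$ larger than $R^{\eta}_{N_k}$, and if you instead widen the window to size $N_k^{\alpha-\frac{1}{2}+\varepsilon}$ to swallow $R$, the anti-concentration bound degrades to $\mathbb{P}(|W^{\eta}_{N_k}|\le N_k^{\alpha-\frac{1}{2}+\varepsilon})\lesssim N_k^{\varepsilon}$, which is vacuous. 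This is exactly why the paper does not treat $R$ as a perturbation of $W$: it proves the weak local limit theorem (Proposition \ref{LLT}) for the full free energy $F^{\eta}_N$, controlling its characteristic function by a Mayer expansion conditioned on good boundary conditions (Section 7 and Appendix 4), with a growing window $\delta_N=N^{\epsilon}$, and this is the origin of the $\varepsilon$-loss in the exponent of Proposition \ref{mainprop}. You correctly flag this as ``the delicate point'', but the resolution you propose (a bounded correction) is not available, and without it the lower control collapses.

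A secondary, fixable omission concerns the attainment step. The events $\{\mu^{\eta}_{\Lambda_{N_k}}\in B^{\tau}_X(\mu^{\pm})\}$ are \emph{not} independent over $k$ (each involves the entire tail of $\eta$), so ``a second Borel--Cantelli argument on the independent boundary blocks'' requires an actual construction. The paper replaces $\mu^{\eta}_{\Lambda_{N_k}}$ by measures $\tilde{\mu}^{\eta}_{N_k,N_{k+1}}$ whose boundary term involves only the spins $\eta_y$ with $N_k<|y|\le N_{k+1}$, shows the replacement error is at most $\exp\left(\frac{12\beta}{1-\alpha}\,N_k(N_{k+1}-N_k)^{\alpha-1}\right)-1<\frac{\tau}{2}$ along a further subsequence (which is harmless, since limit points along a subsequence are limit points of the original sequence), and only then invokes independence, symmetry, and the second Borel--Cantelli lemma. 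Your sketch has the right idea but would need this truncation-and-comparison step spelled out to count as a proof.
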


The next theorem shows the null-recurrent character of all measures different from both $\mu^+$ and $\mu^-$. We use the notation $\bar{B}$ and $B^0$ for the weak closure and the weak interior of $B$, respectively.

\begin{theorem}\label{thmnull}
For every $1/2<\alpha<1$,  there exists $\beta_2$ such that for any $\beta>\beta_2$ and any set $
B\subset \mathcal{M}_1(\Omega)$, one has
$$
\lim_{N\to \infty} \kappa_N[\eta](B)=
\begin{cases}
0 &\text{ if } \mu^+,\mu^- \notin \bar{B}\\
\frac{1}{2} &\text{ if }\mu{\pm}\in B^0 \text{ and }\mu^{\mp}\in \bar{B}\\
1 &\text{ if }\mu^+,\mu^-\in B^0
\end{cases}
$$
$\mathbb{P}$-almost surely.
\end{theorem}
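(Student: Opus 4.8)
The plan is to reduce the three-case statement to a strong law of large numbers for the sequence of indicator variables recording which pure phase is selected by the random boundary condition in each volume $\Lambda_n$. Throughout I take $(\Lambda_n)$ to be a sufficiently sparse sequence, as in Theorem~\ref{thma}: as the preceding Remark indicates, sparseness is exactly what lets us treat the boundary terms for different volumes as approximately independent and what provides the summability needed for the Borel--Cantelli arguments. The randomness enters only through these boundary terms, since the candidate limits $\mu^{+},\mu^{-}$ themselves do not depend on $\eta$.

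First I would make precise the phase-selection mechanism. Let $W_n=W_n(\eta)$ denote the boundary(-free-)energy difference between the all-plus and all-minus ordered configurations in $\Lambda_n$; it is an odd function of $\eta$, so $\mathbb{P}(W_n>0)=\mathbb{P}(W_n<0)=\tfrac12$ by the $\pm$-symmetry of $\mathbb{P}$. Using the contour representation and cluster-expansion estimates of Sections~5--6 together with Proposition~\ref{prop:goodeta} (restriction to ``good'' $\eta$), and using that for $\tfrac12<\alpha<1$ the boundary energy diverges, I would produce a fixed threshold $T=T(\beta)$ such that, for each $\varepsilon>0$, on an event $G_n$ the measure $\mu^{\eta}_{\Lambda_n}$ is $\varepsilon$-close in the seminorms $\lVert\cdot\rVert_X$ to $\mu^{+}$ when $W_n>T$ and to $\mu^{-}$ when $W_n<-T$. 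The complementary probability is bounded by $\mathbb{P}(|W_n|\le T)$ plus a contour cost that is exponentially small in $\beta$; here the weak local limit theorem for the slow-decay case gives $\mathbb{P}(|W_n|\le T)\to0$, so $\mathbb{P}(G_n^c)\to0$ and, along the sparse sequence, $\sum_n\mathbb{P}(G_n^c)<\infty$. By Borel--Cantelli, $\mathbb{P}$-a.s.\ the event $G_n$ holds for all large $n$, and eventually $\mu^\eta_{\Lambda_n}$ is $\varepsilon$-close to $\mu^{+}$ or $\mu^{-}$ according to $\sign(W_n)$. I would also record the crude bound $\mathrm{Cov}(\mathbf{1}_{\{\mu_n\in B\}},\mathbf{1}_{\{\mu_m\in B\}})\le\min(p_n,p_m)$ with $p_n=\mathbb{P}(\mu^\eta_{\Lambda_n}\in B)$, which already handles any $B$ whose closure avoids both pure states purely from $p_n\to0$, i.e.\ the robust, sparseness-free null-recurrence of mixed states mentioned in the Remark.

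Next I would establish the law of large numbers for the signs, which is the heart of the $\tfrac12$-values. Along the sparse sequence the $W_n$ are jointly nearly Gaussian (Lindeberg: the largest coefficient is $O(1)$ while $\Var(W_n)\to\infty$), and their normalized correlations $\rho_{nm}$ decay like a positive power of $N_n/N_m$, so sparseness makes $\sum_{m}\big|\mathrm{Cov}(\mathbf{1}_{\{W_n>0\}},\mathbf{1}_{\{W_m>0\}})\big|$ summable uniformly in $n$. A second-moment estimate then gives $\Var\!\big(\tfrac1N\sum_{n=1}^N\mathbf{1}_{\{W_n>0\}}\big)=O(1/N)$, and Chebyshev along $N=k^2$ combined with the monotonicity of the partial counts (to interpolate between consecutive squares, the gaps being $O(\sqrt N)$) yields, $\mathbb{P}$-a.s.,
\[
\frac1N\sum_{n=1}^N\mathbf{1}_{\{W_n>0\}}\longrightarrow\tfrac12,\qquad \frac1N\sum_{n=1}^N\mathbf{1}_{\{W_n<0\}}\longrightarrow\tfrac12 .
\]

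Finally I would assemble the three cases on the full-measure event from the previous two steps, taking $\varepsilon$ smaller than the $\lVert\cdot\rVert_X$-distances separating $\mu^{+}$, $\mu^{-}$ and $\overline{B}$. If $\mu^{+},\mu^{-}\notin\overline{B}$, then for large $n$ every $\mu^\eta_{\Lambda_n}\notin B$, so $\kappa_N[\eta](B)\to0$; if $\mu^{+},\mu^{-}\in B^0$, apply this to $B^c$ (noting $\mu\in B^0\iff\mu\notin\overline{B^c}$) to get $\kappa_N[\eta](B)\to1$; and if $\mu^{\pm}\in B^0$ while $\mu^{\mp}\in\overline{B}\setminus B^0$, then for large $n$ the volumes with $\sign(W_n)=\pm$ give measures in $B^0\subset B$ while those with the opposite sign give measures approaching the non-interior point $\mu^{\mp}$ and hence eventually outside $B$, so $\kappa_N[\eta](B)$ agrees up to $o(1)$ with the sign-frequency of the third step and converges to $\tfrac12$. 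The main obstacle is precisely that sign law of large numbers: without sparseness the variables $W_n$ are strongly correlated at each dyadic scale (an arcsine-law phenomenon coming from adding sites one at a time), so the Cesàro average of the signs would fail to converge to $\tfrac12$; the whole argument therefore rests on quantifying the decorrelation of $W_n$ across the sparse scales sharply enough to obtain summable covariances.
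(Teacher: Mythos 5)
Your proposal does not prove the statement the paper proves, and its key step would fail as written. First, there is an object mismatch: the empirical metastate $\kappa_N[\eta]$ of (\ref{empirical}) averages the indicators $\mathbbm{1}_{\mu^{\eta}_{\Lambda_n}\in B}$ over \emph{every} volume $\Lambda_n=[-n,n]$, $n\le N$; in the paper's proof, sparseness enters only through the outer sequence $N_k$ along which Borel--Cantelli is applied, never through the volumes being averaged. By re-declaring $(\Lambda_n)$ itself to be a sparse sequence you change the random measure whose convergence is asserted. Moreover, the paper's proof needs none of the machinery you build: for $\mu^+,\mu^-\notin\bar{B}$ it bounds $\mathbb{E}\left[\kappa_{N_k}[\eta](B)\right]\le\frac{1}{N_k}\sum_{n\le N_k}\mathbb{P}(\mu^{\eta}_{\Lambda_n}\in B)\le C N_k^{-(\alpha-\frac{1}{2}-\varepsilon)}$ using only the marginal estimate of Proposition \ref{mainprop}, then applies Markov and Borel--Cantelli along the sparse $N_k$; the same first-moment argument applied to the complement of $B^{\tau}_X(\mu^+)\cup B^{\tau}_X(\mu^-)$ gives that the frequency of this union tends to $1$ when both pure states are interior; and the splitting of this $1$ into $\frac{1}{2}+\frac{1}{2}$ comes not from a law of large numbers for signs but from the pathwise spin-flip identity $\kappa_{N}[-\eta](B^{\tau}_X(\mu^+))=\kappa_{N}[\eta](B^{\tau}_X(\mu^-))$ together with the disjointness of the two balls. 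No phase-selection threshold, no joint near-Gaussianity of the boundary terms, and no covariance estimate between different volumes appears anywhere in the paper's argument; your worry about correlations across scales is exactly what the symmetry step is designed to bypass.

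Second, even granting your reinterpretation, the heart of your argument breaks for the sparseness you invoke. Theorem \ref{thma}'s condition is polynomial, $N_k\ge k^{(\alpha-\frac{1}{2})^{-1}+a}$, and the correlation of $W_{N_n}$ and $W_{N_m}$ is a fixed positive power of $N_n/N_m$ (indeed of order $(N_n/N_m)^{\frac{3}{2}-\alpha}$). For polynomially growing $N_n$ the ratio of consecutive terms tends to $1$, so consecutive correlations tend to $1$; all indices $n\le m\le(1+\epsilon)n$ have correlations bounded below, and since a constant fraction of the indices $\{1,\dots,N\}$ lies in the last such block, $\sum_{m}\left|\mathrm{Cov}\left(\mathbbm{1}_{\{W_{N_n}>0\}},\mathbbm{1}_{\{W_{N_m}>0\}}\right)\right|$ grows linearly in $n$ rather than being bounded uniformly, and $\mathrm{Var}\left(\frac{1}{N}\sum_{n\le N}\mathbbm{1}_{\{W_{N_n}>0\}}\right)$ stays bounded away from $0$. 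Your second-moment/Chebyshev step therefore collapses; the sign law of large numbers would require geometrically growing volumes (ratios bounded away from $1$), which is both stronger than what you state and still further from the theorem's object. Finally, in the mixed case your inference that measures approaching the non-interior point $\mu^{\mp}\in\bar{B}\setminus B^0$ are ``eventually outside $B$'' is a non sequitur: approaching a boundary point of $B$ is compatible with lying in $B$ infinitely often. The paper's proof in fact treats $\mu^{\mp}\notin\bar{B}$ (its final line reads ``$\mu^{\pm}\in B$ and $\mu^{\mp}\notin B$''), and only under that reading of the second case can the visits near $\mu^{\mp}$ be controlled at all.
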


Both Theorem \ref{thma} and \ref{thmnull} follow from the following proposition below. 


\begin{proposition}\label{mainprop}
For every $1/2<\alpha<1$ and $\varepsilon>0$, there exists a $\beta_0=\beta_0(\varepsilon)$ such that for every $\beta>\beta_0$, $\tau>0$, and $X\Subset \mathbb{Z}$,
\be\label{eq:mainprop}
\limsup_{N\to \infty}N^{\alpha-\frac{1}{2}-\varepsilon}\mathbb{P}\left( \lVert \mu^{\eta}_{\Lambda_N}-\mu^+ \rVert_X \land \lVert \mu^{\eta}_{\Lambda_N}-\mu^- \rVert_X \ge \tau\right)<\infty.
\ee
\end{proposition}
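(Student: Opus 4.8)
The plan is to show that in the slow-decay regime $\frac{1}{2}<\alpha<1$, a typical random boundary condition $\eta$ creates a boundary (free) energy difference between the plus and minus states that is large with high probability, forcing the finite-volume Gibbs measure $\mu^\eta_{\Lambda_N}$ to be close (in the local seminorm $\lVert\cdot\rVert_X$) to one of the two extremal measures rather than to a genuine mixture. First I would decompose the boundary interaction: writing $\Lambda_N=[-N,N]$, the boundary energy felt from the random half-lines can be split into a contribution from sites near the boundary and a ``bulk'' contribution of the random field acting on the ordered interior spins. Following the strategy of \cite{ENS1} adapted to the long-range setting, I would introduce the random variable measuring the free-energy difference $F^{+,\eta}-F^{-,\eta}$ and show that, because of the spin-flip antisymmetry of the Dobrushin-type splitting, this difference is (to leading order) a sum of independent centered random variables $\sum_{y\in\Lambda_N^c} J_{0y}\,\eta_y\cdot(\text{magnetization response})$, whose typical size is governed by the tail $\sum_{|y|>N}|y|^{-2(2-\alpha)}$-type sums.

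The key analytic step is a lower bound on the fluctuations of this boundary energy. Since each coupling to a boundary spin at distance $r$ from the interval edge scales like $r^{-(2-\alpha)}$, and since the relevant magnetization profile inside the interval decays as one moves away from the boundary, the variance of the relevant random sum behaves like a polynomial power of $N$ — and the crucial point for $\alpha>\frac12$ is that the standard deviation grows, so that the energy difference exceeds any fixed threshold with probability tending to $1$. I would make this quantitative via a weak local limit theorem (the ``weak local limit theorem for the slow-decay case'' flagged in the introduction): conditionally, the probability that the boundary energy difference lands in a window of fixed width near zero — which is the only way a nontrivial mixture can survive — decays like $N^{-(\alpha-\frac12)}$ up to subpolynomial corrections. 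Combined with a zero-temperature contour/ground-state analysis (from the toy model of Section \ref{SectionToyModel}) stabilized at low temperature by cluster-expansion estimates from Sections 5--6, this shows that whenever the energy difference is bounded away from zero, the measure $\mu^\eta_{\Lambda_N}$ is exponentially (in $\beta$) close to a single pure state, so $\lVert\mu^\eta_{\Lambda_N}-\mu^+\rVert_X\wedge\lVert\mu^\eta_{\Lambda_N}-\mu^-\rVert_X<\tau$ on that event for $\beta>\beta_0(\varepsilon)$.

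Putting these together, the probability appearing in \eqref{eq:mainprop} is bounded by the probability that the boundary energy difference is small (so neither pure state dominates) plus an error from the cluster expansion; the former is $O\bigl(N^{-(\alpha-\frac12)+\varepsilon/2}\bigr)$ by the local limit estimate, and multiplying by $N^{\alpha-\frac12-\varepsilon}$ gives a quantity that is $O(N^{-\varepsilon/2})\to 0$, yielding the claimed finite $\limsup$. To handle the large-field tail I would restrict to the ``good'' boundary conditions of Proposition \ref{prop:goodeta}, whose complement has probability summably small by the choice of $(m_k)$, so that the contour expansion is valid with overwhelming probability.

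I expect the main obstacle to be the lower bound on the fluctuations, i.e. proving that the boundary energy difference genuinely avoids a neighborhood of zero at the sharp polynomial rate $N^{\alpha-\frac12}$. Unlike the short-range case of \cite{ENS1}, where boundary spins are essentially independent across volumes and a clean central limit / Berry--Esseen argument applies, here the long-range couplings correlate the boundary contributions across nested volumes, so the increments are neither independent nor identically distributed, and the magnetization response inside the interval must itself be controlled uniformly in $\beta$. Establishing the anticoncentration estimate at the correct rate — essentially a quantitative local central limit theorem for a triangular array of weighted Bernoulli sums with slowly decaying weights — is the technical heart of the argument, and is precisely where the threshold $\alpha=\frac12$ enters, since at $\alpha=\frac12$ the variance grows only logarithmically and the rate degenerates.
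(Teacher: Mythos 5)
Your overall skeleton coincides with the paper's: the decomposition (\ref{decompo}) of $\mu^\eta_{\Lambda_N}$ into constrained plus/minus measures, the convergence $\nu^{\pm\eta}_{\Lambda_N}\to\mu^\pm$ (Proposition \ref{prop_convergence_nu}), the restriction to good boundary conditions from Proposition \ref{prop:goodeta}, and a weak local limit theorem giving anticoncentration at rate $N^{-(\alpha-\frac12)+\varepsilon}$. But there is a genuine gap in how you treat the finite-temperature correction. The event that a mixture survives is $\{|F_N^\eta|\le K\}$ for the \emph{full} free-energy difference $F_N^\eta=W_N^\eta+\frac{1}{2\beta}(\log\Xi^\eta_{\Lambda_N}-\log\Xi^{-\eta}_{\Lambda_N})$, see (\ref{FNRJDecay}). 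For good $\eta$ the second term is bounded only by $C\,N^{\alpha-\frac12+\varepsilon}$ (estimate (\ref{good-field})), i.e.\ it is of the \emph{same} polynomial order as (in fact slightly larger than) the standard deviation $\asymp N^{\alpha-\frac12}$ of $W_N^\eta$, and it is a functional of the same $\eta$, hence correlated with $W_N^\eta$. Consequently your two key reductions fail: $F_N^\eta$ is \emph{not} ``to leading order'' an independent weighted Bernoulli sum, and the additive bound ``$\mathbb{P}(\text{boundary energy small})$ plus a cluster-expansion error'' does not close. The only containment available is $\{|F_N^\eta|\le K\}\subset\{|W_N^\eta|\le K+CN^{\alpha-\frac12+\varepsilon}\}$, and applying the toy-model local limit estimate to that \emph{growing} window gives a probability of order $N^{\alpha-\frac12+\varepsilon}\cdot N^{-(\alpha-\frac12)}=N^{\varepsilon}$, which is useless; conditioning does not rescue this, because the random recentering is not independent of $W_N^\eta$.

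What the paper does instead is to prove the weak LLT (Proposition \ref{LLT}) \emph{directly for the full} $F_N^\eta$: in Section 7 and Appendix 4 the factor $e^{itF_N^\eta}$ is Mayer-expanded, so that the cluster expansion is used not merely to show closeness to a pure state when the energy gap is large, but to control the characteristic function $\Psi^T_N(t)=\mathbb{E}[e^{itF_N^\eta}]$ itself (restricted to good $\eta$'s), on a shrunken $t$-interval $\tau_N^T=\tau_N^0 N^{-\delta'}$ and compensated by a growing window scale $\delta_N^T=N^{\epsilon}$ (cf.\ Remark \ref{Remark2}); this trade-off is precisely the origin of the $\varepsilon$-loss in the exponent $\alpha-\frac12-\varepsilon$ in (\ref{eq:mainprop}). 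Relatedly, your diagnosis of the main obstacle is misplaced: correlations ``across nested volumes'' are irrelevant for Proposition \ref{mainprop}, which is a fixed-$N$ statement (they matter for Theorems \ref{thma} and \ref{thmnull}, where the paper introduces the decoupled measures $\tilde\mu^\eta_{N_k,N_{k+1}}$ and sparse sequences). The true technical heart here is the within-volume dependence between $W_N^\eta$ and the log-partition-function correction, and without the Mayer expansion of the characteristic function your argument cannot produce the claimed $O(N^{-(\alpha-\frac12)+\varepsilon/2})$ bound.
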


To prove Proposition \ref{mainprop}, we apply the following weak Local Limit Theorem.

\begin{proposition}\label{LLT}\cite{ENS1}
Let $(X_n)_{n\in \mathbb{N}}$ be a sequence of random variables and denote by $\psi_n(t)$ the corresponding characteristic functions $\psi_n(t)=\mathbb{E}(e^{itX_n})$. If $(A_n)_{n\in \mathbb{N}}$, $(\delta_n)_{n\in \mathbb{N}}$ and $(\tau_n)_{n\in \mathbb{N}}$ are strictly positive sequences of real numbers satisfying the assumptions
\begin{enumerate}
	\item[(i)] $\limsup_{n\to \infty}A_n\int_{-\tau_n}^{\tau_n}|\psi_n(t)|dt \le 2\pi$
	\item[(ii)] There exists $k>1$ such that $\lim_{n\to \infty}\frac{A_n}{\delta_n^k \tau_n^{k-1}}=0$
\end{enumerate}
then
\be \label{prescribed}
\limsup_{n\to \infty}\frac{A_n}{\delta_n}\mathbb{P}(a\delta_n \le X_n\le b\delta_n)\le b-a
\ee
for any $a<b$.
\end{proposition}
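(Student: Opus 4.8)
\section*{Proof proposal for Proposition \ref{LLT}}

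The plan is to run the classical Fourier-inversion (smoothing) proof of a local limit theorem, but to track constants sharply enough to obtain the \emph{exact} slope $b-a$ on the right-hand side of (\ref{prescribed}), and then to split the analysis according to the size of $\tau_n\delta_n$: one regime is handled by the smoothing estimate together with hypothesis (i), while the complementary (low-resolution) regime is precisely the one that hypothesis (ii) is designed to trivialize. Throughout I write $\mu_n$ for the law of $X_n$ and $I=[a\delta_n,b\delta_n]$, so $|I|=(b-a)\delta_n$.

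First I would introduce a smoothing variable. Let $Z_n$ be independent of $X_n$ with the Fej\'er density $F_{\tau_n}$, whose characteristic function is the triangle $(1-|t|/\tau_n)_{+}$, supported in $[-\tau_n,\tau_n]$, and which satisfies the tail bound $\mathbb{P}(|Z_n|>d)\le C/(\tau_n d)$ for a universal constant $C$. The density $q_n$ of $X_n+Z_n$ has characteristic function $\psi_n(t)\,(1-|t|/\tau_n)_{+}$, hence is band-limited, and Fourier inversion yields the pointwise bound
\[
\|q_n\|_\infty \le \frac{1}{2\pi}\int_{-\tau_n}^{\tau_n}|\psi_n(t)|\Bigl(1-\tfrac{|t|}{\tau_n}\Bigr)\,dt \le \frac{1}{2\pi}\int_{-\tau_n}^{\tau_n}|\psi_n(t)|\,dt ,
\]
which is exactly the integral controlled by assumption (i). Next I compare $X_n$ with $X_n+Z_n$ on a slightly enlarged interval $I^{+d}=[a\delta_n-d,\ b\delta_n+d]$. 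By independence, $\mathbb{P}(X_n+Z_n\in I^{+d})\ge \mathbb{P}(X_n\in I)\,\mathbb{P}(|Z_n|\le d)$, while $\mathbb{P}(X_n+Z_n\in I^{+d})=\int_{I^{+d}}q_n\le |I^{+d}|\,\|q_n\|_\infty$ with $|I^{+d}|=(b-a)\delta_n+2d$. Combining these and multiplying by $A_n/\delta_n$ gives the master inequality
\[
\frac{A_n}{\delta_n}\,\mathbb{P}(a\delta_n\le X_n\le b\delta_n)\le \frac{\,(b-a)+2d/\delta_n\,}{\,1-\mathbb{P}(|Z_n|>d)\,}\cdot\frac{A_n}{2\pi}\int_{-\tau_n}^{\tau_n}|\psi_n(t)|\,dt ,
\]
valid for every $d>0$.

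The decisive step is a dichotomy along any subsequence realizing the $\limsup$ in (\ref{prescribed}). \emph{Case A}, where $\tau_n\delta_n\to\infty$: choosing $d=d_n=\sqrt{\delta_n/\tau_n}$ makes $2d_n/\delta_n=2(\tau_n\delta_n)^{-1/2}\to0$ and, via the Fej\'er tail bound, $\mathbb{P}(|Z_n|>d_n)\le C(\tau_n d_n)^{-1}=C(\tau_n\delta_n)^{-1/2}\to0$; since $\limsup_n \frac{A_n}{2\pi}\int_{-\tau_n}^{\tau_n}|\psi_n|\le 1$ by (i), the master inequality gives $\limsup\le b-a$. \emph{Case B}, where $\tau_n\delta_n$ stays bounded by some $M$: here one does not even need the smoothing estimate, because hypothesis (ii) forces the prefactor itself to vanish,
\[
\frac{A_n}{\delta_n}=\bigl(\tau_n\delta_n\bigr)^{k-1}\,\frac{A_n}{\delta_n^{k}\tau_n^{k-1}}\le M^{k-1}\,\frac{A_n}{\delta_n^{k}\tau_n^{k-1}}\to 0,
\]
so that $\frac{A_n}{\delta_n}\mathbb{P}(\cdots)\le \frac{A_n}{\delta_n}\to0\le b-a$. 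Passing to a further subsequence, every subsequence falls (along a sub-subsequence) into one of these two cases, and in both the $\limsup$ is at most $b-a$; this establishes (\ref{prescribed}).

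I expect the main obstacle to be conceptual rather than computational: recognizing that the smoothing argument can only produce the sharp slope $b-a$ when $\tau_n\delta_n\to\infty$ (the enlargement width $d_n$ must satisfy simultaneously $d_n\ll\delta_n$ and $\tau_n d_n\to\infty$), and that (ii) is exactly the hypothesis that disposes of the complementary regime by making the normalization $A_n/\delta_n$ negligible. The remaining care is routine but must be handled cleanly: verifying the Fourier-inversion representation of $q_n$ (so that the sup-norm bound genuinely reduces to the integral in (i)), the Fej\'er tail estimate, and the subsequence bookkeeping needed to turn the two cases into a single statement about the $\limsup$.
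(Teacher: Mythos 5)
Your proof is correct. There is, however, nothing in this paper to compare it against: Proposition \ref{LLT} is quoted from \cite{ENS1} without proof, so the only internal check is against the standard argument, which is exactly what you give. Your route --- Fej\'er-kernel smoothing $Z_n$ whose characteristic function is supported in $[-\tau_n,\tau_n]$, Fourier inversion bounding the density of $X_n+Z_n$ by $\frac{1}{2\pi}\int_{-\tau_n}^{\tau_n}|\psi_n(t)|\,dt$ (the quantity controlled by (i)), the master inequality, and the dichotomy in $\tau_n\delta_n$ in which hypothesis (ii) forces $A_n/\delta_n\to 0$ on the bounded regime --- is the standard proof of such weak local-limit upper bounds and is essentially the argument behind \cite{ENS1}; the tail bound $\mathbb{P}(|Z_n|>d)\le C/(\tau_n d)$, the algebra $A_n/\delta_n=(\tau_n\delta_n)^{k-1}A_n/(\delta_n^k\tau_n^{k-1})$, and the subsequence bookkeeping all check out. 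Your remark that the case split is genuinely necessary is also correct: a single choice of enlargement width $d_n$ (needing $d_n\ll\delta_n$ and $\tau_n d_n\to\infty$ simultaneously) would require $\tau_n\delta_n\to\infty$, which hypothesis (ii) for some $k>1$ does not by itself guarantee, and this is precisely the regime that (ii) is designed to trivialize.
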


The proofs of Theorem \ref{thma} and \ref{thmnull} are in Appendix 1 and that of Theorem \ref{thm1} in Appendix 3. We aim to prove Proposition \ref{mainprop} in the rest of  the paper.

\section{Toy model: Ground states with weak boundary conditions
}\label{SectionToyModel}

Consider the interval $\Lambda=\Lambda_N=[-N,N] \cap \mathbb{Z}$. The main mathematical work consists in investigating the asymptotic behaviour of random variables (random boundary free energies) expressed in terms of 
partition functions of some constrained ensembles with $\eta$ random boundary condition also related to the random variables
$$
W_N^\eta :=W_N^\eta(+) = \sum_{x \in \Lambda, y \in \Lambda^c} J_{xy} \eta_y.
$$
called {\em random boundary energies}, as their asymptotic behaviours will determine the weights entering the metastate decomposition.

We start with considering a simpler  toy version of our problem, by considering {\em ground states with weak boundary conditions}; that is, we consider an interval in which all bond strengths inside the interval are infinite - so the temperature is zero -, interacting with finite - thus positive-temperature - random boundary terms. To do so, we need to be able to consider in (\ref{GibbsMea}) $\beta=\infty$ for the interaction of pairs $\{x,y\}$ inside the volume~$\Lambda$ and $\beta < \infty$ for the interaction terms such that one site is inside $\Lambda$ and the other outside of it (cross-boundary interactions).

To consider the zero-temperature case 
 in the bulk, for edges contained inside the volume $\Lambda$, we first consider the Hamiltonian  with free boundary condition  and introduce -- for this intermediate model only -- an auxiliary temperature $\beta_1$ in the corresponding finite-volume Gibbs measures $\mu_\Lambda^f$ in (\ref{GibbsMea}),
$$
\tilde{\mu}_{\Lambda}^{\eta}(\sigma) = \frac{1}{\tilde{Z}^{\eta}_{\Lambda}}e^{-\beta_1 H^{f}_{\Lambda}(\sigma) + \beta W^{\eta}_{\Lambda}(\sigma)}.
$$

The corresponding infinite-volume Gibbs measure is the {\bf {\em symmetric}} mixture of $\tilde{\mu}^+$ and $\tilde{\mu}^-$ and by taking $\beta_1=\infty$, we get the zero-temperature picture with the symmetric  mixture of the Dirac measures on\footnote{So that one gets a convex combination of Dirac measures on Dirac measures, seen as a measure (the convex combination) on measures (the first Dirac measures) on measures (the Dirac measures on the ground states).} the ground states $\delta_+$ and $\delta_-$ (by  e.g.  an easy adaptation of Theorem 6.9 in \cite{HOG}  to this one-dimensional long-range set-up). To get the intermediate  model, we re-introduce our finite temperature $\beta >0$ on the cross-boundary edges only, to define the Gibbs measures with random b.c. by (\ref{GibbsMea}), in such a way that we only need to consider  the plus and minus configuration in the interval $[-N,N]$ on~$\Z$, interacting with a random configuration $\eta$ by  interaction terms  $J_{xy}$  of finite strength. Afterwards, to get independent r.v.'s in the bulk inside $\Lambda_N$, we put $\beta_1=\infty$, inside this volume. 

Let us define the plus configuration $\sigma^+=(\sigma^+_x)_{x\in \mathbb{Z}}$ by $\sigma^+_x= 1$, and  the minus configuration $\sigma^-=(\sigma^-_x)_{x\in \mathbb{Z}}$ by $\sigma^-_x= -1$. \\
For simplification, define 
$$
\delta_{\pm}(\sigma)=\delta_{\sigma^{\pm}}(\sigma)=\mathbbm{1}_{\sigma=\sigma^{\pm}}.
$$

	For $\beta_1 = \infty$, the 
finite-volume Gibbs measure on $\Omega_{\Lambda}$ with random boundary condition  $\eta$ is, for $\mathbb{P}$-a.e.($\eta)$, asymptotically equivalent to the finite-volume Gibbs measure
$$
\bar{\mu}^{\eta}_{\Lambda_N}(\sigma)=\frac{e^{- \beta W_N^\eta}}{e^{-\beta W_N^{\eta}}+e^{\beta W_N^{\eta}}}\delta_{+}
+ \frac{e^{\beta W_N^\eta}}{e^{-\beta W_N^{\eta}}+e^{\beta W_N^{\eta}}}\delta_{-}.
$$


%
%

By studying the asymptotics of the cumulative random energies $W_N$'s we get then our metastate result for this toy-model (Ground states with weak boundary conditions) : 

\begin{proposition}\label{toy_model_prop}
\begin{enumerate}
\item[(a)] For every $\frac{1}{2}<\alpha <1$ and $\beta>\beta_2 >0$  , the metastate $\kappa$ is given by the following expression:
	$$
	\kappa = \frac{1}{2} \delta_{\delta_{+}} + \frac{1}{2} \delta_{\delta_-}.
	$$
\item[(b)] For every $0 \leq \alpha <\frac{1}{2}$ and  $\beta>\beta_3> 0$
 , there is a non-trivial distribution $\d\lambda$ with support on the whole  interval $[0,1]$ inducing a non-trivial metastate which has its support on  all mixtures $\bar{\mu}_{\lambda} = \lambda \delta_{+} + (1 - \lambda) \delta_{-}$
$$
\kappa
 = \int_0^1 \delta_{\bar{\mu}_{\lambda}} \kappa_\rho(\d\lambda)
$$
where $\kappa_\rho$ is the trivial, a priori, metastate provided by (\ref{kappa-rho}) and (\ref{kappa-rho2}).
\end{enumerate} 
\end{proposition}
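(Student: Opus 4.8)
The plan is to reduce the whole statement to the asymptotic analysis of the single random variable $W_N^\eta$, because in this toy model $\bar\mu_{\Lambda_N}^\eta$ is an explicit function of $W_N^\eta$ alone: its weight on $\delta_+$ equals $p_N=(1+e^{2\beta W_N^\eta})^{-1}$. First I would write $W_N^\eta=\sum_{|y|>N}c_y^{(N)}\eta_y$ with $c_y^{(N)}=\sum_{x\in\Lambda_N}J_{xy}$, a weighted sum of the i.i.d.\ fair signs $\eta_y$, so that $\Var(W_N^\eta)=\sum_{|y|>N}(c_y^{(N)})^2$. Writing $y=\pm(N+d)$, $d\ge1$, and using $c_{N+d}^{(N)}=\sum_{j=d}^{d+2N}j^{-(2-\alpha)}\asymp\min\!\big(d^{-(1-\alpha)},\,N(N+d)^{-(2-\alpha)}\big)$, I would establish the sharp dichotomy for $\Var(W_N^\eta)$: it stays bounded for $\alpha<\tfrac12$, grows like $N^{2\alpha-1}$ for $\alpha>\tfrac12$, and like $\log N$ at $\alpha=\tfrac12$. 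This single computation is the hinge on which both parts turn.

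For part (b), $0\le\alpha<\tfrac12$, the coefficients converge, $c_{N+d}^{(N)}\to c_d^\infty:=\sum_{j\ge d}j^{-(2-\alpha)}\sim(1-\alpha)^{-1}d^{-(1-\alpha)}$, and $\sum_d(c_d^\infty)^2<\infty$. Hence the random series $W^\eta:=\sum_{d\ge1}c_d^\infty(\xi_d+\xi_d')$ (with $\xi,\xi'$ i.i.d.\ fair signs coding the two boundaries) converges a.s.\ by Kolmogorov's two-series theorem, and $W_N^\eta\Rightarrow W^\eta$, a non-degenerate symmetric limit of finite variance. Because $c_d^\infty\downarrow0$ while $\sum_d c_d^\infty=\infty$, a standard ``no gaps'' argument shows the support of $W^\eta$ is all of $\R$; consequently the limiting weight $p^\eta=(1+e^{2\beta W^\eta})^{-1}$ has a non-degenerate law supported on $[0,1]$. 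Identifying the limiting law of $p_N$ (equivalently of $\lambda$) with the a priori measure $\kappa_\rho(\d\lambda)$ of \eqref{kappa-rho}--\eqref{kappa-rho2} and pushing it forward through $\lambda\mapsto\bar\mu_\lambda=\lambda\delta_++(1-\lambda)\delta_-$ yields $\kappa=\int_0^1\delta_{\bar\mu_\lambda}\,\kappa_\rho(\d\lambda)$.

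For part (a), $\tfrac12<\alpha<1$, the variance diverges while $\max_y|c_y^{(N)}|=O(1)$, so $\max_y|c_y^{(N)}|/\sqrt{\Var(W_N^\eta)}\to0$ and the Lindeberg CLT gives $W_N^\eta/\sqrt{\Var(W_N^\eta)}\Rightarrow\mathcal N(0,1)$. In particular $\mathbb{P}(W_N^\eta>0)\to\tfrac12$ and $|W_N^\eta|\to\infty$ in probability, so $p_N\to\mathbbm{1}[W_N^\eta<0]\in\{0,1\}$ and $\bar\mu_{\Lambda_N}^\eta$ approaches one of the pure states $\delta_\pm$, each chosen with asymptotic probability $\tfrac12$; this already identifies the averaged metastate as $\tfrac12\delta_{\delta_+}+\tfrac12\delta_{\delta_-}$.

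To promote these distributional facts to the almost-sure convergence of the empirical metastate $\kappa_N[\eta]$ along a sparse sequence $(\Lambda_{N_k})$, I would exploit sparseness to decorrelate the $W_{N_k}^\eta$: once $N_k\ll N_{k+1}$ the bulk of the coefficient mass of $W_{N_k}^\eta$ sits on sites disjoint from that of $W_{N_\ell}^\eta$, and the residual cross-correlations are summable, so a Borel--Cantelli/strong-law argument applies. In case (a) the signs $\sign(W_{N_k}^\eta)$ then behave like independent fair coins and $|W_{N_k}^\eta|\to\infty$ along the subsequence, forcing the empirical average of $\delta_{\bar\mu_{\Lambda_{N_k}}^\eta}$ to converge to $\tfrac12(\delta_{\delta_+}+\delta_{\delta_-})$; in case (b) the weights $p_{N_k}^\eta$ behave like an i.i.d.\ sample from the law of $p^\eta$, so their empirical distribution converges to that law and reproduces $\kappa_\rho$. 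I expect the main obstacle to be exactly this passage from convergence in distribution to the almost-sure empirical statement: quantifying the decorrelation between distinct volumes well enough that the relevant probabilities are summable and Borel--Cantelli applies is precisely where the sparsity hypotheses on $(N_k)$ must be spent.
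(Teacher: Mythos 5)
Your overall architecture coincides with the paper's: everything is reduced to the single random variable $W_N^\eta$, and the dichotomy between bounded variance ($\alpha<\tfrac{1}{2}$) and variance of order $N^{2\alpha-1}$ ($\alpha>\tfrac{1}{2}$) is exactly the paper's hinge. Your part (b) is sound and in fact more self-contained than the paper's own treatment: the paper disposes of the intermediate-decay case in a few lines by invoking almost-sure finiteness of the boundary energies together with the Bricmont--Lebowitz--Pfister equivalence-of-boundary-conditions argument, whereas your route (coefficients $c_d^{(N)}\to c_d^\infty$ with $\sum_d (c_d^\infty)^2<\infty$, a.s.\ convergence of the limiting random series by the two-series theorem, full support of its law because $c_d^\infty\downarrow 0$ while $\sum_d c_d^\infty=\infty$, then push-forward through $\lambda\mapsto\bar\mu_\lambda$) is explicit and correct for this toy model.

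Part (a), however, has a genuine gap. You invoke only the qualitative Lindeberg CLT, which yields $\mathbb{P}(|W_N^\eta|\le C)\to 0$ \emph{without any rate}. But every almost-sure conclusion you draw afterwards consumes summable probabilities: the empirical metastate $\kappa_{N_k}[\eta]$ averages $\delta_{\bar\mu^\eta_{\Lambda_n}}$ over \emph{all} $n\le N_k$, not just the sparse volumes, so to exclude mixed states you need $\frac{1}{N_k}\sum_{n\le N_k}\mathbb{P}\left(|W_n^\eta|\le C\right)$ to be summable in $k$; and your coin-flipping argument for the $\tfrac{1}{2}$--$\tfrac{1}{2}$ split needs, after decoupling $W_{N_k}^\eta=\tilde W_k + R_k$ into independent pieces plus small remainders, that $\sum_k \mathbb{P}(|\tilde W_k|\le C)<\infty$. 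Sparsity of $(N_k)$ cannot produce these bounds: it controls the remainders $R_k$, but what is needed is \emph{single-volume anti-concentration}, which no amount of decorrelation between volumes supplies -- so you have mislocated the main obstacle. This anti-concentration estimate is precisely what the paper's Proposition \ref{toy_mainprop} provides via the weak local limit theorem: $|\psi_N(t)|=\prod_{y}\left|\cos\left(t\sum_x J_{xy}\right)\right|\le e^{-t^2A_N^2/2}$ on $|t|\le\tau_N$, whence an Esseen-type bound gives $\mathbb{P}(|W_N^\eta|\le C)=O(N^{\frac{1}{2}-\alpha})$, and then $\frac{1}{N_k}\sum_{n\le N_k}n^{\frac{1}{2}-\alpha}\asymp N_k^{\frac{1}{2}-\alpha}=k^{-2}$ for $N_k=k^{2(\alpha-\frac{1}{2})^{-1}}$, making Borel--Cantelli run. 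Your proof could be repaired by replacing Lindeberg with a quantitative version (a Berry--Esseen bound, for which the third-moment computation gives a polynomial rate, or exactly the paper's characteristic-function estimate), but as written the CLT step does not deliver what your Borel--Cantelli steps require. Note finally that the paper obtains the equal weights not from approximate independence of the signs but from the $\eta\to-\eta$ symmetry of the disorder (together with tail triviality of the Ces\`aro limits), which bypasses half of the decorrelation work you budget for.
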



The proof of Proposition \ref{toy_model_prop} is in Appendix 2. To prove item (a), the following proposition, that is Proposition \ref{mainprop} for the toy model, is required.

\begin{proposition}\label{toy_mainprop}
For every $1/2<\alpha<1$, $\beta>0$, $\tau>0$, and $X\Subset \Z$,
$$
\limsup_{N\to \infty} N^{\alpha -\frac{1}{2}}\mathbb{P}\left( \lVert \bar{\mu}^{\eta}_{\Lambda_N} - \delta_+ \rVert_X \land \lVert \bar{\mu}^{\eta}_{\Lambda_N} - \delta_- \rVert_X \ge \tau  \right)<
\infty.
$$
\end{proposition}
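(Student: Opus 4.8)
The plan is to reduce the statement to a one-dimensional anti-concentration estimate for the scalar random variable $W_N^\eta$, and then feed that estimate into the weak local limit theorem, Proposition \ref{LLT}. Write the weight of $\delta_+$ in $\bar{\mu}^{\eta}_{\Lambda_N}$ as $p_N^\eta = (1 + e^{2\beta W_N^\eta})^{-1}$, so that $\bar{\mu}^{\eta}_{\Lambda_N} = p_N^\eta\,\delta_+ + (1-p_N^\eta)\,\delta_-$. For any nonempty $X \Subset \mathbb{Z}$ one has $\lVert \delta_+ - \delta_-\rVert_X = 2$ (test with $f=\sigma_i$, $i\in X$), whence $\lVert \bar{\mu}^{\eta}_{\Lambda_N} - \delta_+\rVert_X = 2(1-p_N^\eta)$ and $\lVert \bar{\mu}^{\eta}_{\Lambda_N} - \delta_-\rVert_X = 2 p_N^\eta$. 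The event in the statement is therefore exactly $\{2\min(p_N^\eta, 1-p_N^\eta) \ge \tau\}$, and since $p_N^\eta$ is a strictly decreasing function of $W_N^\eta$ taking the value $\tfrac12$ at $W_N^\eta=0$, this is the symmetric event $\{|W_N^\eta| \le c_\tau\}$ with the explicit constant $c_\tau = (2\beta)^{-1}\log\!\big((2-\tau)/\tau\big)$ (the event being empty when $\tau \ge 1$, where there is nothing to prove). Thus it suffices to show $\limsup_N N^{\alpha-\frac12}\,\mathbb{P}(|W_N^\eta| \le c_\tau) < \infty$.

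Next I would expose the structure of $W_N^\eta$. Summing over the bulk sites first, $W_N^\eta = \sum_{y \in \Lambda_N^c} b_y^{(N)}\eta_y$ with deterministic coefficients $b_y^{(N)} = \sum_{x\in\Lambda_N} J_{xy}$, so that $W_N^\eta$ is a weighted sum of i.i.d. centred Rademacher variables with variance $\sigma_N^2 = \sum_{y\in\Lambda_N^c}(b_y^{(N)})^2$. For $y = N+j$ the coefficient behaves like $b_{N+j}^{(N)} \asymp j^{\alpha-1}$ for $1 \le j \lesssim N$ and like $N\, j^{\alpha-2}$ for $j \gtrsim N$; summing the squares over both regimes (and symmetrically on the left) gives $\sigma_N^2 \asymp N^{2\alpha-1}$, precisely because $2\alpha-1>0$ when $\alpha>\tfrac12$. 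The point is that for $\alpha>\tfrac12$ the variance diverges at rate $N^{2\alpha-1}$, so $W_N^\eta$ spreads over a window of width $\sigma_N \asymp N^{\alpha-1/2}$, and the probability of landing in the fixed window $[-c_\tau,c_\tau]$ should be of order $N^{-(\alpha-1/2)}$, which is exactly the claimed rate.

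To make this rigorous I would apply Proposition \ref{LLT} with $X_n = W_N^\eta$, the fixed scale $\delta_n \equiv 1$, the window $[a,b]=[-c_\tau,c_\tau]$, and $A_n = c\,N^{\alpha-1/2}$ for a small constant $c>0$; the conclusion then reads $\limsup_N A_N\,\mathbb{P}(|W_N^\eta|\le c_\tau) \le 2c_\tau$, which upon dividing by $c$ yields the proposition. It remains to verify the two hypotheses for a suitable cutoff sequence $\tau_n$. Hypothesis (ii), $A_N/(\delta_N^k \tau_N^{k-1}) \to 0$, only forces $\tau_N$ to grow faster than a fixed small power of $N$, and is arranged by taking $k$ large. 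The real work, and the main obstacle, is hypothesis (i): controlling $\int_{-\tau_N}^{\tau_N}|\psi_N(t)|\,dt$ for $\psi_N(t)=\prod_y \cos(b_y^{(N)} t)$. Near $t=0$ the elementary bound $\cos u \le e^{-u^2/2}$ on $|u|\le \pi/2$ gives $|\psi_N(t)| \lesssim e^{-c'\sigma_N^2 t^2}$ and hence a Gaussian contribution of order $1/\sigma_N \asymp N^{-(\alpha-1/2)}$, matching $A_N^{-1}$. The delicate part is to show that the range $\delta_0 \le |t| \le \tau_N$ contributes negligibly: one must rule out \emph{revivals} of the product, where many factors $\cos(b_y^{(N)} t)$ return close to $\pm1$ simultaneously. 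This is where the spread of scales among the coefficients $b_{N+j}^{(N)}\asymp j^{\alpha-1}$, $1\le j \lesssim N$, is essential — for each $|t|\ge \delta_0$ a positive fraction of these coefficients keep $b_y^{(N)} t$ bounded away from $2\pi\mathbb{Z}$, forcing $|\psi_N(t)|$ to be exponentially small uniformly on that range. Quantifying this uniform smallness, and balancing it against the growth of $\tau_N$ required by (ii), is the crux of the argument.
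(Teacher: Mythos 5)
Your reduction of the event to $\{|W_N^\eta|\le c_\tau\}$ with $c_\tau=(2\beta)^{-1}\log\big((2-\tau)/\tau\big)$ is exactly the paper's first step (equation (\ref{toy_w})), and your plan of feeding this into Proposition \ref{LLT} with $\delta_N\equiv 1$ and $A_N\asymp N^{\alpha-\frac12}$ is also the paper's plan. But the proposal stops precisely where the proof actually lives: you declare that verifying hypothesis (i) --- bounding $\int_{-\tau_N}^{\tau_N}|\psi_N(t)|\,\d t$, and in particular ruling out revivals of $\prod_y\cos(b_y^{(N)}t)$ for $t$ away from $0$ --- ``is the crux of the argument'', and you do not carry it out. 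As written this is a genuine gap, not a routine verification: the factors with $y$ close to $\pm N$ have coefficients $b_y^{(N)}$ of order $1$ (the nearest-neighbour coupling alone gives $b_{N+1}^{(N)}\ge J\ge 1$), so those cosines really do return to $\pm 1$ periodically, and an equidistribution argument over a range $\delta_0\le |t|\le\tau_N$ with $\tau_N$ a positive power of $N$ would require substantial work that the proposal only gestures at.

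The paper sidesteps the issue entirely with a truncation you missed: it bounds $|\cos(b_y^{(N)}t)|\le 1$ for every $|y|\le 2N$ and keeps only the factors with $|y|>2N$. For those,
$$
b_y^{(N)}=\sum_{x\in\Lambda_N}J_{xy}\le \frac{2-\alpha}{1-\alpha}\,(|y|-N)^{\alpha-1}\le \frac{2-\alpha}{1-\alpha}\,N^{\alpha-1},
$$
so with the choice $\tau_N=\frac{\delta(1-\alpha)}{2-\alpha}N^{1-\alpha}$ every retained argument satisfies $|b_y^{(N)}t|\le\delta<\pi/2$ on the \emph{whole} integration range; the elementary bound $\cos u\le e^{-u^2/2}$ then applies uniformly, revivals cannot occur, and $\int_{-\tau_N}^{\tau_N}|\psi_N(t)|\,\d t\le 2\pi/A_N$ with $A_N^2=\sum_{|y|>2N}\big(b_y^{(N)}\big)^2$. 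The reason this suffices is that the far tail alone still carries variance of the right order: $A_N^2\asymp N^2\sum_{y>2N}(y-N)^{2\alpha-4}\asymp N^{2\alpha-1}$, precisely because $\alpha>\frac12$. Hypothesis (ii) then holds for any $k>1+\big(\alpha-\frac12\big)/(1-\alpha)$, and the proposition follows. So your framework is the right one, but the step you defer is the entire content of the statement, and the near-boundary coefficients whose ``spread of scales'' you propose to exploit are exactly the ones the paper discards.
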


\begin{proof}
We have that for every local function $f$ on $\Omega$,
\be \label{GS}
\bar{\mu}^{\eta}_{\Lambda_N}(f)=\frac{f(\sigma^+)}{1+e^{-2  \beta W_N^{\eta}}}+ \frac{f(\sigma^-)}{1+e^{2  \beta W_N^{\eta}}}.
\ee
Thus, one has directly
$$
\lVert \bar{\mu}^{\eta}_{\Lambda_N} - \delta_+ \rVert_X = \frac{2}{1+e^{2  \beta W_N^{\eta}}} \quad \text{and} \quad
\lVert \bar{\mu}^{\eta}_{\Lambda_N} - \delta_- \rVert_X = \frac{2}{1+e^{-2  \beta W_N^{\eta}}}
$$
for every $X\Subset \mathbb{Z}$. Then,
\be\label{toy_w}
\mathbb{P}\left( \lVert \bar{\mu}^{\eta}_{\Lambda_N} - \delta_+ \rVert_X \land \lVert \bar{\mu}^{\eta}_{\Lambda_N} - \delta_- \rVert_X \ge \tau  \right) = \mathbb{P}\left( \left| W_N^{\eta} \right|\le \frac{1}{2 \beta}\ln\left( \frac{2-\tau}{\tau} \right) \right).
\ee

Denote by $\psi_N(t):=\mathbb{E}\left(e^{- i t W_N^\eta} \right)$ the characteristic function of the random variable $W^{\eta}_N$. Since the $\eta_j$ are independent, the characteristic function of $W_N^\eta$ becomes
\begin{eqnarray*}
\psi_N(t) &=& \prod_{y\in \Lambda^c_N}  \mathbb{E}\left( e^{-it\sum_{x\in \Lambda_N}J_{xy}\eta_j} \right)=\prod_{y\in \Lambda^c_N} \frac{1}{2}\left( e^{-it\sum_{x\in \Lambda_N}J_{xy}} + e^{it\sum_{x\in \Lambda_N}J_{xy}} \right) \\ & =&\prod_{y\in \Lambda^c_N} \cos\left( t\sum_{x\in \Lambda_N}J_{xy} \right).
\end{eqnarray*}

To get an upper bound, use that there exists $0<\delta<\pi/2$ such that, for every $|u|<\delta$, 
we have $\cos(u)\le e^{-u^2/2}$.  For all $\alpha \in (1/2,1)$, choose
\be \label{tau}
\tau_N=\frac{\delta(1-\alpha)}{2-\alpha}N^{1-\alpha}.
\ee

Since, for all $y>N$,
$$
\sum_{x\in \Lambda_N}\frac{1}{|x-y|^{2-\alpha}}\le \frac{2-\alpha}{1-\alpha}(j-N)^{\alpha-1},
$$
and, for all $y<-N$,
$$
\sum_{x\in \Lambda_N}\frac{1}{|x-y|^{2-\alpha}}\le \frac{2-\alpha}{1-\alpha}(-y-N)^{\alpha-1},
$$
we have, for all $|y|>2N$,  all $\alpha \in (1/2,1)$, 
$$
\sum_{x\in \Lambda_N}J_{xy}<\frac{\delta}{\tau_N},
$$
concluding that
$$
t\sum_{x\in \Lambda_N}J_{xy}<\delta \quad \text{for every } 0<t<\tau_N \text{ and } |y|>2N.
$$
Now, choosing
$$ 
A_N=\left( \sum_{|y|>2N}\left( \sum_{x\in \Lambda_N}J_{xy} \right)^2 \right)^{1/2},
$$
we conclude
\begin{align*}
\int^{\tau_N}_{-\tau_N}|\psi_N(t)|\d t &= 2\int^{\tau_N}_0 \prod_{y\in \Lambda^c_N} \left| \cos\left( t\sum_{x\in \Lambda_N}J_{xy} \right) \right|\d t\\
&\le 2 \int^{\tau_N}_0\prod_{|y|> 2N} \left| \cos\left( t\sum_{x\in \Lambda_N}J_{xy} \right) \right|\d t\\
&\le 2\int_0^{\tau_N}\prod_{|y|> 2N}\exp\left( -\frac{t^2}{2}\left(\sum_{x\in \Lambda_N}J_{xy}\right)^2 \right)\d t\\
&=\frac{2}{A_N}\int_{0}^{\tau_N A_N} e^{-t^2/2}\d t\\
&\le \frac{2\pi}{A_N}.
\end{align*}


Therefore
\be \label{Car}
\limsup_{N\to \infty}A_N\int_{-\tau_N}^{\tau_N}|\psi_N(t)|\d t \le {2\pi}.
\ee
Let us compute an upper bound for $A_N$.
\begin{align*}
 \sum_{|y|>2N}\left( \sum_{x\in \Lambda_N}J_{xy} \right)^2&=
 \sum_{y>2N}\left( \sum_{x\in \Lambda_N}J_{xy} \right)^2 +
\sum_{y<-2N}\left( \sum_{x\in \Lambda_N}J_{xy} \right)^2\\
&\le \sum_{y>2N}(2N+1)^2\frac{1}{(y-N)^{4-2\alpha}} + \sum_{y<-2N}(2N+1)^2\frac{1}{(-y-N)^{4-2\alpha}}\\
&= 2(2N+1)^2\sum_{y>2N}\frac{1}{(y-N)^{4-2\alpha}}\\
&\le 2(2N+1)^2\int_{2N}^{\infty}\frac{1}{(v-N)^{4-2\alpha}}\d v\\
&\le \frac{18}{3-2\alpha}N^{2\alpha-1}.
\end{align*}
Thus,
\be \label{boundAn}
A_N\le \sqrt{\frac{18}{3-2\alpha}}N^{\alpha-1/2}.
\ee

For $\delta_N=1$ and $k>1+(\alpha-1/2)/(1-\alpha)$, one can  check that
\be  \label{speed}
\lim_{N\to \infty}\frac{A_N}{\delta^k_N \tau^{k-1}_N}=0
\ee

Therefore, by Proposition \ref{LLT}
$$
\limsup_{N\to \infty}A_N \mathbb{P}\left( \left| W_N^{\eta} \right|\le \frac{1}{2 \beta}\ln\left( \frac{2-\tau}{\tau} \right) \right) \le \frac{1}{\beta} \ln \left( \frac{2-\tau}{\tau} \right)<\infty,
$$
as we wanted.
\end{proof}

\begin{remark} \label{Remark2} Notice that the estimate we obtain by choosing a fixed, $N$-independent, $\delta$ as above leads to a local limit scaling, which indeed also works for independent variables. For the full model this is too much too ask, but luckily,  it is also more than we need. If we choose  an increasing sequence of $\delta_N$'s and adapt the   $\tau_N$ (which will be smaller then) we will get estimates for the probability that $|W_N^{\eta}|$ is in an interval of size $O(N^{\delta})$, which will 
be of  order $N^{ \frac{1}{2} - \alpha + \epsilon}$, for some $\epsilon$. 
Notice that $A_N$ multiplies  the Gaussian integral which always has about the same value, as long as $\tau_N$ is larger than $N^{ - (\alpha - \frac{1}{2})}$.
The ordinary central limit scaling would only require $\delta_N$ to be of order $N^{-\frac{1}{2} + \alpha}$ and the integration interval size $\tau_N$ of order $N^{ - (\alpha - \frac{1}{2})}$.  We will thus have to choose  powers which are intermediate, with a $\delta_N$ close, to  but slightly less  than,  the central limit 
scaling.
This weak local limit behavior  is what we will prove for the full model later on. Notice also,  by the way, that trivially the probability of the boundary (free) energy to be in a finite interval around zero is less than being in such an increasing sequence of intervals. 
\end{remark}

\section{Contours and extremal decomposition}

\subsection{Triangles and contours descriptions}
The first proof of the phase transition at low temperatures by contour arguments for the Dyson model is due to \cite{FrSp} for $\alpha=0$. 
Afterwards, \cite{CFMP} adapted the notion of contours for any $\alpha \in [0,\alpha_+)$, where $\alpha_+=\frac{\log 3}{\log 2}-1$, assuming that the interaction of the nearest-neighbor $J$ is large enough to show the phase transition. \cite{LP} showed that this definition of configuration of contours cannot be extended for $\alpha\in [\alpha_+,1)$. Restricting the interval to $[0,\alpha^*)$, where $\alpha^*<\alpha_+$ is a real number satisfying $\sum_{k=1}^{\infty}\frac{1}{k^{2-\alpha^*}}=2,$ \cite{BEEKR} proved that it is possible to replace $J=1$ to show the phase transition by using contour arguments. One may see some applications using the configuration of contours in \cite{BEEL,CMP17,CMPR,COP1,COP,Littin}.

In this section, we adapt to free boundary conditions the context of Littin-Picco \cite{LP} to provide  quasiadditive bounds on Hamiltonians valid for any $\alpha \in (0,1)$. This quasiadditivity will allow us to introduce the free b.c. partition function as a gas of polymers with convergent cluster expansion at low temperatures, by adapting the context (and corresponding weights) to our new notion of boundary triangles forming  contours. 

We start noticing that the Hamiltonian can be rewritten as
$$
H^{\eta}_{\Lambda}(\sigma)=\frac{\beta}{2} \sum_{x,y \in \Lambda}J_{xy}\mathbbm{1}_{\sigma_x\neq \sigma_y}  + \beta\sum_{\substack{x\in \Lambda \\ y\notin \Lambda}} \eta_yJ_{xy} {\mathbbm{1}_{\sigma_y\neq 1}}.
$$

The Gibbs measures at inverse temperature $\beta$, in finite volume $\Lambda$,  with boundary condition $\eta$, can be written as
\be \label{mueta}
\mu^{\eta}_{\Lambda}(\sigma) = 
\frac{1}{Z^{\eta}_{\Lambda}} \exp\left( -\frac{\beta}{2} \sum_{x,y \in \Lambda}J_{xy}\mathbbm{1}_{\sigma_x\neq \sigma_y}  - \beta\sum_{\substack{x\in \Lambda \\ y\notin \Lambda}} \eta_yJ_{xy} {\mathbbm{1}_{\sigma_x\neq 1}} \right),
\ee
where the partition function
\be \label{PartEta}
Z^{\eta}_{\Lambda} = \sum_{\sigma \in \Omega_{\Lambda}}\exp\left( -\frac{\beta}{2} \sum_{x,y \in \Lambda}J_{xy}\mathbbm{1}_{\sigma_i\neq \sigma_y}  - \beta\sum_{\substack{x\in \Lambda \\ y\notin \Lambda}} \eta_y J_{xy} \mathbbm{1}_{\sigma_x\neq 1}  \right)
\ee
is denoted with the same notation. For $x\in \Lambda$, define
$$
{h^{\eta}_{\Lambda}(x) = \sum_{y\notin \Lambda}J_{xy}\eta_y,}
$$

Rewrite the Hamiltonian with free boundary condition $H^f_{\Lambda}$  on $\Omega_\Lambda$ (or on $\Omega$) as
$$
H_{\Lambda}^f(\sigma) = \frac{1}{2} \sum_{x,y\in \Lambda}J_{xy}\mathbbm{1}_{\sigma_x\neq \sigma_y} .
$$
 In that case, the partition function (\ref{PartEta}) is written
$$
Z^{\eta}_{\Lambda} = \sum_{\sigma \in \Omega_{\Lambda}}\exp\left( -\beta H_{\Lambda}^f(\sigma) - \beta\sum_{x\in \Lambda} h_{\Lambda}^{\eta}(x) \mathbbm{1}_{\sigma_x\neq 1}  \right)
$$

Consider $N\ge 1$, and consider an interval $\Lambda=[-N,N]$ (All the definitions below also work for $\Lambda=[1,N]$). 
Define the dual lattice of $\Lambda$ given by
$\Lambda^*=\left\{ x+\frac{1}{2}: -N\le x\le N \right\}.$
Given a configuration $\sigma\in \Omega_{\Lambda}$, a \emph{spin-flip point} is a site $x$ in $\Lambda^*$ such that $\sigma_{x-\frac{1}{2}}\neq \sigma_{x+\frac{1}{2}}$. For each spin-flip point $x$, let us consider the interval $\left[x-\frac{1}{100},x+\frac{1}{100}\right]\subset \mathbb{R}$ and choose a real number $r_x$ in this interval such that, for every four distinct $r_{x_l}$ with $l=1,\ldots,4$, we have $|r_{x_1}-r_{x_2}|\neq |r_{x_3}-r_{x_4}|$.

For each spin-flip point $x$, we start growing a ``$\lor$-line'' at $r_x$ where this $\lor$-line is embedded in $\mathbb{R}^2$ with angles $\pi/4$ and $3\pi/4$. If at some time two $\lor$-lines starting from different spin-flip points touch -forming a triangle-, the other two lines starting from those two spin-flip points stop growing and we remove those lines -which did not form a triangle-, and  we keep continuing this process. Moreover, if one of the lines of the $\lor$-line touches the perpendicular line (boundary) at $-N$ or $N$, we also remove the other line.  
We call those triangles that touch the boundary,  which exist with free b.c. only,  \emph{boundary triangles}. A triangle $T$ is called a left (resp. right) boundary triangle  if $T$ touches the perpendicular line at $-N$ (resp. $N$).

We denote a (possibly boundary) triangle by $T$, and introduce the following notations:
$$
\begin{aligned}
\{x_{-}(T),x_{+}(T)\}&=\text{ the left and right root of the associated $\lor$-lines, respectively},\\
x_-(T)&=-N (\text{resp. }x_+(T)=N) \text{ if $T$ is a left (resp. right) boundary triangle},\\
\Delta(T)&=[x_{-}(T),x_{+}(T)]\cap \mathbb{Z}, \text{ is the base of the triangle }T,\\
|T|&=|\Delta(T)|, \text{ be the mass of the triangle }T,\\
\sf^*(T)&=\left\{ \inf \Delta(T)-\frac{1}{2},\sup \Delta(T) +\frac{1}{2} \right\},\\
\Ext(T)&=\Lambda\setminus \Delta(T) \text{ the exterior of }T.
\end{aligned}
$$ 
Equip also $\mathbb{Z}$  with the natural distance and define
$$
\d(T,T')=\d\{\sf^*(T),\sf^*(T')\}.
$$
By definition of the triangles, for every pair of triangles $T\neq T'$,
\begin{equation}\label{eq:triangle_condition}
\d(T,T')\ge \min\{|T|,|T'|\}.
\end{equation}

We denote by $\mathcal{T}_{\Lambda}$ be the set of configurations of triangles $\underline{T}=\{T_1,\ldots,T_n\}$ satisfying (\ref{eq:triangle_condition}) and such that $\Delta(T_i)\subset \Lambda$ for every $i=1\ldots,n$. Given $\sigma\in \Omega_{\Lambda}$, we denote by $\underline{T}(\sigma)$ the configuration of triangles constructed from the configuration $\sigma$.

The exterior of a configuration of triangles $\un{T}$ is defined by
$$
\Ext(\un{T})=\bigcap_{T\in \un{T}}\Ext(T).
$$

A family of triangles $\underline{T}$ is a \emph{compatible family} if there exists a configuration $\sigma\in\Omega_{\Lambda}$ such that $\underline{T}=\underline{T}(\sigma)$.

The definition of contours below is slightly different from the definition in \cite{CFMP, CMPR} because of the existence of boundary triangles.

\begin{definition}\label{def:contour}
Let $\underline{T}\in \mathcal{T}_{\Lambda}$ be a compatible configuration of triangles. A \emph{configuration of contours} $\underline{\Gamma}\equiv \underline{\Gamma}(\underline{T})$ is a partition of $\underline{T}$ whose elements, called \emph{contours}, are determined by the following properties:

\textbf{P.0:} Let $\underline{\Gamma}\equiv (\Gamma_1,\ldots,\Gamma_N)$ and $\Gamma_i=\{T_{m,i}:1\le m\le k_i\}$, thus
$$
\underline{T}=\{T_{m,i}:1\le m\le k_i,1\le i\le N\}.
$$

\textbf{P.1:} Contours are well-separated from each other. Consider the base of a contour $\Gamma$ by
$$
\Delta(\Gamma)=\bigcup_{T\in \Gamma}\Delta(T).
$$ 
Any pair $\Gamma\neq \Gamma'$ in $\underline{\Gamma}$ verifies one of the following two alternatives:

\textnormal{(1)} $\Delta(\Gamma_i)\cap \Delta(\Gamma_j)=\emptyset$.

\textnormal{(2)} Either $\Delta(\Gamma_i)\subseteq \Delta(\Gamma_j)$ or $\Delta(\Gamma_j)\subseteq \Delta(\Gamma_i)$. Moreover, supposing that the first case is satisfied, then for any triangle $T_{m,j} \in \Gamma_j$, either $\Delta(\Gamma_i) \subseteq T_{m,j}$ or $\Delta(\Gamma_i)\cap T_{m,j}=\emptyset$.

Write $|\Gamma|=\sum_{T\in \Gamma}|T|$. Then, in both cases (1) and (2), 
\be\label{compatibility2}
\d(\Gamma,\Gamma'):=\min_{\substack{T\in \Gamma \\ T'\in\Gamma'}}\d(T,T')>c\min\{|\Gamma|,|\Gamma'|\}^3,
\ee
where $c>1$ be a positive real number satisfying, as in \cite{CFMP},
\be \label{c}
\sum_{M=1}^{\infty}\frac{4M}{\lfloor cM^3 \rfloor}\le \frac{1}{2}.
\ee

\textbf{P.2:} Independence. Let $\{\underline{T}^{(1)},\ldots, \underline{T}^{(k)}\}$ be configurations of triangles; consider the contours of the configuration $\underline{T}^{(i)}$ by $\underline{\Gamma}(\underline{T}^{(i)})= \{\Gamma_j^{(i)}: j=1,\ldots,n_i\}$. If for any distinct pair $\Gamma_{j}^{(i)}$ and $\Gamma_{j'}^{(i')}$ the property P.1 is satisfied, then
$$
\underline{\Gamma}\left(\underline{T}^{(1)},\underline{T}^{(2)},\ldots, \underline{T}^{(k)}\right) = \left\{ \Gamma_j^{(i)}: j=1,\ldots, n_i;\ i=1,\ldots,k \right\}.
$$
\end{definition}

 A contour that contains a boundary triangle is called \emph{boundary contour}.

The proof of existence and uniqueness of an algorithm that produces $\underline{\Gamma}$ satisfying Definition~\ref{def:contour} is the same as in \cite{CFMP} (the presence or absence of boundary contours does not play a role here). Thanks to these boundary contours with free boundary condition, we can decompose the configuration space as $\Omega_{\Lambda}=\Omega_{\Lambda}^+ \cup \Omega_{\Lambda}^-$, where
$$
\Omega_{\Lambda}^{\pm} = \{\sigma \in \Omega_{\Lambda}: \sigma|_{\Ext(\un{T}(\sigma))} = \pm 1\}.
$$
By spin-flip symmetry, $\Omega_{\Lambda}^{\pm} = -\Omega^{\mp}_{\Lambda}$.

Define $\mathcal{T}_{\Lambda}^\pm$ to be the set of all triangle configurations $\un{T} \in \mathcal{T}_{\Lambda}$ compatible with a configuration $\sigma \in \Omega_{\Lambda}^{\pm}$. There is a bijection $\Omega^{\pm}_{\Lambda} \to \mathcal{T}_{\Lambda}$, which implies a bijection $\underline{T}^\pm \mapsto \underline{\Gamma}(\underline{T})$. Thus, we have a bijection between spin configurations in $\Omega^{\pm}_{\Lambda}$ and contour configurations.

\begin{definition}
We say that a family of contours $\{\Gamma_0,\Gamma_1,\ldots,\Gamma_n\}$ is \emph{compatible} if there exists $\sigma_{\Lambda}\in \Omega_{\Lambda}$ such that $\underline{\Gamma}=\underline{\Gamma}(\underline{T}(\sigma_{\Lambda}))$. In particular, we denote by $\Gamma \sim \Gamma'$ when $\{\Gamma,\Gamma'\}$ is compatible.
\end{definition}

For two contours $\Gamma$ and $\Gamma'$, let us define the following notations,
\begin{align*}
x_-(\Gamma) &= \min_{T\in \Gamma} x_-(T)\\
x_+(\Gamma) &= \max_{T\in \Gamma} x_+(T)\\
\supp(\Gamma) &= [x_-(\Gamma),x_+(\Gamma)]\\
\sf^*(\Gamma) &= \bigcup_{T\in \Gamma} \sf^*(T)
\end{align*}

\subsection{Approximate extremal decomposition at finite volume}

Define
$$
W_{\Lambda}^{\eta}(\sigma) = \sum_{x\in \Lambda}\sum_{y\notin \Lambda}J_{xy}\sigma_x\eta_y
$$
and the partition functions $\calz^{+,\eta}_{\Lambda}$ and $\calz^{-,\eta}_{\Lambda}$ by
\be \label{newpartition}
\calz^{\pm,\eta}_{\Lambda} = \sum_{\sigma \in \Omega^\pm_{\Lambda}}e^{-\beta H^{\eta}_{\Lambda}(\sigma)}.
\ee
Note that (\ref{newpartition}) can then be written as sum over compatible contours
$$
\calz^{\pm,\eta}_{\Lambda}=\sum_{\underline{\Gamma} \in \mathcal{O}^\pm_{\Lambda}} e^{-\beta H^{\pm,\eta}_{\Lambda}[\underline{\Gamma}]},
$$
where $\mathcal{O}^\pm_{\Lambda}$ is the set of compatible contours associated to $\Omega^\pm_{\Lambda}$. Note that, by spin-flip symmetry,
$$ 
\calz^{-,\eta}_{\Lambda}=\calz^{+,-\eta}_{\Lambda}.
$$
Thus, we can use the shorthand notations $\calz^{\eta}_{\Lambda}=\calz^{+,\eta}_{\Lambda}=\calz^{-,-\eta}_{\Lambda}$. Moreover, note that $W^{-\eta}_{\Lambda}(\sigma)=-W^{\eta}_{\Lambda}(\sigma)$ and $h^{-\eta}_{\Lambda}(x)=-h^{\eta}_{\Lambda}(x)$ for all $\eta,\sigma\in \Omega$.
One can also write
$$
H^{\eta}_{\Lambda}[\underline{\Gamma}] = \sum_{x,y\in \Lambda}J_{xy}\mathbbm{1}_{\sigma_x(\un{\Gamma})\neq \sigma_y(\un{\Gamma})} -\sum_{x\in \Lambda}h_{\Lambda}^{\eta}(x)\sigma_x
$$
for all contours $\un{\Gamma}\in \mathcal{O}^+_{\Lambda}$.
Here, $\sigma_i(\un{\Gamma}) \in \{-1,1\}$ is the value of the spin at $i$ when the contour configuration $\un{\Gamma}$ is present. Using then the decomposition
$$
Z_\Lambda^\eta= \calz^{\eta}_{\Lambda} + \calz^{-\eta}_{\Lambda}
$$
the Gibbs measure $\mu^{\eta}_{\Lambda}$ can be represented as a convex combination of two constrained Gibbs measures:
\be \label{decompo}
\mu^{\eta}_{\Lambda}(\sigma) = \left[ 1+ \frac{\calz^{-\eta}_{\Lambda}}{\calz^{\eta}_{\Lambda}} \right]^{-1} \nu^{+,\eta}_{\Lambda}(\sigma)+
 \left[1+  \frac{\calz^{\eta}_{\Lambda}}{\calz^{-\eta}_{\Lambda}} \right]^{-1} \nu^{-,\eta}_{\Lambda}(\sigma),
\ee
where the measures $\nu^{\pm,\eta}_{\Lambda}$ constrained to $\Omega^{\pm}_{\Lambda}$ are defined by
$$
\nu^{\pm,\eta}_{\Lambda}(\sigma) = \frac{1}{\calz^{\pm}_{\Lambda}} e^{-\beta H^{\eta}_{\Lambda}(\sigma)}\mathbbm{1}_{\{\sigma \in \Omega^{\pm}_{\Lambda}\}}.
$$
Define $\nu^{\eta}_{\Lambda}(\sigma)=\nu^{+,\eta}_{\Lambda}(\sigma)=\nu^{-,-\eta}_{\Lambda}(-\sigma)$.

Using $\sigma_x = 1-2\mathbbm{1}_{\sigma_x=-1}$, the partition function $\calz^{\eta}_{\Lambda}$ can be written as
$$
\calz^{\eta}_{\Lambda}
= e^{\beta W^{\eta}_{\Lambda}}\sum_{\sigma \in \Omega^+_{\Lambda}} \exp\lf -\beta \sum_{x,y\in \Lambda}J_{xy}\mathbbm{1}_{\sigma_x\neq \sigma_y} - 2\beta\sum_{x\in \Lambda}h^{\eta}_{\Lambda}(x)\mathbbm{1}_{\sigma_x=-1}\ri.
$$

\subsection{From finite-volume decomposition to metastate behavior via cluster expansion and WLLT}

Define
$$
\Xi^{\eta}_{\Lambda}=\sum_{\sigma \in \Omega^+_{\Lambda}} \exp\lf -\beta H^{f}_{\Lambda}(\sigma) - 2\beta E^{\eta}_{\Lambda}(\sigma)\ri
$$
where
$$ 
E^{\eta}_{\Lambda}(\sigma)=\sum_{x\in \Lambda}h^{\eta}_{\Lambda}(x)\mathbbm{1}_{\sigma_x=- 1}.
$$
Note that $E^{-\eta}_{\Lambda}(\sigma)=-E^{\eta}_{\Lambda}(\sigma)$. 

\begin{proposition}\label{prop_convergence_nu}
For every $1/2<\alpha<1$, there exists $\beta_3=\beta_3(\alpha)$ such that, for every $\beta>\beta_3$ and $X\Subset \Z$,
$$ 
\lim_{N\to \infty}\lVert \nu^{\eta}_{\Lambda_N} - \mu^+ \rVert_X =0.
$$
\end{proposition}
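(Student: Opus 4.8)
The plan is to combine the low-temperature contour representation and cluster expansion of Sections~5--6 with an interpolation argument that strips off the random boundary field, reducing the claim to the standard convergence of the zero-field plus-constrained measure towards $\mu^+$. First I would record the representation obtained above: after the factor $e^{\beta W^{\eta}_{\Lambda}}$ cancels between numerator and denominator, for a local $f$ with $D_f\subset X$ one has $\nu^{\eta}_{\Lambda}(f)=\frac{1}{\Xi^{\eta}_{\Lambda}}\sum_{\sigma\in\Omega^+_{\Lambda}}f(\sigma)e^{-\beta H^f_{\Lambda}(\sigma)-2\beta E^{\eta}_{\Lambda}(\sigma)}$, i.e. $\nu^{\eta}_{\Lambda}$ is the free-b.c.\ plus-constrained Gibbs measure perturbed by the external field $-2\beta h^{\eta}_{\Lambda}(\cdot)$ coupled to $\mathbbm{1}_{\sigma_x=-1}$. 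The key deterministic input is the bound $|h^{\eta}_{\Lambda}(x)|\le \bar h_{\Lambda}(x):=\sum_{y\notin\Lambda}J_{xy}$, valid for every $\eta$, which for fixed $x$ tends to $0$ as $N\to\infty$, decays like $\mathrm{dist}(x,\Lambda^c)^{\alpha-1}$ away from the boundary, and satisfies $\|\bar h_{\Lambda}\|_\infty\le C_0:=J+2\sum_{k\ge 1}k^{-(2-\alpha)}<\infty$ (finite since $\alpha<1$).

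Next I would establish, for $\beta>\beta_3(\alpha)$, an absolutely convergent cluster expansion of the polymer gas representing $\Xi^{(th^{\eta})}_{\Lambda}$ that is uniform in $N$, in $t\in[0,1]$ and in $\eta$. On $\Omega^+_{\Lambda}$ the minus-region of a contour configuration is the union of the bases $\Delta(\underline\Gamma)$, whose cardinality is at most the total mass; hence the field alters each contour weight by at most $\exp\!\big(2\beta\sum_{x\in\Delta(\Gamma)}|h^{\eta}_{\Lambda}(x)|\big)$. Since $|h^{\eta}_{\Lambda}(x)|$ decays like $\mathrm{dist}(x,\Lambda^c)^{\alpha-1}$, the field summed over the minus-region of a single contour is bounded by $2\beta(\mathrm{const}+|\Gamma|^{\alpha})$, that is \emph{sublinear} in $|\Gamma|$ up to a bounded boundary constant. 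This sublinear cost is dominated, for large mass, by the at-least-linear Peierls energy furnished by the quasiadditive/Littin--Picco--CFMP contour bounds of Sections~5--6, and for small mass by taking $\beta$ large; thus the field is a relatively bounded perturbation and the expansion converges with the same constants for all $\eta$ and all $t$. In particular the perturbed constrained measures $\nu_t:=\nu^{+,th^{\eta}}_{\Lambda}$ have exponentially decaying truncated correlations, $|\mathrm{Cov}_{\nu_t}(f,\mathbbm{1}_{\sigma_x=-1})|\le C\|f\|\,e^{-c\,\mathrm{dist}(X,x)}$, uniformly in $N,t,\eta$.

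With this in hand the field is removed by interpolation. Writing $\nu_1=\nu^{\eta}_{\Lambda}$ and $\nu_0=\nu^{+,f}_{\Lambda}$ and differentiating in $t$ gives $\frac{d}{dt}\nu_t(f)=-2\beta\sum_{x\in\Lambda}h^{\eta}_{\Lambda}(x)\,\mathrm{Cov}_{\nu_t}(f,\mathbbm{1}_{\sigma_x=-1})$, so that, replacing $|h^{\eta}_{\Lambda}(x)|$ by the deterministic bound $\bar h_{\Lambda}(x)$,
\[
|\nu^{\eta}_{\Lambda}(f)-\nu_0(f)|\le 2\beta C\|f\|\sum_{x\in\Lambda}\bar h_{\Lambda}(x)\,e^{-c\,\mathrm{dist}(X,x)}.
\]
Splitting this sum at $\mathrm{dist}(X,x)=R$, the near part has $O(R)$ terms with $\bar h_{\Lambda}(x)\lesssim (N-R)^{\alpha-1}\to 0$, while the far part is $\le C_0\sum_{\mathrm{dist}>R}e^{-c\,\mathrm{dist}}\lesssim e^{-cR}$; letting $N\to\infty$ and then $R\to\infty$ shows the right-hand side tends to $0$, uniformly over all $\eta$ and over all $f$ with $\|f\|=1$ and $D_f\subset X$. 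Since the bound is through $\bar h_\Lambda$, the convergence in fact holds for \emph{every} $\eta$.

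It remains to note that $\nu_0(f)=\nu^{+,f}_{\Lambda}(f)\to\mu^+(f)$: this is the convergence of the zero-field plus-constrained finite-volume measure to the infinite-volume plus state, which is standard from the very same cluster expansion, the finite-volume cluster series converging to the infinite-volume series defining $\mu^+$ up to terms from clusters meeting $\Lambda^c$, which are $O(e^{-cN})$. Taking the supremum over $f$ then yields $\|\nu^{\eta}_{\Lambda}-\mu^+\|_X\to 0$. The main obstacle is the uniform convergence of the perturbed cluster expansion in the second step: one must check that the boundary field, which is of order one near $\pm N$, does not destabilize the plus-constrained contour ensemble. The decisive point, special to $\alpha<1$, is precisely the sublinear growth $O(|\Gamma|^{\alpha})$ of the field energy carried by a contour, which the (super)linear Peierls cost can absorb once $\beta$ is large.
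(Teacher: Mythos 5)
There is a genuine gap, and it sits exactly at the step you yourself flag as decisive. You claim that the boundary-field cost of a contour, which you correctly compute to be $O(|\Gamma|^{\alpha})$ under the deterministic bound $|h^{\eta}_{\Lambda}(x)|\le \bar h_{\Lambda}(x)\sim \mathrm{dist}(x,\Lambda^c)^{\alpha-1}$, is ``dominated, for large mass, by the at-least-linear Peierls energy.'' But in this model the Peierls energy is \emph{not} linear in the contour mass: by Proposition \ref{prop:energyhamiltonian} (following \cite{CFMP,LP}) one only has $H^f_{\Lambda_N}[\Gamma]\ge \frac{\zeta_\alpha}{2}\lVert\Gamma\rVert_\alpha$ with $\lVert\Gamma\rVert_\alpha=\sum_{T\in\Gamma}|T|^{\alpha}$, and this order is sharp — a single triangle of mass $M$ has energy $\Theta(M^{\alpha})$. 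So the worst-case field cost and the Peierls gain are of the \emph{same} order $|\Gamma|^{\alpha}$, both multiplied by $\beta$; taking $\beta$ large does not change their ratio, and the comparison reduces to fixed constants ($\frac{\zeta_\alpha}{2}$ versus roughly $\frac{4(2-\alpha)}{\alpha(1-\alpha)}$) which go the wrong way. Concretely, for $\eta\equiv -1$ a boundary triangle of length $L$ gains field energy $\approx \frac{2\beta(2-\alpha)}{\alpha(1-\alpha)}L^{\alpha}$ against a Peierls cost $\approx \frac{\beta\zeta_\alpha}{2}L^{\alpha}$, so large boundary triangles are favored and your ``uniform in $\eta$'' cluster expansion diverges. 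This is why your conclusion that convergence holds for \emph{every} $\eta$ cannot be reached: it is precisely the atypical $\eta$'s that destroy the expansion.

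The paper's proof avoids this by making the estimate probabilistic rather than deterministic: Proposition \ref{prop:goodeta} (via Chernoff and Borel--Cantelli) restricts to good boundary conditions $\Omega_0$, for which $|\tilde h^{\eta}_{N_k,n_k}(x)|\lesssim \mathrm{dist}(x,\Lambda^c)^{\alpha-\frac32+\varepsilon}$ away from a truncation window of width $n_k$ near the boundary. Then the field cost of a contour is $O(\lVert\Gamma\rVert_{\alpha-\frac12+\varepsilon})$, which is of \emph{strictly smaller order} than the contour energy $\gtrsim \lVert\Gamma\rVert_{\alpha'}$ for a suitable $\alpha-\frac12+\varepsilon<\alpha'<\alpha_+$; this separation of exponents (not of constants) is what makes Theorem \ref{thm:rho} and the convergent expansion of Section 6 work, and it is also why the statement is only claimed along the good-$\eta$ event (i.e.\ $\mathbb{P}$-a.s., with the residual near-boundary window handled by the $n_k$-truncation), not for all $\eta$. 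A secondary inaccuracy in your argument: the claimed exponential decay of truncated correlations $|\mathrm{Cov}_{\nu_t}(f,\mathbbm{1}_{\sigma_x=-1})|\le C\lVert f\rVert e^{-c\,\mathrm{dist}(X,x)}$ is impossible in a long-range model (decay is at best polynomial, of order $\mathrm{dist}^{-(2-\alpha)}$); your interpolation step would survive with polynomial decay, but only if the uniform expansion it presupposes were valid, which, as above, it is not.
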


This proposition will be proven in Section \ref{sec:proofs}, after the (longer) one of Proposition 1.

Note then that, for a fixed $\tau>0$, $\ell<\tau$, $\beta>\beta_3$, and $X\Subset \Z$, there exists $N_0$ such that, for every $N\ge N_0$,
$$
\lVert \nu^{\eta}_{\Lambda_N} - \mu^+ \rVert_X<\ell.
$$
Define
\be\label{weights2}
\lambda^{\eta}_{\Lambda}=\left[ 1+\frac{\calz^{\eta}_{\Lambda}}{\calz^{-\eta}_{\Lambda}} \right]^{-1}.
\ee
Thus,
\begin{align*}
&\mathbb{P}\left( \lVert \mu^{\eta}_{\Lambda_N}-\mu^+ \rVert_X \land \lVert \mu^{\eta}_{\Lambda_N}-\mu^- \rVert_X \ge \tau\right)
\le \mathbb{P}\left( \lVert \mu^{\eta}_{\Lambda_N}-\nu^{\eta}_{\Lambda_N} \rVert_X \land \lVert \mu^{\eta}_{\Lambda_N}-\nu^{-\eta}_{\Lambda_N} \rVert_X \ge \tau-\ell\right)\\
&\le \mathbb{P}\left( (1-\lambda^{\eta}_{\Lambda_N})\lVert \nu^{\eta}_{\Lambda_N}-\nu^{-\eta}_{\Lambda_N} \rVert_X \land \lambda^{\eta}_{\Lambda_N}\lVert \nu^{\eta}_{\Lambda_N}-\nu^{-\eta}_{\Lambda_N} \rVert_X \ge \tau-\ell\right)\\
&\le \mathbb{P}\left(  \left[ \frac{\lVert \nu^{\eta}_{\Lambda_N}-\nu^{-\eta}_{\Lambda_N} \rVert_X}{\tau-\ell}-1 \right]^{-1} \le \frac{\calz^{\eta}_{\Lambda_N}}{\calz^{-\eta}_{\Lambda_N}} \le  \frac{\lVert \nu^{\eta}_{\Lambda_N}-\nu^{-\eta}_{\Lambda_N} \rVert_X}{\tau-\ell}-1 \right)\\
&= \mathbb{P}\left( |F^{\eta}_N| \le \frac{1}{2\beta}\log \left( \frac{\lVert \nu^{\eta}_{\Lambda_N}-\nu^{-\eta}_{\Lambda_N} \rVert_X}{\tau-\ell}-1\right) \right)
\end{align*}
where $F^{\eta}_N$ is the free energy given by
$$
F^{\eta}_N=\frac{1}{2\beta}(\log \calz^{\eta}_{\Lambda_N} - \log \calz^{-\eta}_{\Lambda_N}) = W^{\eta}_{\Lambda_N}+\frac{1}{2\beta}(\log \Xi^{\eta}_{\Lambda_N} - \log \Xi^{-\eta}_{\Lambda_N}).
$$
Note that
$$
\mu^{\eta}_{\Lambda}=\frac{e^{\beta F^{\eta}_{N}}}{e^{\beta F^{\eta}_{N}}+e^{-\beta F^{\eta}_{N}}}\nu^{\eta}_{\Lambda_N}+\frac{e^{-\beta F^{\eta}_{N}}}{e^{\beta F^{\eta}_{N}}+e^{-\beta F^{\eta}_{N}}}\nu^{-\eta}_{\Lambda_N}.
$$

Thus, Equation (\ref{eq:mainprop}) is equivalent to
\begin{equation}\label{FNRJDecay}
\limsup_{N\to \infty}N^{\alpha-\frac{1}{2}-\varepsilon}\mathbb{P}\left( |F^{\eta}_N| \le \frac{1}{2\beta}\log \left( \frac{\lVert \nu^{\eta}_{\Lambda_N}-\nu^{-\eta}_{\Lambda_N} \rVert_X}{\tau-\ell}-1\right) \right)<\infty.
\end{equation}

Comparing to the expression (\ref{toy_w}), we want to analyse the free energy $F^{\eta}_N$.

Using the relation 
$H_\Lambda^\eta(\sigma)=H_\Lambda^{f}(\sigma) - W_\Lambda^\eta(\sigma)$,
we shall analyse $F^{\eta}_N$ by treating it as a free b.c. part plus a boundary term which is due to the random boundary condition.  This boundary condition  can equivalently  be written as a decaying random field added to the free bc Hamiltonian,  decaying from the boundary.
 The asymptotic behaviour of these random variables will be treated thanks to the derivation of cluster expansions, applied afterwards to the asymptotic control of the characteristic functions.  The additional random field part remains bounded for intermediate decays,  and the weights (\ref{weights2}) converge to some random variables, while the constrained partition functions are unbounded for slow decays so that a Local Limit Theorem for weakly dependent sequences as in  \cite{ENS2} 
 still holds (and the weights (\ref{weights2}) converge then to $\frac{1}{2}$ (by symmetry considerations), similarly to what happens for n.n. Ising models in higher dimensions \cite{ENS1,ENS2}.
 
Let us divide the volume $\Lambda_N=\Lambda_{N_k}$ into three intervals
 $$
 [-N_k,-N_k+n_k) \cup [-N_k+n_k,N_k-n_k] \cup (N_k-n_k,N_k]. 
 $$
 and let us also decompose the difference of the free energy $F^{\eta}_{N_k}$ into the following terms,
 $$ 
 F^{\eta}_{N_k} = W^{\eta}_{\Lambda_N}+\frac{1}{2\beta}(\log \tilde{\Xi}^{\eta}_{N_k,n_k} - \log \tilde{\Xi}^{-\eta}_{N_k,n_k}) +\frac{1}{2\beta}\left(\log \frac{\Xi^{\eta}_{\Lambda_N}}{\tilde{\Xi}^{\eta}_{N_k,n_k}} - \log \frac{\Xi^{-\eta}_{\Lambda_N}}{\tilde{\Xi}^{-\eta}_{N_k,n_k}}\right),
 $$
 where
 $$
 \tilde{\Xi}^{\eta}_{N_k,n_k} = \sum_{\sigma\in \Omega_{\Lambda_{N_k}}} \exp\left(  -\beta H_{\Lambda_{N_k}(\sigma)} - 2\beta \sum_{x\in \Lambda_{N_k}}\tilde{h}^{\eta}_{N_k,n_k}(x)\mathbbm{1}_{\sigma_x=-1}\right)
 $$
and
$$ 
\tilde{h}^{\eta}_{N_k,n_k}(x)=
\begin{cases}
h^{\eta}_{\Lambda_{N_k}}(x) &\text{ if }x\in [-N_k+n_k,N_k-n_k]\\
0 &\text{ if }x\in [-N_k,-N_k+n_k) \cup (N_k-n_k,N_k].
\end{cases}
$$ 
Note that
 $$
\left|\log \frac{\Xi^{\eta}_{\Lambda_N}}{\tilde{\Xi}^{\eta}_{N_k,n_k}} \right| \le 4\beta n_k \le 8\beta N_k^{\frac{1}{2}\left( \alpha-\frac{1}{2} \right)}
 $$
 Because all $\tilde{h}^{\eta}_{N_k,n_k}(x)$ in the central interval are small enough we can use the cluster expansion for the truncated model. The full model needs adding the fields in the left and right intervals, but those contribute at most a term of order $n$ which is small in comparison with $N_k^{\alpha -1/2}$ which is the order of the size of the (free) energy of all the boundary terms together. 

So the conclusion of the WLLT for the truncated model still applies for the untruncated model, if we want to say that with large probability we avoid a window around zero of size $N_k^{\alpha - 3/2 - \delta}$, as long as this window is large compared to $n_k$.


\section{Cluster Expansion at low temperatures}

To control the weights entering into the decomposition of the random Gibbs measures $\mu^\eta$, we use cluster expansions of partition functions with free or mixed (+,$\eta$) boundary conditions, either directly on the constrained partition functions $\mathcal{Z}^{\pm\eta}_\Lambda$, or starting from the former (free b.c.) which will allow us to treat the random b.c. {\em via} a ``separate'', typically decaying, random field. 

{We restricted the results in this section to $1/2<\alpha<1$ because of the choice of the parameter $n_k<N_k^{\frac{1}{2}\left(\alpha-\frac{1}{2}\right)}$. However, all the results holds for any $0<\alpha<1$ if we replace the condition of $n_k$ by $n_k=o(N_k)$.}

In all cases, we need to adapt the existing cluster expansions for $+$ (and $\pm$) b.c. from the papers \cite{CFMP,CMPR,COP1} (restricted to some values of $\alpha$), and more generally,  Littin {\em et al.} \cite{LP} (valid for any $\alpha \in (0,1)$ and this amounts to first also adapting the existing contour descriptions by introducing adapted boundary triangles and contours\footnote{Specific to free b.c., in addition to classical triangles and contours of the above references.}. based on the same spin-flip  points, to partition the finite-volume configuration space with free b.c. as
$$
\Omega_\Lambda= \Omega_\Lambda^{+} \cup \Omega_\Lambda^{-} 
$$

Given a collection of contours $\underline{\Gamma}$, a contour $\Gamma \in \un{\Gamma}$ is \emph{external} with respect to $\un{\Gamma}$ if each triangle of $\Delta(\Gamma)$ is not contained in some triangle that belongs to the other contours of $\un{\Gamma}$. A contour which is not external is called \emph{internal}.

For a collection of contours $\un{\Gamma}$,  let $\Gamma^{\ex}$  be an external contour of $\un{\Gamma}$, and let  $n^{\ex}(\un{\Gamma}) =|\un{\Gamma}|$ be the number of external contours in $\un{\Gamma}$.  Let  $\un{\Gamma}^{\inte}(\Gamma)$  be a set of contours internal to $\Gamma$ in $\un{\Gamma}$, and let  $n^{\inte}(\Gamma) = |\un{\Gamma}^{\inte}(\Gamma)|$ be the cardinality of $\un{\Gamma}^{\inte}(\Gamma)$.

A collection of contours $\un{\Gamma}$ can be uniquely decomposed as
$$
\un{\Gamma}=\bigcup_{j=1}^{n^{\ex}(\un{\Gamma})} \left( \Gamma_j^{\ex} \cup \un{\Gamma}^{\inte}(\Gamma^{\ex}_j) \right).
$$

A \emph{polymer} $R$ is a set of mutually external compatible contours. Denote
$$
\Delta(R)=\bigcup_{\Gamma \in R}\Delta(\Gamma).
$$
Two polymers $R$ and $R'$ are \emph{compatible}, denoted by $R\sim R'$, if the following statements hold.
\begin{enumerate}
\item[(i)] For all $\Gamma \in R$ and $\Gamma' \in R'$, the contours $\Gamma \sim \Gamma'$ are compatible.
\item[(ii)] One of the following condition holds:
$$
\Delta(R) \cap \Delta(R')=\emptyset
$$
or
$$
\sum_{\Gamma \in R} \mathbbm{1}_{\Delta(R')\subset \Delta(\Gamma)}+\sum_{\Gamma' \in R'} \mathbbm{1}_{\Delta(')\subset \Delta(\Gamma')} =1.
$$
\end{enumerate}

Define
$$ 
\tilde{E}^{\eta}_{N_k,n_k}(\bar{\Gamma})=\sum_{x\in \bar{\Gamma}}\tilde{h}^{\eta}_{N_k,n_k}(x)\mathbbm{1}_{\sigma_x=-1}.
$$

Define the activity function $\xi_{N_k,n_k}^{\eta}$ by
$$
 \xi_{N_k,n_k}^{\eta}(R)=
\begin{cases}
A^{\eta}_{N_k,n_k}(\Gamma)w^{\eta}_{N_k,n_k}(\Gamma)& \text{ if }R=\{\Gamma\},\\
A^{\eta}_{N_k,n_k}(R)\prod_{\Gamma \in R}w^{\eta}_{N_k,n_k}(\Gamma)& \text{ otherwise},
\end{cases}
$$
where the functions $A^{\eta}_{\Lambda}$ and $w^{\eta}_{\Lambda}$ are defined as follows
$$
A^{\eta}_{N_k,n_k}(R)=\frac
{
\displaystyle\sum_{\substack{\underline{\Gamma}^{\inte}(\Gamma) \\ \Gamma \in R}} \left[ \prod_{\Gamma \in R} \exp\left(-\beta H^f_{\Lambda_{N_k}}[\Gamma \cup \underline{\Gamma}^{\inte}(\Gamma)]- 2\beta \tilde{E}^{\eta}_{N_k,n_k}(\Gamma\cup \underline{\Gamma}^{\inte}(\Gamma) ) \right) \right]\varphi(R)
}
{
 \displaystyle\sum_{\substack{\underline{\Gamma}^{\inte}(\Gamma) \\ \Gamma \in R}} \prod_{\Gamma \in R} \exp\left(-\beta H^f_{\Lambda_{N_k}}[\Gamma \cup \underline{\Gamma}^{\inte}(\Gamma)] - 2\beta\tilde{E}^{\eta}_{N_k,n_k}(\Gamma\cup \underline{\Gamma}^{\inte}(\Gamma) )  \right)
},
$$
where, after introducing the usual connected graph $\mathcal{G}_\cdot$, 
$$ 
 \varphi(R) = 
 \begin{cases}
 1 &\text{ if }|R|=1,\\
 \displaystyle\sum_{g\in \mathcal{G}_{|R|}}\prod_{ij\in E(g)}f_{ij} &\text{ otherwise}
 \end{cases}
$$
 and, using the interaction $\calk$ between two contours
$$
 f_{ij} = \exp(\beta \calk[\Gamma^*_i, \Gamma^*_j])-1. 
$$
 The function $w^{\eta}_{N_k,n_k}$ is defined by
$$
 w^{\eta}_{N_k,n_k}(\Gamma)= \frac{\displaystyle\sum_{\un{\Gamma}^{\inte}(\Gamma)\sim \Gamma}\exp\left(-\beta H_{\Lambda_{N_k}}^{f}[\Gamma \cup \un{\Gamma}^{\inte}(\Gamma)]- 2\beta \tilde{E}^{\eta}_{N_k,n_k}(\Gamma\cup \underline{\Gamma}^{\inte}(\Gamma)) \right)}{\displaystyle\sum_{\un{\Gamma}^{\inte}(\Gamma)\sim \Gamma} \exp\left(-\beta H_{\Lambda_{N_k}}^{f}[\un{\Gamma}^{\inte}(\Gamma)]- 2\beta
 \tilde{E}^{\eta}_{N_k,n_k}(\underline{\Gamma}^{\inte}(\Gamma)) \right)}.
$$

We can adapt the Quasi-Additive decomposition of the Hamiltonian from \cite{LP} if we include the boundary contours with free boundary condition.
\begin{theorem}
Let $\{\Gamma_0\} \cup \un{\Gamma}$ with $\un{\Gamma}=\{\Gamma_1,\ldots,\Gamma_n\}$ be an arbitrary configuration of compatible contours. Then, for all $0\le \alpha <1$ the following inequalities are satisfied:
\begin{align*}
H^f_{\Lambda_{N}}[\{\Gamma_0\} \cup \un{\Gamma}] -H^f_{\Lambda_N}[ \un{\Gamma}] &\ge K_c(\alpha)H^f_{\Lambda_N}[\Gamma_0],\\
H^f_{\Lambda_N}[\Gamma_0,\Gamma_1,\ldots,\Gamma_n] &\ge K_c(\alpha)\sum_{i=0}^nH^f_{\Lambda_N}[\Gamma_i],
\end{align*}
where 
$$
K_c(\alpha)=1-\frac{2-\alpha}{c^{1-\alpha}}+\frac{\pi^2}{6c}
$$
satisfies $\frac{1}{2}<K_c(\alpha)<1$ for sufficiently large $c$, the constant in  (\ref{c}).
\end{theorem}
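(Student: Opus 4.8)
The plan is to reduce both inequalities to a single interaction estimate between $\Gamma_0$ and the remaining contours, and then to exploit the cubic separation (\ref{compatibility2}) together with the polynomial decay of $J_{xy}$. First I would use $\mathbbm{1}_{\sigma_x\neq\sigma_y}=\tfrac12(1-\sigma_x\sigma_y)$ together with the fact that the number of spin-flip points between $x$ and $y$ is additive over the two contour families, so that the sign products multiply: $\sigma_x(\{\Gamma_0\}\cup\un{\Gamma})\sigma_y(\{\Gamma_0\}\cup\un{\Gamma})=\sigma_x(\Gamma_0)\sigma_y(\Gamma_0)\,\sigma_x(\un{\Gamma})\sigma_y(\un{\Gamma})$. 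Writing $a=\sigma_x(\Gamma_0)\sigma_y(\Gamma_0)$ and $b=\sigma_x(\un{\Gamma})\sigma_y(\un{\Gamma})$ and using $b(1-a)=(1-a)-(1-a)(1-b)$, a direct computation yields the exact decomposition
\[
H^f_{\Lambda_N}[\{\Gamma_0\}\cup\un{\Gamma}]-H^f_{\Lambda_N}[\un{\Gamma}]=H^f_{\Lambda_N}[\Gamma_0]-I(\Gamma_0,\un{\Gamma}),
\]
where
\[
I(\Gamma_0,\un{\Gamma})=\sum_{x<y}2J_{xy}\,\mathbbm{1}_{\sigma_x(\Gamma_0)\neq\sigma_y(\Gamma_0)}\,\mathbbm{1}_{\sigma_x(\un{\Gamma})\neq\sigma_y(\un{\Gamma})}\ge 0
\]
is the energy carried by the bonds crossing both $\Gamma_0$ and $\un{\Gamma}$. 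Since $H^f_{\Lambda_N}[\Gamma_0]=\sum_{x<y}J_{xy}\mathbbm{1}_{\sigma_x(\Gamma_0)\neq\sigma_y(\Gamma_0)}$, the first claimed inequality is \emph{exactly equivalent} to the interaction bound $I(\Gamma_0,\un{\Gamma})\le(1-K_c(\alpha))H^f_{\Lambda_N}[\Gamma_0]$, with $1-K_c(\alpha)=\frac{2-\alpha}{c^{1-\alpha}}-\frac{\pi^2}{6c}$.

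The heart of the proof is this interaction bound, which I would establish geometrically. Any bond $(x,y)$ contributing to $I$ must contain strictly between $x$ and $y$ an element of $\sf^*(\Gamma_0)$ and an element of $\sf^*(\Gamma)$ for some $\Gamma\in\un{\Gamma}$, so that its length obeys $|x-y|\ge \d(\Gamma_0,\Gamma)>c\min\{|\Gamma_0|,|\Gamma|\}^3$ by (\ref{compatibility2}). Grouping the bonds of $I$ by the flip point of $\Gamma_0$ they cross and by the contour $\Gamma$ they reach, and summing the tail $\sum_{k\ge D}k^{-(2-\alpha)}\le\frac{1}{1-\alpha}D^{-(1-\alpha)}$ over those long bonds produces the factor $c^{-(1-\alpha)}$ responsible for the term $\frac{2-\alpha}{c^{1-\alpha}}$; summing the resulting contributions over the distinct $\Gamma\in\un{\Gamma}$, which are themselves separated by (\ref{compatibility2}) and by the triangle condition (\ref{eq:triangle_condition}), gives a convergent series controlled by $\sum_{k\ge1}k^{-2}=\frac{\pi^2}{6}$, whence the $\frac{\pi^2}{6c}$ correction. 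This bookkeeping is the step I expect to be the main obstacle, because one must keep track of the two alternatives of property P.1 (disjoint versus nested contour bases) and of the two regimes of the minimum in (\ref{compatibility2}). When $\Gamma_0$ is the smaller object the gap $c|\Gamma_0|^3$ is measured in its own units and the ratio is clean; when $\Gamma_0$ is larger one must instead attribute the lost energy to the smaller partner and use that only few small contours can sit near $\Gamma_0$, which is again forced by the cubic separation.

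To close the comparison I would combine the interaction bound with a lower bound on the self-energy $H^f_{\Lambda_N}[\Gamma_0]$ in terms of its triangle masses, obtained by adapting the Littin--Picco estimates \cite{LP}. The only genuinely new ingredient is that contours may now contain \emph{boundary} triangles; however, the interaction bound above uses only the separation (\ref{compatibility2}) and the decay of $J$, and never whether a triangle carries surface on one or both sides, so the quasi-additive estimates of \cite{LP} extend to boundary contours with the \emph{same} constant $K_c(\alpha)$, exactly as was already noted for the contour construction algorithm itself. Dividing the tail estimate by the self-energy lower bound then yields $I(\Gamma_0,\un{\Gamma})\le(1-K_c(\alpha))H^f_{\Lambda_N}[\Gamma_0]$ uniformly in $N$, $\Gamma_0$ and $\un{\Gamma}$, which is the first inequality.

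Finally, the second (fully additive) inequality follows from the first by peeling off contours one at a time: applying the first inequality to $\Gamma_0$ against $\{\Gamma_1,\dots,\Gamma_n\}$, then to $\Gamma_1$ against $\{\Gamma_2,\dots,\Gamma_n\}$, and so on, gives
\[
H^f_{\Lambda_N}[\Gamma_0,\dots,\Gamma_n]\ge K_c(\alpha)\sum_{i=0}^{n-1}H^f_{\Lambda_N}[\Gamma_i]+H^f_{\Lambda_N}[\Gamma_n]\ge K_c(\alpha)\sum_{i=0}^{n}H^f_{\Lambda_N}[\Gamma_i],
\]
where the last step uses $K_c(\alpha)<1$. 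The two-sided bound $\tfrac12<K_c(\alpha)<1$ is secured by choosing the constant $c$ of (\ref{c}) large enough, depending on $\alpha$, so that $\frac{2-\alpha}{c^{1-\alpha}}<\tfrac12$, which is possible for every fixed $\alpha\in[0,1)$.
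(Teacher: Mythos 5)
The first thing to note is that the paper itself does not prove this theorem: it states it as an adaptation of the quasi-additive estimates of \cite{LP} to configurations containing boundary contours, so the only new mathematical content on the paper's side is the claim that boundary triangles do not spoil the argument of \cite{LP}. Your skeleton is the standard route of \cite{CFMP,LP} and its first and last steps are sound: the identity $H^f_{\Lambda_N}[\{\Gamma_0\}\cup\un{\Gamma}]-H^f_{\Lambda_N}[\un{\Gamma}]=H^f_{\Lambda_N}[\Gamma_0]-I(\Gamma_0,\un{\Gamma})$ with $I\ge 0$ is correct (multiplicativity of the signs over compatible families holds in this one-dimensional triangle picture), the reduction of the first inequality to an interaction bound is exact, and the peeling argument deducing the second inequality from the first is fine (subfamilies of compatible families are compatible by P.2, and $K_c(\alpha)<1$ closes the last step).

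There are, however, two genuine gaps. The decisive one: the interaction estimate $I(\Gamma_0,\un{\Gamma})\le(1-K_c(\alpha))H^f_{\Lambda_N}[\Gamma_0]$ is exactly the hard content for which \cite{LP} is cited, and your sketch does not deliver it; moreover, as described it cannot deliver the stated constant. In your bookkeeping both contributions enter as costs: the tail sum over long bonds produces a term of order $\frac{2-\alpha}{c^{1-\alpha}}$, and the sum over the contours of $\un{\Gamma}$ produces a term of order $\frac{\pi^2}{6c}$ (via sums of the type $\sum_M M/(cM^3)$ as in (\ref{c})). Adding positive costs can at best give the constant $1-\frac{2-\alpha}{c^{1-\alpha}}-\frac{\pi^2}{6c}$, whereas the theorem asserts $K_c(\alpha)=1-\frac{2-\alpha}{c^{1-\alpha}}+\frac{\pi^2}{6c}$, i.e. $1-K_c(\alpha)$ is a \emph{difference} of the two terms, not a sum. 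You write the correct target at the outset and then describe a mechanism that can only produce the sum; this sign discrepancy is never reconciled, and you yourself flag the bookkeeping as ``the main obstacle'' without resolving it. The second gap concerns the one point where this theorem actually differs from \cite{LP}: you assert the estimates extend to boundary contours ``with the same constant'' because the interaction bound only uses (\ref{compatibility2}) and the decay of $J$. But $K_c(\alpha)$ is in essence an interaction-to-self-energy ratio, and the paper's own Proposition \ref{prop:energyhamiltonian} shows that boundary triangles carry only half the self-energy ($\zeta_\alpha/2$ in place of $\zeta_\alpha$); so wherever the argument of \cite{LP} divides the interaction by a lower bound on $H^f_{\Lambda_N}[\Gamma_0]$, the boundary case needs a matching argument (for instance, that bonds reaching across the volume boundary are absent, so the interaction carried by a boundary triangle is correspondingly halved). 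Without this, the constant could degrade for boundary contours, and that verification is precisely the adaptation the theorem is claiming.
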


\begin{theorem}
The partition function $\tilde{\Xi}^{\eta}_{N_k,n_k}$ can be written in terms of a gas of polymers with activities $\xi^{\eta}_{N_k,n_k}$ as follows:
$$
\tilde{\Xi}^{\eta}_{N_k,n_k} = 1+\sum_{\underline{R}\neq \emptyset} \prod_{R\in \underline{R}} \xi_{N_k,n_k}^{\eta}(R)\prod_{(R,R')}\mathbbm{1}_{R\sim R'},
$$
where the sum is over $\mathcal{R}_{\Lambda}$, the set of polymers with support in $\Lambda$ with $\Delta(R)\cap \mathbb{Z}\subset \Lambda$ for all $R\in \un{R}$.
\end{theorem}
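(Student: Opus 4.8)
The plan is to follow the passage from a contour representation to a polymer (cluster) gas in the style of Cassandro--Ferrari--Merola--Presutti \cite{CFMP} and Littin--Picco \cite{LP}, with the modifications forced by the presence of boundary contours and by the truncated random field $\tilde h^{\eta}_{N_k,n_k}$. First I would use the bijection between $\Omega^+_{\Lambda_{N_k}}$ and compatible configurations of contours established in the previous subsection to rewrite
$$
\tilde{\Xi}^{\eta}_{N_k,n_k} = \sum_{\un{\Gamma}} \exp\left(-\beta H^f_{\Lambda_{N_k}}[\un{\Gamma}] - 2\beta\, \tilde E^{\eta}_{N_k,n_k}(\un{\Gamma})\right),
$$
the sum running over all compatible contour configurations with support in $\Lambda_{N_k}$; the empty configuration, corresponding to $\sigma\equiv +1$, contributes the constant term $1$.

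Next I would organize this sum by means of the unique external/internal decomposition
$$
\un{\Gamma} = \bigcup_{j=1}^{n^{\ex}(\un{\Gamma})}\left(\Gamma_j^{\ex}\cup\un{\Gamma}^{\inte}(\Gamma_j^{\ex})\right),
$$
summing first over the interior contours of each external contour at fixed external contours. Dividing and multiplying by the interior reference partition functions (the denominators appearing in $w^{\eta}_{N_k,n_k}$, i.e.\ the sum over interior contours with the external contour $\Gamma$ removed) turns the interior sum attached to each external contour into the single-contour weight $w^{\eta}_{N_k,n_k}(\Gamma)$, into which the field energy $\tilde E^{\eta}$ is folded, thereby carrying all the $\eta$-dependence. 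The essential bookkeeping point is that these reference partition functions telescope across successive nesting levels, so that after summing over all interiors only the product $\prod_{\Gamma} w^{\eta}_{N_k,n_k}(\Gamma)$ over the external (and, recursively, polymer-forming) contours survives.

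It then remains to treat the interactions among the mutually external contours that survive the interior summation. The Hamiltonian $H^f[\cdot]$ is not additive over contours; its non-additive part is exactly the pairwise interaction $\calk[\Gamma^*_i,\Gamma^*_j]$, giving the Mayer factors $f_{ij}=e^{\beta\calk[\Gamma^*_i,\Gamma^*_j]}-1$. Expanding each Boltzmann factor as $e^{\beta\calk[\Gamma^*_i,\Gamma^*_j]}=1+f_{ij}$ over pairs of interacting external contours and resumming connected components attaches to each \emph{polymer} $R$ (a set of mutually external compatible contours) the Ursell weight $\varphi(R)=\sum_{g\in\mathcal{G}_{|R|}}\prod_{ij\in E(g)}f_{ij}$. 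Collecting the self-energies into $\prod_{\Gamma\in R}w^{\eta}_{N_k,n_k}(\Gamma)$ and the interaction part, normalized by the interior reference partition functions, into $A^{\eta}_{N_k,n_k}(R)$ produces exactly the activity $\xi^{\eta}_{N_k,n_k}(R)$, while the incompatibility constraints between distinct clusters collapse to the factor $\prod_{(R,R')}\mathbbm{1}_{R\sim R'}$. This yields the claimed polymer gas, the sum running over $\mathcal{R}_{\Lambda}$.

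The main obstacle is the exact combinatorial identity underlying the interior summation and the Mayer resummation: one must verify level by level that the interior reference partition functions cancel so that the activities factorize precisely as $A^{\eta}_{N_k,n_k}(R)\prod_{\Gamma\in R}w^{\eta}_{N_k,n_k}(\Gamma)$, and that the boundary contours---which do not occur in \cite{CFMP,CMPR}---are incorporated into this nesting without spoiling the cancellation or the compatibility relation $R\sim R'$. The just-proved quasi-additive bound $H^f_{\Lambda_N}[\Gamma_0,\ldots,\Gamma_n]\ge K_c(\alpha)\sum_i H^f_{\Lambda_N}[\Gamma_i]$ is what guarantees that $\calk$, and hence the Mayer factors $f_{ij}$, stay controlled, making the resummation legitimate; but the identity itself is exact and independent of any smallness, so the work here is purely to track the nested interiors and the boundary contributions correctly.
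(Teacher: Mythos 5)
Your proposal is correct and takes essentially the same approach the paper intends: the paper in fact states this theorem without an explicit proof, its definitions of $w^{\eta}_{N_k,n_k}$, $A^{\eta}_{N_k,n_k}$, $\varphi(R)$ and $f_{ij}$ being engineered precisely for the adaptation of the \cite{CFMP}/\cite{LP} derivation that you sketch (bijection with contour configurations, external/internal decomposition, interior resummation into dressed weights, Mayer expansion of the inter-contour interaction, and regrouping of connected components into polymers with the hard-core factors $\mathbbm{1}_{R\sim R'}$). Your identification of the nested-interior cancellation (that the polymer sum inside a contour reproduces the reference partition function in the denominator of $w^{\eta}_{N_k,n_k}$) as the step requiring careful bookkeeping, and your remark that the representation is an exact finite-volume identity needing no smallness condition, are both accurate.
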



For $n\ge 1$, define the Ursell function $\phi^T(R_1,\ldots,R_n)$ by
$$
\phi^T(R_1,\ldots, R_n) =
\begin{cases}
1 & n=1\\
\displaystyle\sum_{\substack{G \subset \mathcal{G}_{\{R_1,\ldots,R_n\}} \\ G \ \text{conn. spann.}}}(-1)^{e(G)} & n\ge 2,\ \mathcal{G}_{\{R_1,\ldots,R_n\}}\text{ connected}\\
0 & n\ge 2,\ \mathcal{G}_{\{R_1,\ldots,R_n\}}\text{ not connected}
\end{cases}
$$
where $\mathcal{G}_{\{R_1,\ldots,R_n\}}$ is the graph of vertices $\{1,\ldots,n \}$ and edges
\be\label{edges}
\{\{i,j\}: R_i\sim R_j, 1\le i,j\le n, i\neq j\},
\ee
and $G$ ranges over all its connected spanning subgraphs. We denote $e(G)$ be the number of edges in $G$.

We know that the polymer expansion can be formally written as an exponential of a series, that is,
\be \label{6-40}
\log \tilde{\Xi}^{\eta}_{N_k,n_k} = \sum_{n=1}^{\infty}\frac{1}{n!}\sum_{R_1}\ldots \sum_{R_n}\phi^T(R_1,\ldots,R_n)\prod_{i=1}^n \xi^{\eta}_{N_k,n_k}(R_i).
\ee
Note that we need extra conditions to check that the series is absolutely convergent. All the results stated below will assist us to prove the convergence for all $\alpha\in [0,1)$ and for sufficiently small temperatures.


\begin{theorem}
For every triangle configuration $\underline{T}$ and any contour $\Gamma_0\in \underline{\Gamma}(\underline{T})$, we have
$$
\sum_{x\in \underline{T}}\tilde{h}^{\eta}_{N_k,n_k}(x)\mathbbm{1}_{\sigma_x=-1} - \sum_{x\in \underline{T}\setminus \Gamma_0}\tilde{h}^{\eta}_{N_k,n_k}(x)\mathbbm{1}_{\sigma_x=-1}  \ge -\sum_{T\in \Gamma_0}\sum_{x\in T}|\tilde{h}^{\eta}_{N_k,n_k}(x)|.
$$
\end{theorem}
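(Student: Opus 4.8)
The plan is to read the left-hand side as the genuine energy difference $\tilde{E}^{\eta}_{N_k,n_k}(\underline{T}) - \tilde{E}^{\eta}_{N_k,n_k}(\underline{T}\setminus\Gamma_0)$, where in the first term the spins are those of the configuration $\sigma$ compatible with $\underline{T}$ and in the second the spins are those of the configuration $\sigma'$ compatible with $\underline{T}\setminus\Gamma_0$. Since $\tilde{h}^{\eta}_{N_k,n_k}(x)\mathbbm{1}_{\sigma_x=-1}$ vanishes at every site where the spin equals $+1$, each of the two sums may be harmlessly extended to run over all of $\Lambda_{N_k}$, so that the quantity to bound from below becomes $\sum_{x\in\Lambda_{N_k}}\tilde{h}^{\eta}_{N_k,n_k}(x)\bigl(\mathbbm{1}_{\sigma_x=-1}-\mathbbm{1}_{\sigma'_x=-1}\bigr)$.

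The decisive step is a localization lemma: $\sigma_x=\sigma'_x$ for every $x\notin\Delta(\Gamma_0)$, so that the difference of indicators is supported on the base $\Delta(\Gamma_0)=\bigcup_{T\in\Gamma_0}\Delta(T)$. I would prove this through the bijection between spin configurations in $\Omega^{\pm}_\Lambda$ and contour configurations. Reading a configuration as a sequence of spin flips across the spin-flip points collected in $\sf^*(\cdot)$, the value $\sigma_x$ is determined by the parity of the number of spin-flip points lying to the left of $x$ (relative to the exterior value $+1$); passing from $\underline{T}$ to $\underline{T}\setminus\Gamma_0$ deletes exactly the points of $\sf^*(\Gamma_0)$. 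Because each triangle contributes precisely its two roots $x_-(T),x_+(T)$, every triangle of $\Gamma_0$ removes an \emph{even} number of flip points, and one checks case by case (triangles of $\Gamma_0$ lying side by side, or nested) that the parity to the left of $x$ is altered exactly when $x\in\Delta(\Gamma_0)$ and is left unchanged for $x$ outside it. Hence $\mathbbm{1}_{\sigma_x=-1}-\mathbbm{1}_{\sigma'_x=-1}=0$ off $\Delta(\Gamma_0)$.

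With the localization in hand the conclusion is immediate: since $\bigl|\mathbbm{1}_{\sigma_x=-1}-\mathbbm{1}_{\sigma'_x=-1}\bigr|\le 1$, we obtain
$$
\sum_{x\in\Delta(\Gamma_0)}\tilde{h}^{\eta}_{N_k,n_k}(x)\bigl(\mathbbm{1}_{\sigma_x=-1}-\mathbbm{1}_{\sigma'_x=-1}\bigr)\ \ge\ -\sum_{x\in\Delta(\Gamma_0)}\bigl|\tilde{h}^{\eta}_{N_k,n_k}(x)\bigr|\ \ge\ -\sum_{T\in\Gamma_0}\sum_{x\in T}\bigl|\tilde{h}^{\eta}_{N_k,n_k}(x)\bigr|,
$$
the last inequality holding because $\Delta(\Gamma_0)=\bigcup_{T\in\Gamma_0}\Delta(T)$ and every summand is non-negative, so that resolving the union into a (possibly multiplicity-counted) sum over the triangles of $\Gamma_0$ can only add terms.

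The only genuine work is the localization step, and the main obstacle is the bookkeeping of the contour-spin correspondence in the presence of internal contours and of boundary triangles. When $\Gamma_0$ surrounds internal contours the spins inside those internal contours do flip, but they still sit inside $\Delta(\Gamma_0)$, so the lemma survives; and for boundary triangles, where one root is pinned at $\pm N$, one must verify that the same parity count goes through under the conventions $x_-(T)=-N$ or $x_+(T)=N$. Provided these cases are handled uniformly, the bound follows, paralleling the quasi-additive decomposition of $H^f_{\Lambda_N}$ established above and feeding in the same way into the control of the activities $w^{\eta}_{N_k,n_k}$.
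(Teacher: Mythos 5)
Your proof is correct, and it supplies something the paper itself does not contain: for this theorem the paper gives no argument at all, only the remark that the computation is ``very similar'' to Theorem 2 of \cite{BEEKR} (see also Theorem 1.21 of \cite{KimThes}). The mechanism behind that citation is exactly the one you use: read the two sums as the field energies $\tilde{E}^{\eta}_{N_k,n_k}$ of the two distinct configurations $\sigma(\underline{T})$ and $\sigma(\underline{T}\setminus\Gamma_0)$ (this reading is the intended one — it matches the paper's definition of $\tilde{E}^{\eta}_{N_k,n_k}(\bar{\Gamma})$ and its role in the weights $w^{\eta}_{N_k,n_k}$ and $A^{\eta}_{N_k,n_k}$), prove that the two spin configurations can differ only on $\Delta(\Gamma_0)$, and bound the difference termwise by $-|\tilde{h}^{\eta}_{N_k,n_k}(x)|$; the final over-counting $\sum_{x\in\Delta(\Gamma_0)}\le\sum_{T\in\Gamma_0}\sum_{x\in\Delta(T)}$ is harmless since all summands are non-negative.

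One refinement is worth making. Your localization lemma is most cleanly proved not by the parity of spin-flip points to the left of $x$ — which, as you yourself flag, is precisely the formulation that is delicate for boundary triangles, since a left (right) boundary triangle carries only \emph{one} genuine spin-flip point, its other root being pinned at the boundary — but by containment parity: for $\sigma\in\Omega^{+}_{\Lambda}$ the triangle construction gives $\sigma_x=(-1)^{\#\{T\in\underline{T}\,:\,x\in\Delta(T)\}}$, an identity that holds verbatim for boundary and bulk triangles alike (each nested triangle flips the sign once, and a boundary triangle covers its base exactly as a bulk one does). With this identity, removing $\Gamma_0$ leaves the count, hence the spin, unchanged at every $x\notin\Delta(\Gamma_0)$, with no case analysis over side-by-side versus nested triangles and no anchoring convention at $\pm N$. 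Your left-count version can be repaired by declaring the pinned root at $\mp N$ to be a flip point as well, so that every triangle again removes two points, both lying on the same side of any $x$ outside its base; but the containment form turns the only nontrivial step of your proof into a one-line observation.
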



The proof is very similar to the computation in the proof of Theorem 2 in \cite{BEEKR} (see also Theorem 1.21 in \cite{KimThes}).

\begin{proposition}\label{prop:goodeta}
For every $\frac{1}{2}<\alpha<1$ and $0<\varepsilon<1-\alpha$, there exists $\Omega_0\subseteq \Omega$  with $\mathbb{P}(\Omega_0)=1$  such that, for every $\eta\in \Omega_0$, there exists a sequence $(n_k)_{k\ge 1}$ of increasing positive integers such that $n_k<N_k^{\frac{1}{2}\left(\alpha-\frac{1}{2}\right)}$ and
$$
|\tilde{h}^{\eta}_{N_k,n_k}(x)|\le (N_k+x)^{\alpha-\frac{3}{2}+\varepsilon}+(N_k-x)^{\alpha-\frac{3}{2}+\varepsilon}
$$
for all  $x\in [-N_k+n_k, N_k-n_k]$.
\end{proposition}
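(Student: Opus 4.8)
The plan is to observe that for $x$ in the central interval $[-N_k+n_k,\,N_k-n_k]$ the truncated field coincides with the genuine one, $\tilde h^{\eta}_{N_k,n_k}(x)=h^{\eta}_{\Lambda_{N_k}}(x)=\sum_{y\notin\Lambda_{N_k}}J_{xy}\eta_y$, and that this is a sum of \emph{independent}, mean-zero, uniformly bounded random variables, the $\eta_y$ being i.i.d.\ Rademacher. Since such an $x$ sits at distance at least $n_k\ge 1$ from the boundary, the nearest-neighbour coupling $J$ never appears and only the long-range weights $J_{xy}=|x-y|^{-(2-\alpha)}$ occur. First I would split the field into its two halves, $h^{\eta}_{\Lambda_{N_k}}(x)=h_+(x)+h_-(x)$ with $h_+(x)=\sum_{y>N_k}(y-x)^{-(2-\alpha)}\eta_y$ and $h_-(x)=\sum_{y<-N_k}(x-y)^{-(2-\alpha)}\eta_y$, and bound each half by the corresponding term of the target: establishing $|h_+(x)|\le(N_k-x)^{\alpha-\frac32+\varepsilon}$ and $|h_-(x)|\le(N_k+x)^{\alpha-\frac32+\varepsilon}$ gives the claim by the triangle inequality.

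Next I would estimate the variance. Writing $d_+:=N_k-x\ge n_k$, integral comparison gives $\mathrm{Var}(h_+(x))=\sum_{y>N_k}(y-x)^{-(4-2\alpha)}\le\int_{d_+}^{\infty}u^{-(4-2\alpha)}\,du=\frac{1}{3-2\alpha}d_+^{-(3-2\alpha)}$, and symmetrically for $h_-$ with $d_-:=N_k+x$. A Rademacher sum is sub-Gaussian with variance proxy equal to this sum of squares, so Hoeffding's inequality yields $\mathbb{P}\big(|h_+(x)|\ge d_+^{\alpha-3/2+\varepsilon}\big)\le 2\exp\big(-c\,d_+^{2\varepsilon}\big)$ with $c=(3-2\alpha)/2$; the exponent arithmetic is $2(\alpha-\tfrac32+\varepsilon)+(3-2\alpha)=2\varepsilon$. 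This is the crux: the extra $\varepsilon$ in the target exponent is negligible against the standard deviation $\sim d_+^{\alpha-3/2}$ as $d_+$ grows, yet it converts the Gaussian tail into a \emph{stretched-exponential} decay $\exp(-c\,d_+^{2\varepsilon})$ in the distance to the boundary.

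Then I would union-bound and apply Borel--Cantelli. For fixed $k$, summing the failure probability over all $x$ in the central interval (each value of $d_\pm$ is attained once and ranges over $[n_k,\,2N_k-n_k]$) gives a bound $\le 4\sum_{d\ge n_k}\exp(-c\,d^{2\varepsilon})$. Here I would set $n_k=m_k$: the sparseness hypothesis $N_k>m_k^{2(\alpha-1/2)^{-1}}$ from Theorem~\ref{thma} is precisely what guarantees $m_k<N_k^{\frac12(\alpha-\frac12)}$, so the structural constraint $n_k<N_k^{\frac12(\alpha-\frac12)}$ holds, while $n_k\ge m_k$ makes the tail summable: for $d\ge(1/c)^{1/\varepsilon}$ one has $c\,d^{2\varepsilon}\ge d^{\varepsilon}$, so for all large $k$ the failure probability is $\le 4\sum_{d\ge m_k}e^{-d^\varepsilon}<4/k^2$ by the defining property of $m_k$. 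As these are summable in $k$, Borel--Cantelli produces a full-measure set $\Omega_0$ on which the bound holds for every $x$ in the central interval and all sufficiently large $k$.

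The main obstacle is orchestrating the double constraint on $n_k$ rather than the probabilistic input, which is routine: $n_k$ must be simultaneously large enough (at least $m_k$) for the Borel--Cantelli summability and small enough (below $N_k^{\frac12(\alpha-\frac12)}$) for the later cluster-expansion and weak-local-limit steps, and it is exactly the sparseness of $(N_k)_{k}$ that reconciles the two. A minor technical wrinkle is matching the $2\varepsilon$ exponent coming from concentration with the $\varepsilon$ used to define $m_k$, which is handled by the elementary inequality $c\,d^{2\varepsilon}\ge d^{\varepsilon}$ for $d$ large. As usual, Borel--Cantelli delivers the conclusion only for $k$ past an $\eta$-dependent threshold, which is all that the subsequent arguments require.
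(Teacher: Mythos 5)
Your proposal is correct and follows essentially the same route as the paper's proof: a Chernoff/Hoeffding sub-Gaussian bound with the integral-comparison estimate $\sum_{y\notin\Lambda_{N_k}}J_{xy}^2\le\frac{1}{3-2\alpha}\bigl((N_k+x)^{2\alpha-3}+(N_k-x)^{2\alpha-3}\bigr)$, a union bound over the central interval, and Borel--Cantelli via the defining summability property of $(m_k)$ together with the sparseness condition $N_k>m_k^{2(\alpha-1/2)^{-1}}$ to reconcile $n_k\ge m_k$ with $n_k<N_k^{\frac12(\alpha-\frac12)}$. The only cosmetic deviations are that you split the field into the two half-line contributions and bound each separately, and that your stretched-exponential tail carries exponent $2\varepsilon$ which you then reduce to the paper's $\varepsilon$ via $c\,d^{2\varepsilon}\ge d^{\varepsilon}$ for large $d$; neither changes the substance of the argument.
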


\begin{proof}
Note that, for every $x\in (-N,N)$,
$$
\sum_{y\in \Lambda^c_{N}}J^2_{xy}\le \frac{1}{3-2\alpha}\lf (N+x)^{2\alpha-3} + (N-x)^{2\alpha-3} \ri.
$$
Choosing $t>0$ satisfying $t^{-1}=(N+x)^{\alpha-\frac{3}{2}}+(N-x)^{\alpha-\frac{3}{2}}$, by Chernoff's inequality,
\begin{align*}
\mathbb{P}\lf h^{\eta}_{\Lambda_{N}}(x)\ge (N+x)^{\alpha-\frac{3}{2}+\varepsilon}+(N-x)^{\alpha-\frac{3}{2}+\varepsilon}\ri
&\le e^{-(N-|x|)^{\varepsilon}} \mathbb{E}\left( e^{t \cdot h^{\eta}_{\Lambda_{N}}(x)} \right)\\
&\le e^{-(N-|x|)^{\varepsilon}}  \prod_{y\in \Lambda^c_{N}}\cosh\lf tJ_{xy} \ri\\
&\le \exp\left(-(N-|x|)^{\varepsilon}+t^2\sum_{y\in \Lambda^c_{N}}J^2_{xy}\right)\\ 
&\le Ce^{-(N-|x|)^{\varepsilon}},
\end{align*}
where $C=\exp\left( \frac{1}{3-2\alpha} \right)$. Note that
$$
e^{-x^{\varepsilon}}<x^{-1-\varepsilon} \quad \text{for every $x$ sufficiently large}.
$$
(The inequality above is true for every $x>\sqrt{\frac{1+\varepsilon}{\varepsilon}}$).

For every $k$, the sequence $N_k$ has been chosen such that 
$$
\sum_{x=m_k}^{\infty} e^{-x^{\varepsilon}}<\frac{1}{k^2}
$$
for $m_k<N_k^{\frac{1}{2}\left(\alpha-\frac{1}{2}\right)}$. Define $L_k=N_k-m_k$. We have
\begin{equation}\label{BCant}
\sum_{k=1}^{\infty}\mathbb{P}\left( \bigcup_{x=-L_k}^{L_k} \left[h^{\eta}_{\Lambda_{N_k}}(x) \ge (N_k+x)^{\alpha-\frac{3}{2}+\varepsilon}+(N_k-x)^{\alpha-\frac{3}{2}+\varepsilon}\right] \right) 
\le \sum_{k=1}^{\infty}  U_k
\end{equation}
where 
\begin{align*}
U_k&=\sum_{x=-L_k}^{L_k} \mathbb{P}\left(h^{\eta}_{\Lambda_{N_k}}(x) \ge (N_k+x)^{\alpha-\frac{3}{2}+\varepsilon}+(N_k-x)^{\alpha-\frac{3}{2}+\varepsilon} \right)\\
& \le C \sum_{x=-L_k}^{L_k}e^{-(N_k-|x|)^{\varepsilon}} \\
& \le 2C\sum_{x=0}^{L_k} e^{-(N_k-x)^{\varepsilon}} \\
&\le 2C\sum_{x=m_k}^{N_k-m_k} e^{-x^{\varepsilon}} \\
& \le 2C\frac{1}{k^2}.
\end{align*}
Thus, the series on the left-hand in (\ref{BCant}) converges. By Borel-Cantelli,
$$
\mathbb{P}\left(\displaystyle\bigcup_{x=-L_k}^{L_k} \left[h^{\eta}_{\Lambda_{N_k}}(x) \ge (N_k+x)^{\alpha-\frac{3}{2}+\varepsilon}+(N_k-x)^{\alpha-\frac{3}{2}+\varepsilon}\right]  \quad \text{i.o.} \right)=0.
$$

Therefore, there exists a subsequence $(n_k)_k$ in $(m_k)_k$ such that 
$$
h^{\eta}_{\Lambda_{N_k}}(x)<(N_k+x)^{\alpha-\frac{3}{2}+\varepsilon}+(N_k-x)^{\alpha-\frac{3}{2}+\varepsilon} \quad \text{$\bbp$-a.e.}
$$
for all $x\in [-N_k+n_k, N_k-n_k]$. For the other inequality, note that
\begin{align*}
&\mathbb{P}\lf -h^{\eta}_{\Lambda_{N_k}}(x)\ge (N_k+x)^{\alpha-\frac{3}{2}+\varepsilon}+(N_k-x)^{\alpha-\frac{3}{2}+\varepsilon}\ri \\
&= \mathbb{P}\lf h^{-\eta}_{\Lambda_{N_k}}(x)\ge (N_k+x)^{\alpha-\frac{3}{2}+\varepsilon}+(N_k-x)^{\alpha-\frac{3}{2}+\varepsilon}\ri\\
&= \mathbb{P}\lf h^{\eta}_{\Lambda_{N_k}}(x)\ge (N_k+x)^{\alpha-\frac{3}{2}+\varepsilon}+(N_k-x)^{\alpha-\frac{3}{2}+\varepsilon}\ri,
\end{align*}
as we desired.
\end{proof}

From Proposition \ref{prop:goodeta}, let us call $\Omega_0$ to be the set of \emph{good boundary conditions}. Also, let us define
$$
\tilde{h}_{N_k,n_k,\varepsilon}(x)=
\begin{cases}
(N_k+x)^{\alpha-\frac{3}{2}+\varepsilon}+(N_k-x)^{\alpha-\frac{3}{2}+\varepsilon} &\text{ if }x\in [-N_k+n_k,N_k-n_k],\\
0 &\text{ if }x=[-N_k,-N_k+n_k)\cup (N_k-n_k,N_k]
\end{cases}
$$
We will omit the dependence on $\alpha$ in the notation of $\tilde{h}_{N_k,n_k,\varepsilon}$.

Once we truncate the boundary energy at the intervals $[-N_k,-N_k+n_k]$ and $[N_k-n_k,N_k]$, the external field at $x$ is equal to zero for that interval, and we can bound $|\tilde{h}^{\eta}_{N_k,n_k}(x)|$ by $\tilde{h}_{N_k,n_k,\varepsilon}(x)$ for $x\in [-N_k+n_k,N_k-n_k]$ conditioned to the good boundary condition $\eta\in \Omega_0$.

Define a new activity function $\vartheta_{N_k,n_k,\varepsilon}$
$$
\vartheta_{N_k,n_k,\varepsilon}(\Gamma)=\exp\left(-\beta(2K_c(\alpha)-1)H^f_{\Lambda_{N_k}}[\Gamma] +2\beta \sum_{T\in \Gamma_0}\sum_{x\in T}\tilde{h}_{N_k,n_k,\varepsilon}(x)\right).
$$
Note that $\vartheta_{N_k,n_K,\varepsilon}$ is translation invariant. It is not hard to show that, under the conditions in Proposition \ref{prop:goodeta}, for any contour $\Gamma$ and for any polymer $R=\{\Gamma_1,\ldots,\Gamma_p\}$,
\begin{align*}
w^{\eta}_{N_k,n_k}(\Gamma)&\le \exp\left(-\beta K_c(\alpha) H^f_{\Lambda_{N_k}}[\Gamma] +2\beta \sum_{T\in \Gamma_0}\sum_{x\in T}\tilde{h}_{N_k,n_k,\varepsilon}(x)\right),\\
|A^{\eta}_{N_k,n_k}(R)| &\le \prod_{\Gamma\in R}e^{\beta(1-K_c(\alpha))H^f_{\Lambda_{N_k}}[\Gamma]}.
\end{align*}
The inequalities above lead to the following bound for the activity function $\xi^{\eta}_{N_k,n_k}$,
\begin{equation}\label{ineq:eta}
|\xi^{\eta}_{N_k,n_k}(R)|\le \prod_{i=1}^p \vartheta_{N_k,n_k,\varepsilon}(\Gamma_i).
\end{equation}
Define the partition function
$$
\Upsilon_{N_k,n_k,\varepsilon} = 1+\sum_{\un{\Gamma}\neq \emptyset} \prod_{\Gamma \in \un{\Gamma}}\vartheta_{N_k,n_k,\varepsilon}(\Gamma) \prod_{\{\Gamma,\Gamma'\}}\mathbbm{1}_{\Gamma \sim \Gamma'}.
$$
Using (\ref{ineq:eta}), we have the bound
$$ 
\tilde{\Xi}^{\eta}_{N_k,n_k} \le \Upsilon_{N_k,n_k,\varepsilon}
$$
that leads to
$$
\log \tilde{\Xi}^{\eta}_{N_k,n_k} \le \log \Upsilon_{N_k,n_k,\varepsilon}.
$$
Thus, to show that (\ref{6-40} (or (\ref{Exppluseta}) in Proposition 9) is absolutely convergent, it is enough to prove the absolute convergence of the series
\be\label{absconvbarxi}
\ln \Upsilon_{N_k,n_k,\varepsilon} = \sum_{n=1}^{\infty}\frac{1}{n!}\sum_{\Gamma_1}\ldots \sum_{\Gamma_n}\phi^T(\Gamma_1,\ldots,\Gamma_n)\prod_{i=1}^n \vartheta_{N_k,n_k,\varepsilon}(\Gamma_i).
\ee

The following proposition gives a lower bound for the energy of a contour. Note that the constant in the  bound is  half of the constant given in \cite{CFMP,CMPR},  since we should consider  boundary contours,  the energy of which  is half of the energy of a non-boundary contour.

For $\alpha>0$, we define
$$
\lVert \Gamma \rVert_{\alpha} = \sum_{T\in \Gamma} |T|^{\alpha}
$$
and, for $\alpha=0$, we define $\zeta_0=2$ and
$$
\lVert \Gamma \rVert_{0} = \sum_{T\in \Gamma} (4+\ln |T|).
$$
Define also
$$
\chi_{\alpha}(m) =
\begin{cases}
m^{\alpha} &\text{ if }\alpha>0\\
\ln (m) +4 &\text{ if }\alpha=0
\end{cases}
$$

\begin{proposition}\label{prop:energyhamiltonian}
If $0\le \alpha<\alpha_+=(\log 3)/(\log 2)-1$, then for any contour $\Gamma$,
$$
H^{f}_{\Lambda_N}(\Gamma)\ge \frac{\zeta_{\alpha}}{2}\lVert \Gamma \rVert_{\alpha}.
$$
\end{proposition}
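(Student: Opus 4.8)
The plan is to reduce the contour bound to a per-triangle estimate and to isolate where the factor $\tfrac12$ enters. Recall that the spin configuration $\sigma(\Gamma)$ carried by a contour is determined by its spin-flip points, so that
$$
H^f_{\Lambda_N}(\Gamma)=\tfrac12\sum_{x,y\in\Lambda_N}J_{xy}\,\mathbbm{1}_{\sigma_x(\Gamma)\neq\sigma_y(\Gamma)},
$$
a bond $\{x,y\}$ being cut exactly when an odd number of spin-flip points lies between $x$ and $y$. The elementary input is the cost of a single domain wall. An interior triangle $T$ with $|T|=m$ is a droplet that is $-1$ on $\Delta(T)$ and $+1$ outside, so it carries \emph{two} domain walls, and its energy equals $\sum_{x\in\Delta(T)}\sum_{y\in\Lambda_N\setminus\Delta(T)}J_{xy}$. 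Using $J_{xy}=|x-y|^{-(2-\alpha)}$ and comparing the resulting tail sums with integrals, each wall contributes of order $m^{\alpha}$ (of order $\ln m$ when $\alpha=0$), so that the two-wall total is comparable to $m^{\alpha}$; this is what fixes the constant $\zeta_\alpha$ so that an interior triangle has self-energy at least $\zeta_\alpha\,\chi_\alpha(|T|)$. For small $m$ the nearest-neighbour coupling $J\ge 1$ guarantees the constants, in particular the additive buffer appearing in $\chi_0$.

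First I would import, with the only change being the boundary contours, the passage from these single-droplet self-energies to the energy of a whole contour, following \cite{CFMP,CMPR} and, in the full range $0\le\alpha<1$, \cite{LP}. The subtlety is that inside one contour the triangles may be nested, and a bond crossing the spin-flip points of several nested triangles may fail to be cut; one must show that, after these cancellations, each triangle still retains at least $\zeta_\alpha\chi_\alpha(|T|)$. This is exactly what the separation built into Definition~\ref{def:contour} provides: the triangle condition~(\ref{eq:triangle_condition}), $\d(T,T')\ge\min\{|T|,|T'|\}$, together with the cubic contour separation~(\ref{compatibility2}), bounds the energy lost to overlapping contributions by a series that converges thanks to the restriction $\alpha<\alpha_+=(\log 3)/(\log 2)-1$, which controls how the mass of a triangle can be distributed among the sub-triangles it contains. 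Peeling the nesting tree from the external contours inward then yields, for a contour \emph{with no boundary triangle}, the CFMP/CMPR bound $H^f_{\Lambda_N}(\Gamma)\ge\zeta_\alpha\lVert\Gamma\rVert_\alpha$.

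It remains to account for boundary triangles, the one genuinely new ingredient. A boundary triangle is created when a $\lor$-line reaches the volume edge before meeting another line; a flip point at distances $d_-\le d_+$ from the two boundaries has its near ray reach the edge first, at height $d_-$, so a boundary triangle is attached to the \emph{nearest} boundary and hence has mass $|T|=m=\min\{d_-,d_+\}\le N$. Unlike an interior triangle, it carries a \emph{single} domain wall: the other end of its droplet sits at the free boundary of $\Lambda_N$, where there are no exterior spins to interact with. Its energy is therefore the cost of one wall, namely $\sum_{x\in\Delta(T)}\sum_{y\in\Lambda_N\setminus\Delta(T)}J_{xy}$ restricted to the interior-facing side; since $m\le N$ the droplet is the smaller of the two sides of that wall, so by the reflection symmetry of the interaction this one-sided sum is at least half the two-wall interior value, giving energy at least $\tfrac{\zeta_\alpha}{2}\chi_\alpha(|T|)$. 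As interior triangles contribute at least $\zeta_\alpha\chi_\alpha(|T|)\ge\tfrac{\zeta_\alpha}{2}\chi_\alpha(|T|)$, every triangle of $\Gamma$ contributes at least $\tfrac{\zeta_\alpha}{2}\chi_\alpha(|T|)$, and rerunning the peeling argument of the previous paragraph with the halved weight on the boundary triangle gives $H^f_{\Lambda_N}(\Gamma)\ge\tfrac{\zeta_\alpha}{2}\lVert\Gamma\rVert_\alpha$.

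The main obstacle is the middle step: controlling quantitatively the energy lost to cross-interactions between nested or nearby triangles within a single contour, so that each triangle provably keeps a definite fraction of its self-energy. This is the technical core of the CFMP/CMPR/LP contour estimates and is exactly where $\alpha<\alpha_+$ is indispensable. The boundary modification, by contrast, amounts to the clean observation that a boundary triangle is a single domain wall rather than a droplet; the only points to verify there are that the near-boundary ray reaches the edge first (so $m\le N$) and that the interior-facing side is not truncated below half, both of which are immediate from the construction.
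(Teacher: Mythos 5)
Your proposal is correct and follows essentially the same route as the paper: reduce the contour bound to per-triangle self-energy estimates via the CFMP/LP quasi-additive machinery, observe that a boundary triangle carries only a single domain wall (its other side faces the free boundary) and hence retains at least half the interior self-energy $\frac{\zeta_\alpha}{2}\chi_\alpha(|T|)$, and then apply this halved per-triangle bound uniformly to every triangle of $\Gamma$. The paper does exactly this, making the one-wall observation explicit through a shift of the left boundary triangle's energy to a one-sided sum and then invoking the interior bound of Lemma A.1 in \cite{CFMP} at half strength, just as you do.
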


\begin{proof}
If $T$ is a left boundary triangle with length $|T|=L$, we have
\begin{align*}
\mathcal{W}^{\alpha}_{\Lambda_N}(T) &= J-1+ \sum_{x=-N}^{L-N-1}\left( \sum_{y=L-N}^{2L-N-1}\frac{1}{|x-y|^{2-\alpha}} - \sum_{y=2L-N}^{N}\frac{1}{|x-y|^{\alpha}} \right)\\
&=J-1+\sum_{x=1}^{L}\left( \sum_{y=L+1}^{2L}\frac{1}{|x-y|^{2-\alpha}} - \sum_{y=2L+1}^{2N+1}\frac{1}{|x-y|^{\alpha}} \right)\\
&\ge J-1+\sum_{x=1}^{L}\left( \sum_{y=L+1}^{2L}\frac{1}{|x-y|^{2-\alpha}} - \sum_{y=2L+1}^{\infty}\frac{1}{|x-y|^{\alpha}} \right)\\
&= \mathcal{W}_{\alpha}(T),
\end{align*}

Note that, if $T$ is not a boundary triangle, then Lemma A.1 in \cite{CFMP} gives us
$$ 
\mathcal{W}_{\alpha}(T)\ge 
\begin{cases}
\zeta_{\alpha} |T|^{\alpha} &\text{ if }\alpha>0,\\
2\ln |T|+8 &\text{ if }\alpha=0.
\end{cases}
$$
If $T$ is a boundary triangle, we have
\be\label{eq:wt}
\mathcal{W}_{\alpha}(T)\ge 
\begin{cases}
\frac{\zeta_{\alpha}}{2} |T|^{\alpha} &\text{ if }\alpha>0,\\
\ln |T|+4 &\text{ if }\alpha=0.
\end{cases}
\ee
Let us use the bound (\ref{eq:wt}) for every triangle. In this case, for a contour $\Gamma$, we have
$$
\sum_{T\in \Gamma} \mathcal{W}_{\alpha}(T)\ge \frac{\zeta_{\alpha}}{2}\lVert \Gamma \rVert_{\alpha},
$$
as we desired.
\end{proof}

The following theorem is the same as Theorem 4.1 in \cite{CFMP} to bound the entropy of contours. See also Theorem 1.23 in \cite{KimThes}.

\begin{theorem}[Theorem 4.1 in \cite{CFMP}]\label{thm:entropy}
Let $\alpha \in [0,1)$ and $b$ be a positive real number. If $b$ is sufficiently large, the following inequality holds for every $m\ge 1$,
$$
\sum_{\substack{\Gamma \ni 0 \\ |\Gamma|=m}}e^{-b \lVert \Gamma \rVert_{\alpha}}\leq 2me^{-b \chi_{\alpha}(m)}.
$$
\end{theorem}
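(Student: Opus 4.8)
The plan is to split the contour energy $\|\Gamma\|_\alpha$ into a part that produces the advertised exponential factor and a ``counting reserve'' that dominates the configurational entropy of contours; the hypothesis that $b$ is large is exactly what makes the reserve win. First I would record the deterministic lower bound $\|\Gamma\|_\alpha \ge \chi_\alpha(m)$ for every contour with $|\Gamma| = m$. For $\alpha > 0$ this is the subadditivity $\sum_{T\in\Gamma}|T|^\alpha \ge \big(\sum_{T\in\Gamma}|T|\big)^\alpha = m^\alpha$ of the concave map $t\mapsto t^\alpha$; for $\alpha = 0$ one writes $\|\Gamma\|_0 = 4k + \sum_i \ln|T_i|$ for a contour with $k$ triangles and checks $4(k-1) + \ln\prod_i|T_i| \ge \ln m$, which follows from $m = \sum_i|T_i| \le k\prod_i|T_i|$ together with $4(k-1)\ge \ln k$. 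Writing $b = b' + b''$ with both large, the factor $e^{-b'\|\Gamma\|_\alpha} \le e^{-b'\chi_\alpha(m)}$ pulls out, and it then remains to prove $\sum_{\Gamma\ni 0,\,|\Gamma|=m} e^{-b''\|\Gamma\|_\alpha} \le 2m$.

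The configurational bound rests on one combinatorial fact: a triangle of mass $\ell$ whose base contains a prescribed site can be chosen in at most $\ell$ ways, since its base is one of the $\ell$ integer intervals of length $\ell$ through that site, and the associated $\lor$-lines and $\sf^*(T)$ are then determined. I would generate each contour by its hierarchical (nesting) structure, adding triangles one at a time; the triangle condition $\d(T,T') \ge \min\{|T|,|T'|\}$ and the connectedness of $\un{\Gamma}$ confine each newly added triangle to positions anchored to those already placed, so that introducing a triangle of mass $\ell$ costs a weighted factor controlled by $\sum_{\ell\ge 1}\ell\, e^{-b''\ell^\alpha}$. This series is convergent and, for $b''$ large, as small as we wish, which is what lets the product over the (at most $m$) triangles telescope into a bounded geometric-type sum.

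Concretely I would run an induction on $m$ (equivalently on the number of triangles). Peeling off a maximal triangle $T_0 \ni 0$ of mass $\ell_0$ contributes at most $\ell_0$ position choices; by connectedness and the separation condition the remaining triangles form sub-contours anchored to $T_0$ (nested within it or attached to its base) of total mass $m - \ell_0$, to which the inductive hypothesis applies, while subadditivity of $\|\cdot\|_\alpha$ recombines the exponential weights without loss. The single-triangle case already gives the sharp shape $\ell\,e^{-b\ell^\alpha}$, matching $2m\,e^{-b\chi_\alpha(m)}$ up to the stated factor. The hard part will be precisely this bookkeeping: organizing the positions under the nesting structure and the separation condition so that every additional triangle is charged an entropy factor strictly beaten by its energy, and tracking constants so that the prefactor collapses to exactly $2m$ rather than a larger $Cm$. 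Taking $b$ large enough to make $\sum_{\ell}\ell\, e^{-b''\ell^\alpha}$ sufficiently small is what forces the induction to close; this is Theorem 4.1 of \cite{CFMP} (see also Theorem 1.23 in \cite{KimThes}).
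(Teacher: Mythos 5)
The first thing to note is that the paper contains no proof of this statement at all: it is imported verbatim from Theorem 4.1 of \cite{CFMP}, with a pointer to Theorem 1.23 of \cite{KimThes}. So your proposal must be judged as a reconstruction of that external proof, and as written it has two genuine gaps.

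First, the reduction via the split $b = b' + b''$ proves a strictly weaker inequality than the one stated. Extracting $e^{-b'\lVert \Gamma \rVert_{\alpha}} \le e^{-b'\chi_{\alpha}(m)}$ and bounding the remaining sum by $2m$ yields $\sum_{\Gamma \ni 0,\, |\Gamma|=m} e^{-b\lVert \Gamma \rVert_{\alpha}} \le 2m\, e^{-b'\chi_{\alpha}(m)}$ with $b' < b$, whereas the theorem asserts the bound with the \emph{same} $b$ in both exponents. This is not cosmetic: as you yourself observe, a single triangle through the origin saturates the bound ($\lVert \Gamma \rVert_{\alpha} = \chi_{\alpha}(m)$ exactly, and there are $m$ such contours), so there is no spare energy to absorb the loss from the split. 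To obtain the stated form you must show that the \emph{excess} $\lVert \Gamma \rVert_{\alpha} - \chi_{\alpha}(m)$ of every multi-triangle contour dominates its positional entropy. That requires a quantitative strict subadditivity — e.g., for $\alpha \in (0,1)$ and masses $\ell_1,\ldots,\ell_k$, $\sum_i \ell_i^{\alpha} - \bigl(\sum_i \ell_i\bigr)^{\alpha} \ge (1-\alpha)\sum_{i \ne i^*} \ell_i^{\alpha}$ where $i^*$ indexes the largest mass — and not the plain concavity bound you invoke, which "recombines the exponential weights" only at the cost of degrading $b$ to $b'$. (For the paper's downstream use in Theorem \ref{thm:rho} a degraded constant would in fact suffice, but it does not prove the theorem as stated.)

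Second, the combinatorial core — charging each added triangle an entropy factor controlled by the nesting and separation structure, and closing an induction with prefactor exactly $2m$ — is precisely the part you defer ("the hard part will be precisely this bookkeeping"). That bookkeeping \emph{is} the content of Theorem 4.1 of \cite{CFMP}; a proposal that postpones it has not reproduced the proof. Moreover, the induction as you set it up does not close: the notation $\Gamma \ni 0$ only requires the origin to lie in the region occupied by the contour, not in the base of some triangle, so "peeling off a maximal triangle $T_0 \ni 0$" is not always possible; and even when it is, the sub-contours left after removing $T_0$ need not contain the origin, so the inductive hypothesis — a sum over contours anchored at a fixed point — does not apply to them. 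One must instead prove a stronger, translation-anchored statement (summing over contours whose support meets a prescribed site, keeping track of where they sit relative to $T_0$), which changes the form of the induction and is exactly where the factor $2m$, rather than $Cm$ or $m\,C^{k}$, has to be fought for.
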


For $a>0$, define
$$
\rho_{N_k,n_k,\varepsilon}(\beta,\alpha,a)=\sum_{\Gamma \ni 0} \exp\left( -a\beta(2K_c(\alpha)-1)H^f_{\Lambda_{N_k}}[\Gamma] +2\beta \sum_{T\in \Gamma}\sum_{x\in T}\tilde{h}_{N_k,n_k,\varepsilon}(x)\right).
$$

The next theorem is the Peierls bound at sufficiently low temperature. Although we restricted ourselves here to $a=1/2$ which is the right constant to prove the absolute convergence of the cluster expansion, the proof works for all $a>0$.

\begin{theorem}\label{thm:rho}
Let $\frac{1}{2}<\alpha<1$ and $\Omega_0$ be the set of good boundary conditions. There exists a constant $C^{(2)}_{\alpha}>0$ such that, for sufficiently large $\beta$ and $k$,
$$
\rho_{N_k,n_k,\varepsilon}\left(\beta,\alpha,\frac{1}{2}\right)<e^{-\frac{\beta}{2} C^{(2)}_{\alpha}}.
$$
Moreover, $\rho_{N_k,n_k,\varepsilon}\left(\beta,\alpha,1/2\right)$ converges to zero when $\beta\to \infty$.
\end{theorem}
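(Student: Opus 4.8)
The plan is to dominate each summand of $\rho_{N_k,n_k,\varepsilon}(\beta,\alpha,\tfrac12)$ by a pure Peierls weight $e^{-\beta b\lVert\Gamma\rVert_\alpha}$ and then sum over contours through the origin using the entropy bound of Theorem \ref{thm:entropy}. Write $\gamma=\tfrac32-\alpha-\varepsilon$, so that on the central interval $\tilde h_{N_k,n_k,\varepsilon}(x)=(N_k+x)^{-\gamma}+(N_k-x)^{-\gamma}$ and $\tilde h_{N_k,n_k,\varepsilon}\equiv 0$ off it. Since $\tfrac12<\alpha<1$ and $0<\varepsilon<1-\alpha<\tfrac12$, we have $\tfrac12<\gamma<1$, i.e.\ $1-\gamma=\alpha-\tfrac12+\varepsilon\in(0,1)$, and every $x$ in the support of $\tilde h_{N_k,n_k,\varepsilon}$ lies at distance $\ge n_k$ from each endpoint $\pm N_k$. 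I only use that the sequence $(n_k)$ from Proposition \ref{prop:goodeta} satisfies $n_k\uparrow\infty$; the upper cutoff on $n_k$ is irrelevant here since $\rho$ is the deterministic quantity built from $\tilde h_{N_k,n_k,\varepsilon}$.

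First I would establish the key per-triangle estimate: there is $\delta_k\to 0$ with
$$
\sum_{x\in T}\tilde h_{N_k,n_k,\varepsilon}(x)\le \delta_k\,|T|^{\alpha}
$$
for every triangle $T$ with $\Delta(T)\subset\Lambda_{N_k}$. Restricting the sum to the central interval, the quantities $N_k+x$ (and $N_k-x$) run through at most $|T|$ consecutive integers, all $\ge n_k$, giving two elementary bounds: a uniform one, $\sum_{x\in T}\tilde h_{N_k,n_k,\varepsilon}\le 2|T|\,n_k^{-\gamma}$, and, comparing the sum with the integral and using subadditivity of the concave map $t\mapsto t^{1-\gamma}$, an ``integral'' one, $\sum_{x\in T}\tilde h_{N_k,n_k,\varepsilon}\le \tfrac{2}{1-\gamma}|T|^{1-\gamma}$. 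Applying $\min(A,B)\le A^{\theta}B^{1-\theta}$ with $\theta=\tfrac{1/2-\varepsilon}{\gamma}\in(0,1)$ balances the powers of $|T|$ to exactly $\theta\cdot 1+(1-\theta)(1-\gamma)=\alpha$, and since $\gamma\theta=\tfrac12-\varepsilon$ the prefactor becomes $\delta_k=\tfrac{2}{(1-\gamma)^{1-\theta}}\,n_k^{-(1/2-\varepsilon)}$, which tends to $0$ because $n_k\uparrow\infty$. (Equivalently one may use the uniform bound for $|T|\le n_k$ and the subadditive bound for $|T|>n_k$.)

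Next I combine this with the energy lower bound $H^{f}_{\Lambda_{N_k}}[\Gamma]\ge\tfrac{\zeta_\alpha}{2}\lVert\Gamma\rVert_\alpha$ of Proposition \ref{prop:energyhamiltonian}, where $\lVert\Gamma\rVert_\alpha=\sum_{T\in\Gamma}|T|^\alpha$. Summing the per-triangle estimate over $T\in\Gamma$ gives $2\sum_{T\in\Gamma}\sum_{x\in T}\tilde h_{N_k,n_k,\varepsilon}\le 2\delta_k\lVert\Gamma\rVert_\alpha$, so the exponent of each summand of $\rho_{N_k,n_k,\varepsilon}(\beta,\alpha,\tfrac12)$ is at most
$$
-\beta\left(\tfrac{(2K_c(\alpha)-1)\zeta_\alpha}{4}-2\delta_k\right)\lVert\Gamma\rVert_\alpha\le -\beta b\,\lVert\Gamma\rVert_\alpha,\qquad b:=\tfrac{(2K_c(\alpha)-1)\zeta_\alpha}{8}>0,
$$
valid for all $k$ large enough that $2\delta_k\le\tfrac{(2K_c(\alpha)-1)\zeta_\alpha}{8}$; positivity of $b$ uses $K_c(\alpha)>\tfrac12$. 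Hence $\rho_{N_k,n_k,\varepsilon}(\beta,\alpha,\tfrac12)\le\sum_{\Gamma\ni 0}e^{-\beta b\lVert\Gamma\rVert_\alpha}$, and grouping by $|\Gamma|=m$ and invoking Theorem \ref{thm:entropy} (legitimate once $\beta b$ is large) yields $\rho\le\sum_{m\ge 1}2m\,e^{-\beta b\,m^{\alpha}}$. For $\beta$ large this series is controlled by its $m=1$ term, so $\rho\le C e^{-\beta b}$; since $C e^{-\beta b}<e^{-\frac{\beta}{2}b}$ for $\beta$ large, the claim follows with $C^{(2)}_\alpha=b$, and the same bound gives $\rho\to 0$ as $\beta\to\infty$.

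The main obstacle is the per-triangle field estimate: a triangle that is simultaneously long and close to an endpoint collects field over sites at distances ranging from $O(1)$ up to $O(|T|)$ from the boundary, so a naive pointwise bound would only yield a contribution of the same order $|T|^{\alpha}$ as the Peierls energy, with a constant that does not vanish and, as $\alpha\downarrow\tfrac12$, even blows up through $\tfrac1{1-\gamma}$. The resolution is that the concavity/subadditivity bound replaces $|T|^{\alpha}$ by the strictly smaller power $|T|^{1-\gamma}=|T|^{\alpha-(1/2-\varepsilon)}$, while small triangles are tamed by the distance cutoff $n_k$; it is precisely here that both $\alpha>\tfrac12$ (which forces $\gamma<1$, making the field sum over a long triangle subadditive rather than linear) and $n_k\uparrow\infty$ enter, and where the hypothesis $\varepsilon<1-\alpha$ is used to guarantee $1-\gamma<\alpha$.
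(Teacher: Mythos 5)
Your per-triangle field estimate is correct and is in fact a slick packaging of what the paper does by a case split: the interpolation $\min(A,B)\le A^{\theta}B^{1-\theta}$ between the uniform bound $2|T|n_k^{-\gamma}$ and the subadditive bound $\tfrac{2}{1-\gamma}|T|^{1-\gamma}$ is sound, and the prefactor $n_k^{-(1/2-\varepsilon)}\to 0$ does follow from $n_k\uparrow\infty$. The gap is in the next step: you invoke Proposition \ref{prop:energyhamiltonian} in the form $H^{f}_{\Lambda_{N_k}}[\Gamma]\ge\tfrac{\zeta_\alpha}{2}\lVert\Gamma\rVert_\alpha$ \emph{at the exponent $\alpha$ itself}, for all $\alpha\in(\tfrac12,1)$. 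But that proposition is only stated (and only available -- this is precisely the obstruction of Littin--Picco recalled in Section 5) for $0\le\alpha<\alpha_+=\tfrac{\log 3}{\log 2}-1\approx 0.585$. For $\alpha\in[\alpha_+,1)$, which is most of the range covered by the theorem, the lower bound $H^f[\Gamma]\gtrsim\sum_{T\in\Gamma}|T|^{\alpha}$ you rely on is simply not at your disposal, so the domination $-\beta b\lVert\Gamma\rVert_\alpha$ and the subsequent entropy sum are unjustified there.

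Moreover, the repair is not merely cosmetic, because your interpolation was tuned to produce the field exponent exactly $\alpha$. The paper's way around the $\alpha_+$ barrier is to pick $\alpha'$ with $\alpha-\tfrac12+\varepsilon<\alpha'<\alpha_+$ and use monotonicity of the couplings in $\alpha$ to get $H^f_{\Lambda,\alpha}[\Gamma]\ge H^f_{\Lambda,\alpha'}[\Gamma]\ge\tfrac{\zeta_{\alpha'}}{2}\lVert\Gamma\rVert_{\alpha'}$; but then a field bound $\delta_k|T|^{\alpha}$ with $\alpha>\alpha'$ overwhelms $|T|^{\alpha'}$ for long triangles, and the comparison collapses no matter how small $\delta_k$ is. To fix your argument you must re-balance the interpolation so that the field exponent is at most $\alpha'$: choosing $\theta$ with $\gamma\theta=\alpha'-(\alpha-\tfrac12+\varepsilon)>0$ gives $\sum_{x\in T}\tilde h_{N_k,n_k,\varepsilon}(x)\le C\,n_k^{-(\alpha'-\alpha+1/2-\varepsilon)}|T|^{\alpha'}$, after which your scheme (domination by $e^{-\beta b\lVert\Gamma\rVert_{\alpha'}}$, entropy bound of Theorem \ref{thm:entropy}, which is valid for all exponents in $[0,1)$, and summation over $m$) goes through verbatim. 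With that correction your proof becomes essentially the paper's proof, with the $|T|<L$ versus $|T|\ge L$ case split replaced by the interpolation inequality.
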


\begin{proof}
Let $T$ be a (boundary) triangle. By Proposition \ref{prop:goodeta}, we have a sequence $n_k<N_k^{\frac{1}{2}\left(\alpha-\frac{1}{2}\right)}$ such that, for every $\eta\in \Omega_0$ and $0<\varepsilon<1-\alpha$,
\begin{align*}
\sum_{x\in T} \tilde{h}_{N_k,n_k,\varepsilon}(x)
&\le  \sum_{x=-N_k+n_k}^{-N_k+n_k+|T|-1}\frac{1}{(N_k+x)^{\frac{3}{2}-\alpha-\varepsilon}}+ \sum_{x=N_k-n_k-|T|+1}^{N_k-n_k}\frac{1}{(N_k-x)^{\frac{3}{2}-\alpha-\varepsilon}}\\
&= 2\sum_{x=n_k}^{n_k+|T|-1}\frac{1}{x^{\frac{3}{2}-\alpha-\varepsilon}}\\
&\le  2\int_{n_k-1}^{n_k+|T|-1}\frac{1}{u^{\frac{3}{2}-\alpha-\varepsilon}}\d u \\
&=\frac{1}{\alpha-\frac{1}{2}+\varepsilon}\left( (n_k+|T|-1)^{\alpha-\frac{1}{2}+\varepsilon}-(n_k-1)^{\alpha-\frac{1}{2}+\varepsilon} \right)\\
&=\frac{|T|^{\alpha-\frac{1}{2}+\varepsilon}}{\alpha-\frac{1}{2}+\varepsilon}
\left(
\left(1+\frac{n_k-1}{|T|}\right)^{\alpha-\frac{1}{2}+\varepsilon}
 - \left(\frac{n_k-1}{|T|}\right)^{\alpha-\frac{1}{2}+\varepsilon} \right).
\end{align*}

Note that, for every $u>0$ and $0<\gamma<1$, the function $f(u)=(u+1)^{\gamma}-u^{\gamma}$ is decreasing. Also, the following inequality is true for every $u>0$ and $0<\gamma<1$,
$$ 
(u+1)^{\gamma}-u^{\gamma} \le \gamma u^{\gamma-1}.
$$
For a fixed positive number $L>1$, let us separate the cases when $|T|<L$ and $|T|\ge L$. For $|T|<L$, we have
\begin{align*}
\sum_{x\in T} \tilde{h}_{N_k,n_k,\varepsilon}(x) 
&\le \frac{|T|^{\alpha-\frac{1}{2}+\varepsilon}}{\alpha-\frac{1}{2}+\varepsilon}
\left(\left(1+\frac{n_k-1}{L}\right)^{\alpha-\frac{1}{2}+\varepsilon}
 - \left(\frac{n_k-1}{L}\right)^{\alpha-\frac{1}{2}+\varepsilon} \right)\\
 &\le |T|^{\alpha-\frac{1}{2}+\varepsilon} \left(\frac{L}{n_k-1}\right)^{\frac{3}{2}-\alpha-\varepsilon}.
\end{align*}

For $L\le |T|\le N_k$, since $f(x)=(x+1)^{\gamma}-x^{\gamma} \le 1$ for all $x>0$ and $0<\gamma<1$, we have
$$ 
\sum_{x\in T} \tilde{h}_{N_k,n_k,\varepsilon}(x)  \le \frac{|T|^{\alpha-\frac{1}{2}+\varepsilon}}{\alpha-\frac{1}{2}+\varepsilon}.
$$

Since $\alpha>1/2$ and $0<\alpha-1/2+\varepsilon<1/2$, choose $0<\alpha'<\alpha$ such that 
$$
\alpha-\frac{1}{2}+\varepsilon<\alpha'<\alpha_+=\frac{\log 3}{\log 2}-1.
$$
Writing $H^f_{\Lambda}[\Gamma]=H^f_{\Lambda,\alpha}[\Gamma]$ to keep $\alpha$ explicit in the expression, we have, by Proposition \ref{prop:energyhamiltonian},
$$ 
H^f_{\Lambda,\alpha}[\Gamma]\ge  H^f_{\Lambda,\alpha'}[\Gamma]\ge \frac{\zeta_{\alpha'}}{2}\lVert \Gamma\rVert_{\alpha'} \ge \frac{\zeta_{\alpha'}}{2}\lVert \Gamma\rVert_{\alpha-\frac{1}{2}+\varepsilon}.
$$
Thus,
\begin{align*}
&-\frac{\beta}{2}(2K_c(\alpha)-1)H^f_{\Lambda_{N_k}}[\Gamma] +2\beta \sum_{T\in \Gamma}\sum_{x\in T}\tilde{h}_{N_k,n_k,\varepsilon}(x)\\
\le & -\beta \left[ \frac{\zeta_{\alpha'}}{4}(2K_c(\alpha)-1)\sum_{\substack{T\in \Gamma \\ |T|<L}}|T|^{\alpha'} - 4\left(\frac{L}{n_k-1}\right)^{\frac{3}{2}-\alpha-\varepsilon} \sum_{\substack{T\in \Gamma \\ |T|<L}}|T|^{\alpha-\frac{1}{2}+\varepsilon} \right]\\
& -\beta \left[  \frac{\zeta_{\alpha'}}{4}(2K_c(\alpha)-1)\sum_{\substack{T\in \Gamma \\ |T|\ge L}}|T|^{\alpha'} - \frac{4}{\alpha-\frac{1}{2}+\varepsilon}\sum_{\substack{T\in \Gamma \\ |T|\ge L}}|T|^{\alpha-\frac{1}{2}+\varepsilon} \right]\\
\le & -\beta \left[ \frac{\zeta_{\alpha'}}{4}(2K_c(\alpha)-1) - 4\left(\frac{L}{n_k-1}\right)^{\frac{3}{2}-\alpha-\varepsilon} \right]\sum_{\substack{T\in \Gamma \\ |T|<L}}|T|^{\alpha-\frac{1}{2}+\varepsilon}\\
& -\beta \left[  \frac{\zeta_{\alpha'}}{4}(2K_c(\alpha)-1)\sum_{\substack{T\in \Gamma \\ |T|\ge L}}|T|^{\alpha'} - \frac{4}{\alpha-\frac{1}{2}+\varepsilon}\sum_{\substack{T\in \Gamma \\ |T|\ge L}}|T|^{\alpha-\frac{1}{2}+\varepsilon} \right]
\end{align*}
For the first term, taking $k$ large enough, we have, for a sufficiently small $\delta_0>0$,
$$
\frac{\zeta_{\alpha'}}{4}(2K_c(\alpha)-1) - 4\left(\frac{L}{n_k-1}\right)^{\frac{3}{2}-\alpha-\varepsilon} > \frac{\zeta_{\alpha'}}{4}(2K_c(\alpha)-1)  - \delta_0 >0
$$
For the second term, since
$$ 
|T|^{\alpha'}\ge L^{\alpha'-\alpha+\frac{1}{2}-\varepsilon}|T|^{\alpha-\frac{1}{2}+\varepsilon}
$$
for every triangle $T$ with $|T|\ge L$, taking $L$ sufficiently large, we have
$$
 \frac{\zeta_{\alpha'}}{4}(2K_c(\alpha)-1) L^{\alpha'-\alpha+\frac{1}{2}-\varepsilon} - \frac{4}{\alpha-\frac{1}{2}+\varepsilon}>0
$$
Define
$$
C^{(2)}_{\alpha} = \frac{\zeta_{\alpha'}}{4}(2K_c(\alpha)-1) - \delta_0 +  \frac{\zeta_{\alpha'}}{4}(2K_c(\alpha)-1) L^{\alpha'-\alpha+\frac{1}{2}-\varepsilon} -\frac{4}{\alpha-\frac{1}{2}+\varepsilon}
$$
Therefore, using Theorem \ref{thm:entropy}, for sufficiently large $\beta$,
\begin{align*}
\rho_{N_k,n_k,\varepsilon}(\beta,\alpha) &\le \sum_{\Gamma \ni 0}\exp\left(-\beta C^{(2)}_{\alpha} \lVert \Gamma \rVert_{\alpha-\frac{1}{2}+\varepsilon} \right)\\
&= \sum_{m=1}^{\infty} \sum_{\substack{\Gamma \ni 0 \\ |\Gamma|=m}}e^{-\beta C^{(2)}_{\alpha}\lVert \Gamma \rVert_{\alpha-\frac{1}{2}+\varepsilon}}\\
&\le \sum_{m=1}^{\infty}2me^{-\beta C^{(2)}_{\alpha}m^{\alpha-\frac{1}{2}+\varepsilon}}\\
& \le e^{-\frac{\beta}{2} C^{(2)}_{\alpha}}
\end{align*}
as we desired. The last inequality comes from the following argument. There exists $u_0>0$ such that $\ln u<u^{\alpha-\frac{1}{2}+\varepsilon}$ for all $u>u_0$. Thus,
\begin{align*}
\sum_{m=1}^{\infty}2me^{-\beta C^{(2)}_{\alpha}m^{\alpha-\frac{1}{2}+\varepsilon}}
&\le 2e^{-\beta C^{(2)}_{\alpha}} + 2\int_1^{\infty}ue^{-\beta C^{(2)}_{\alpha}u^{\alpha-\frac{1}{2}+\varepsilon}}\d u\\
&= 2e^{-\beta C^{(2)}_{\alpha}} + 2\int_1^{u_0}ue^{-\beta C^{(2)}_{\alpha} u^{\alpha-\frac{1}{2}+\varepsilon}}\d u+ 2\int_{u_0}^{\infty}ue^{-\beta C^{(2)}_{\alpha}u^{\alpha-\frac{1}{2}+\varepsilon}}\d u\\
&\le 2e^{-\beta C^{(2)}_{\alpha}}+ e^{-\beta C^{(2)}_{\alpha}}u_0^2 +\int_{u_0}^{\infty}ue^{-\beta C^{(2)}_{\alpha}\ln u}\d u\\
&\le 2e^{-\beta C^{(2)}_{\alpha}}+ e^{-\beta C^{(2)}_{\alpha}}u_0^2 +\int_{u_0}^{\infty}u^{1-\beta C^{(2)}_{\alpha}}\d u\\
&= 2e^{-\beta C^{(2)}_{\alpha}}+ e^{-\beta C^{(2)}_{\alpha}}u_0^2 +\frac{1}{\beta C^{(2)}_{\alpha}-2}u_0^{2-\beta C^{(2)}_{\alpha}}\\
&\le e^{-\beta C^{(2)}_{\alpha}+\ln 2}+ e^{-\beta C^{(2)}_{\alpha}+2\ln u_0}+e^{(2-\beta C^{(2)}_{\alpha})\ln u_0}\\
&\le 3e^{-\frac{2}{3}\beta C^{(2)}_{\alpha}}\\
&=e^{-\frac{2}{3}\beta C^{(2)}_{\alpha}+\ln 3}\\
&\le e^{-\frac{1}{2}\beta C^{(2)}_{\alpha}}
\end{align*}
\end{proof}

\begin{lemma}\label{lem:supp}
For every contour $\Gamma$ and $M\ge 1$,
$$
\max_{|\Gamma|=M} |\supp(\Gamma)|\le c|\Gamma|^3.
$$
\end{lemma}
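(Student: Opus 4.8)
The plan is to prove the bound by induction on the number of triangles composing $\Gamma$, exploiting the hierarchical way in which a contour is assembled from its triangles together with the separation property \textbf{P.1} of Definition \ref{def:contour}. Writing $M=|\Gamma|=\sum_{T\in\Gamma}|T|$, the base case is a single triangle $\Gamma=\{T\}$: here $\supp(\Gamma)=[x_-(T),x_+(T)]$, so $|\supp(\Gamma)|\le|T|=M\le cM^3$, using $c>1$ and $M\ge1$.

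For the inductive step I would use that, by the contour-construction algorithm of \cite{CFMP,LP} (whose existence and uniqueness is unaffected by the presence of boundary triangles), a contour $\Gamma$ with at least two triangles arises as a \emph{last merge} of two non-empty sub-clusters $\Gamma_1,\Gamma_2$, each itself a contour with fewer triangles and $|\Gamma_1|+|\Gamma_2|=M$. The decisive structural input is that this last merge must violate the separation condition \textbf{P.1}: were it the case that $\d(\Gamma_1,\Gamma_2)>c\min\{|\Gamma_1|,|\Gamma_2|\}^3$, the algorithm would have kept $\Gamma_1$ and $\Gamma_2$ as two distinct contours, contrary to $\Gamma$ being a single contour. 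Hence
$$
\d(\Gamma_1,\Gamma_2)\le c\min\{|\Gamma_1|,|\Gamma_2|\}^3 .
$$
Since $\supp(\Gamma)$ is the smallest interval containing $\supp(\Gamma_1)$ and $\supp(\Gamma_2)$, its length is at most the sum of the two lengths plus the gap between the clusters, which is controlled by their distance; this holds in both alternatives of \textbf{P.1}, the nested case \textnormal{(2)} only making $\supp(\Gamma)$ smaller. Thus
$$
|\supp(\Gamma)|\le|\supp(\Gamma_1)|+|\supp(\Gamma_2)|+\d(\Gamma_1,\Gamma_2).
$$

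Then I would close the induction arithmetically. Abbreviating $m_i=|\Gamma_i|$ and assuming $m_1\le m_2$, the induction hypothesis $|\supp(\Gamma_i)|\le cm_i^3$ together with the two displayed inequalities gives
$$
|\supp(\Gamma)|\le cm_1^3+cm_2^3+cm_1^3=c\bigl(2m_1^3+m_2^3\bigr)\le c(m_1+m_2)^3=cM^3 ,
$$
where the penultimate step is the elementary inequality $(m_1+m_2)^3-(2m_1^3+m_2^3)=m_1\bigl(3m_2^2+3m_1m_2-m_1^2\bigr)\ge0$, valid because $m_2\ge m_1\ge1$. I expect the main obstacle to be the structural claim that the last merge satisfies $\d(\Gamma_1,\Gamma_2)\le c\min\{|\Gamma_1|,|\Gamma_2|\}^3$: one must check, from the explicit contour algorithm and the nesting allowed by \textbf{P.1}\,\textnormal{(2)}, that a single contour genuinely decomposes into two sub-contours whose separation fails the \textbf{P.1} threshold, rather than merely into arbitrary triangle subsets. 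Everything else — the subadditivity of $|\supp(\cdot)|$ and the final arithmetic — is routine.
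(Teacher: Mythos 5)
The paper never proves Lemma \ref{lem:supp}: it is stated bare, used once in the proof of Proposition \ref{incompatible}, and implicitly inherited from the contour constructions of \cite{CFMP} and \cite{LP}. So there is no in-paper argument to compare against, and your proposal must stand on its own. In outline it does, and it is essentially the standard argument behind this bound: induct along the merging construction, control the support of a merged cluster by the two sub-supports plus the gap, and close with the cubic inequality $2m_1^3+m_2^3\le(m_1+m_2)^3$ (your verification of which is correct, and which is exactly what makes the exponent $3$ and the constant $c$ of (\ref{c}) work).

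Two steps need repair, though. First, the implication ``were it the case that $\d(\Gamma_1,\Gamma_2)>c\min\{|\Gamma_1|,|\Gamma_2|\}^3$, the algorithm would have kept $\Gamma_1$ and $\Gamma_2$ as two distinct contours'' is not justified by Definition \ref{def:contour}: property P.1 demands \emph{both} one of the geometric alternatives (1)/(2) \emph{and} the distance bound (\ref{compatibility2}), so the algorithm also merges two clusters whose bases intersect without being nicely nested, and such a merge can occur while $\d(\Gamma_1,\Gamma_2)$ is far larger than $c\min\{|\Gamma_1|,|\Gamma_2|\}^3$. The clean fix is a dichotomy rather than a single inequality: if $\supp(\Gamma_1)\cap\supp(\Gamma_2)=\emptyset$, then the bases are disjoint, alternative (1) holds, hence the merge can only have been forced by failure of the distance condition, and the gap between the two supports is at most $\d(\Gamma_1,\Gamma_2)\le c\min\{m_1,m_2\}^3$ (up to an additive constant absorbed by the slack in your arithmetic); if instead the supports intersect, then $|\supp(\Gamma)|\le|\supp(\Gamma_1)|+|\supp(\Gamma_2)|$ and no gap term is needed at all. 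Either way your final estimate closes. Second, the obstacle you flag yourself is real but is removed by reorganizing the induction: the two clusters involved in a merge need not themselves be contours of any admissible triangle configuration, so the inductive hypothesis should not be ``every contour with fewer triangles satisfies the bound'' but rather ``every cluster produced at any stage of the merging algorithm satisfies $|\supp(\cdot)|\le c|\cdot|^3$'', proved by induction on the number of merge steps with single triangles as base case. Since the algorithm of \cite{CFMP} (to which the paper appeals for existence and uniqueness of the contour decomposition) proceeds precisely by successive pairwise merges triggered by failure of P.1, this reformulated induction is immediate and yields the lemma.
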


\begin{proposition}\label{incompatible}
For every $\frac{1}{2}<\alpha<1$ and for sufficiently large $\beta$, we have, for every fixed contour $\Gamma$,
$$
\sum_{\Gamma'\nsim \Gamma}\vartheta_{N_k,n_k,\varepsilon}(\Gamma')\le \frac{3c}{J}\rho_{N_k,n_k,\varepsilon}\left(\beta,\alpha,\frac{1}{2}\right)H^f_{\Lambda_{N_k}}[\Gamma].
$$
\end{proposition}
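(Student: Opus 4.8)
The plan is to exploit the factor of two relating the Peierls weight built into $\vartheta$ and that built into $\rho$. Writing $q(\Gamma')=\exp\!\big(-\tfrac{\beta}{2}(2K_c(\alpha)-1)H^f_{\Lambda_{N_k}}[\Gamma']+2\beta\sum_{T\in\Gamma'}\sum_{x\in T}\tilde h_{N_k,n_k,\varepsilon}(x)\big)$ for the summand of $\rho_{N_k,n_k,\varepsilon}(\beta,\alpha,\tfrac12)$, the first thing I would record is the algebraic identity
$$
\vartheta_{N_k,n_k,\varepsilon}(\Gamma')=q(\Gamma')\,\exp\!\Big(-\tfrac{\beta}{2}(2K_c(\alpha)-1)H^f_{\Lambda_{N_k}}[\Gamma']\Big).
$$
The surplus factor $\exp(-\tfrac{\beta}{2}(2K_c(\alpha)-1)H^f_{\Lambda_{N_k}}[\Gamma'])$ is exactly the reserve of energy I will spend on the entropy of positions, while $q$ is summed to reconstruct $\rho$.

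Next I would translate incompatibility into a geometric anchoring. If $\Gamma'\nsim\Gamma$ then, by the separation clause of property \textbf{P.1}, $\d(\Gamma,\Gamma')\le c\min\{|\Gamma|,|\Gamma'|\}^3\le c|\Gamma'|^3$, so there is a triangle $T\in\Gamma$ and a point of $\sf^*(\Gamma')$ within $c|\Gamma'|^3$ of $\sf^*(T)$. Combined with Lemma~\ref{lem:supp}, which bounds $|\supp(\Gamma')|\le c|\Gamma'|^3$, this forces the leftmost point $x_-(\Gamma')$ to lie in an interval of at most $3c|\Gamma'|^3$ integers around a representative point of that $T$ (one reach window plus the support length on either side). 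Overcounting being harmless, I would reorganize
$$
\sum_{\Gamma'\nsim\Gamma}\vartheta_{N_k,n_k,\varepsilon}(\Gamma')\le\sum_{T\in\Gamma}\ \sum_{\Gamma'\ \text{anchored at}\ T}\vartheta_{N_k,n_k,\varepsilon}(\Gamma').
$$

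For a fixed $T\in\Gamma$ I would then use translation invariance of $\vartheta_{N_k,n_k,\varepsilon}$ and $q$, grouping contours by their translation class $[\Gamma']$ of mass $M=|\Gamma'|$. The number of anchored translates of each shape is at most $3cM^3$, so, using the identity above,
$$
\sum_{\Gamma'\ \text{anchored at}\ T}\vartheta_{N_k,n_k,\varepsilon}(\Gamma')\le\sum_{[\Gamma']}3c\,M^3\,q([\Gamma'])\,e^{-\frac{\beta}{2}(2K_c(\alpha)-1)H^f_{\Lambda_{N_k}}[\Gamma']}\le 3c\sum_{[\Gamma']}q([\Gamma']).
$$
Here the middle step uses that, arguing as in the proof of Theorem~\ref{thm:rho} (passing to an auxiliary exponent $\alpha'<\alpha_+$ so that Proposition~\ref{prop:energyhamiltonian} applies), $H^f_{\Lambda_{N_k}}[\Gamma']\ge\tfrac{\zeta_{\alpha'}}{2}\lVert\Gamma'\rVert_{\alpha'}\ge\tfrac{\zeta_{\alpha'}}{2}|\Gamma'|^{\alpha'}$, whence $M^3 e^{-\frac{\beta}{2}(2K_c(\alpha)-1)H^f_{\Lambda_{N_k}}[\Gamma']}\le1$ uniformly for all $\beta$ large enough. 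Since $\rho_{N_k,n_k,\varepsilon}(\beta,\alpha,\tfrac12)=\sum_{[\Gamma']}|\Delta(\Gamma')|\,q([\Gamma'])$ (each shape has $|\Delta(\Gamma')|$ translates containing the origin) and $|\Delta(\Gamma')|\ge1$, the right-hand side is at most $3c\,\rho_{N_k,n_k,\varepsilon}(\beta,\alpha,\tfrac12)$. Finally, each triangle of $\Gamma$ carries at least one nearest-neighbour domain wall contributing $J$ to $H^f_{\Lambda_{N_k}}[\Gamma]$, so $\#\{T\in\Gamma\}\le\tfrac1J H^f_{\Lambda_{N_k}}[\Gamma]$; summing the per-triangle bound over $T\in\Gamma$ gives
$$
\sum_{\Gamma'\nsim\Gamma}\vartheta_{N_k,n_k,\varepsilon}(\Gamma')\le 3c\,\rho_{N_k,n_k,\varepsilon}\Big(\beta,\alpha,\tfrac12\Big)\cdot\#\{T\in\Gamma\}\le\frac{3c}{J}\,\rho_{N_k,n_k,\varepsilon}\Big(\beta,\alpha,\tfrac12\Big)\,H^f_{\Lambda_{N_k}}[\Gamma].
$$

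The hard part is the third step, and it is there that the factor-of-two reserve earns its keep: the reach window controlling incompatibility has size $\sim c|\Gamma'|^3$, which grows with the \emph{unbounded} mass of $\Gamma'$, so one cannot bound the number of admissible positions by a constant independent of $\Gamma'$. The whole estimate hinges on checking that the surplus Peierls factor $\exp(-\tfrac{\beta}{2}(2K_c(\alpha)-1)H^f_{\Lambda_{N_k}}[\Gamma'])$ dominates this cubic reach uniformly in $\Gamma$ (via Proposition~\ref{prop:energyhamiltonian} and Theorem~\ref{thm:entropy}) and collapses the position-weighted sum cleanly onto $3c$ times $\rho$ for all sufficiently large $\beta$. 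Tracking the combinatorial constants through this collapse so that they land exactly on $3c/J$, rather than the analytic convergence itself, is the delicate bookkeeping.
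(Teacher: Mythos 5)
Your skeleton is in fact the same as the paper's: factor $\vartheta_{N_k,n_k,\varepsilon}(\Gamma')$ into the $\rho(\beta,\alpha,\tfrac12)$ summand times the surplus $e^{-\frac{\beta}{2}(2K_c(\alpha)-1)H^f_{\Lambda_{N_k}}[\Gamma']}$, anchor incompatible $\Gamma'$ near triangles of $\Gamma$, absorb the cubic position count into the surplus for $\beta$ large (the paper does this via Remark 5.3 of \cite{LP}, you via Proposition \ref{prop:energyhamiltonian} with an auxiliary $\alpha'$ --- both work), and convert the number of triangles of $\Gamma$ into $H^f_{\Lambda_{N_k}}[\Gamma]/J$. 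However, there is a genuine gap at your central geometric step. You claim that $\Gamma'\nsim\Gamma$ implies, ``by the separation clause of P.1,'' that $\d(\Gamma,\Gamma')\le c\min\{|\Gamma|,|\Gamma'|\}^3$. Compatibility in Definition \ref{def:contour} is a \emph{conjunction}: the pair must satisfy one of the base alternatives (1)/(2) (disjointness, or nesting together with the ``moreover'' clause) \emph{and} the distance bound (\ref{compatibility2}). Its negation is a disjunction, so a pair can be incompatible with the distance bound intact, namely when the nesting alternative fails: e.g.\ $\Gamma'$ straddles an endpoint of $\Delta(\Gamma)$, with one triangle based inside $\Delta(\Gamma)$ and another based outside it. This is exactly why the paper decomposes incompatibility into three events: $B_0$ (your distance violation), $B_{1,1}$ (some triangle of $\Gamma$ has its base inside $\Delta(\Gamma')$ while another is disjoint from it), and $B_{1,2}$ (the symmetric event); your argument only covers $B_0$.

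The anchoring conclusion you want is in fact still true in the missing cases, but it needs a separate proof, not a citation of P.1: in $B_{1,2}$, the support of $\Gamma'$ must contain an endpoint of some $\Delta(T)$, $T\in\Gamma$ (one passes from inside $\Delta(\Gamma)$ to outside it), so Lemma \ref{lem:supp} applied to $\Gamma'$ places $x_-(\Gamma')$ within roughly $c|\Gamma'|^3$ of a representative point of $\Gamma$; in $B_{1,1}$, $\Delta(T_1)\subseteq\Delta(\Gamma')$ forces $|T_1|\le|\Gamma'|$ and puts an endpoint of $T_1$ inside $\supp(\Gamma')$, with a similar conclusion. These case analyses (with their own translation counts, anchored by indicators of the type $\mathbbm{1}_{0\in\Delta(\Gamma')}$ rather than by distance windows) are precisely what occupies most of the paper's proof. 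As written, your sum over incompatible $\Gamma'$ simply omits part of its index set, so the proof is incomplete until the reduction ``incompatible $\Rightarrow$ distance clause violated'' is either proved (with constants adjusted, since it is delicate at the level of the exact constant $c$) or replaced by the three-event decomposition.
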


\begin{proof}
Two contours $\Gamma$ and $\Gamma'$ are incompatible if one of the following events occurs,
\begin{enumerate}
\item[(a)] $B_0(\Gamma,\Gamma')$ is the event that there exist $T \in \Gamma$ and $T'\in \Gamma'$ such that 
$$
\d(T,T')\le c\min\{|\Gamma|,|\Gamma'|\}^3.
$$
\item[(b)] $B_{1,1}(\Gamma,\Gamma')$ is the event that there exist $T_1,T_2$ in $
\Gamma$ such that 
$$ 
\Delta(T_1)\subseteq \Delta(\Gamma') \quad \text{and}\quad \Delta(T_2)\cap \Delta(\Gamma')=\emptyset.
$$
\item[(c)] $B_{1,2}(\Gamma,\Gamma')$  is the event that there exist $T'_1,T'_2$ in $\Gamma'$ such that 
$$
\Delta(T'_1)\subseteq \Delta(\Gamma) \quad \text{and}\quad \Delta(T'_2)\cap \Delta(\Gamma)=\emptyset.
$$
\end{enumerate}
Splitting the sum,
$$ 
\sum_{\Gamma'\nsim \Gamma}\vartheta_{N_k,n_k,\varepsilon}(\Gamma') = 
\sum_{\Gamma'}\vartheta_{N_k,n_k,\varepsilon}(\Gamma')\mathbbm{1}_{B_0(\Gamma,\Gamma')}+\sum_{\Gamma'}\vartheta_{N_k,n_k,\varepsilon}(\Gamma')\mathbbm{1}_{B_{1,1}(\Gamma,\Gamma')}+
\sum_{\Gamma'}\vartheta_{N_k,n_k,\varepsilon}(\Gamma')\mathbbm{1}_{B_{1,2}(\Gamma,\Gamma')}.
$$
Let us bound each summation. By using that the activity function $\vartheta_{\Lambda}$ is translation invariant,
\begin{align*}
\sum_{\Gamma'}\vartheta_{N_k,n_k,\varepsilon}(\Gamma')\mathbbm{1}_{B_0(\Gamma,\Gamma')}
&\le \sum_{\Gamma'} \sum_{T\in \Gamma}\vartheta_{N_k,n_k,\varepsilon}(\Gamma')\mathbbm{1}_{\d(T,\Gamma')\le c\min\{|\Gamma|,|\Gamma'|\}^3}\\
&\le \sum_{\Gamma'} \sum_{x^*\in \sf^*(\Gamma)} \vartheta_{N_k,n_k,\varepsilon}(\Gamma')\mathbbm{1}_{\d(x^*,\Gamma')\le c\min\{|\Gamma|,|\Gamma'|\}^3}\\
&\le \sum_{\Gamma'} \sum_{x^*\in \sf^*(\Gamma)} \vartheta_{N_k,n_k,\varepsilon}(\Gamma')\mathbbm{1}_{\d(x^*,\Gamma')\le c |\Gamma'|^3}\\
&\le |\sf^*(\Gamma)|  \sum_{\Gamma'} \vartheta_{N_k,n_k,\varepsilon}(\Gamma')\mathbbm{1}_{\d(0,\Gamma')\le c |\Gamma'|^3}.
\end{align*}
Using 
$$
|\sf^*(\Gamma)|=\sum_{T\in \Gamma} 2 \le 4 (\text{number of spin flips in }\Gamma)\le \frac{4H^f_{\Lambda_{N_k}}[\Gamma]}{J}
$$
(note that some triangles can be boundaries) and the factor $(2|\supp(\Gamma')|+2c|\Gamma'|^3)\mathbbm{1}_{\Gamma'\ni 0}$ that is an upper bound for all of those translations such that $\d(0,\Gamma')\le c |\Gamma'|^3$, we have
$$
\sum_{\Gamma'}\vartheta_{N_k,n_k,\varepsilon}(\Gamma')\mathbbm{1}_{B_0(\Gamma,\Gamma')}
\le \frac{ 4H^f_{\Lambda_{N_k}}(\Gamma)}{J} \sum_{\Gamma'} \vartheta_{N_k,n_k,\varepsilon}(\Gamma') (2|\supp(\Gamma')|+2c|\Gamma'|^3)\mathbbm{1}_{\Gamma'\ni 0}.
$$
 By Lemma \ref{lem:supp}, we have
$$
 \sum_{\Gamma'}\vartheta_{N_k,n_k,\varepsilon}(\Gamma')\mathbbm{1}_{B_0(\Gamma,\Gamma')}
 \le \frac{ 16c H^f_{\Lambda_{N_k}}(\Gamma)}{J} \sum_{\Gamma'\ni 0} |\Gamma'|^3\vartheta_{N_k,n_k,\varepsilon}(\Gamma').
$$
 Let us bound the other terms:
 \begin{align*}
 \sum_{\Gamma'}\vartheta_{N_k,n_k,\varepsilon}(\Gamma')\mathbbm{1}_{B_{1,1}(\Gamma,\Gamma')} 
 &\le \sum_{\Gamma'}\vartheta_{N_k,n_k,\varepsilon}(\Gamma')\mathbbm{1}_{\{\Delta(T)\subset \Delta(\Gamma') \text{ for some }T\in \Gamma\}}\\
 &\le \sum_{\Gamma'}\vartheta_{N_k,n_k,\varepsilon}(\Gamma') \sum_{T\in \Gamma}\mathbbm{1}_{\{\Delta(T)\subset \Delta(\Gamma')\}}\\
 &\le \sum_{\Gamma'} \sum_{x^*\in \sf^*(\Gamma)} \vartheta_{N_k,n_k,\varepsilon}(\Gamma')\mathbbm{1}_{\{x^*\in \Delta(\Gamma')\}}\\
 &\le  |\sf^*(\Gamma)| \sum_{\Gamma'} \vartheta_{N_k,n_k,\varepsilon}(\Gamma')\mathbbm{1}_{\{0\in \Delta(\Gamma')\}}\\
 &\le \frac{4H_{\Lambda_{N_k}}^f(\Gamma)}{J}\sum_{\Gamma'} \vartheta_{N_k,n_k,\varepsilon}(\Gamma')\mathbbm{1}_{\{0\in \Delta(\Gamma')\}}.
 \end{align*}
 and also, by using that $\vartheta_{N_k,n_k,\varepsilon}$ is translation invariant,
 \begin{align*}
 \sum_{\Gamma'}\vartheta_{N_k,n_k,\varepsilon}(\Gamma')\mathbbm{1}_{B_{1,2}(\Gamma,\Gamma')}
 &=  \sum_{\Gamma'} \vartheta_{N_k,n_k,\varepsilon}(\Gamma')\sum_{T'_1\neq T'_2 \in \Gamma'}\mathbbm{1}_{\{\Delta(T'_1) \subseteq \Delta(\Gamma)\}}\mathbbm{1}_{\{\Delta(T'_2)\cap \Delta(\Gamma)=\emptyset\}}\\
 &\le \sum_{\Gamma'} \vartheta_{N_k,n_k,\varepsilon}(\Gamma')\sum_{T\in\Gamma}\sum_{T'_1\neq T'_2 \in \Gamma'}\mathbbm{1}_{\{\Delta(T'_1) \subseteq \Delta(T)\}}\mathbbm{1}_{\{\Delta(T'_2)\cap \Delta(T)=\emptyset\}}\\
 &\le \sum_{\Gamma'} \vartheta_{N_k,n_k,\varepsilon}(\Gamma')\sum_{T\in\Gamma}\sum_{T'_1\neq T'_2 \in \Gamma'} (\mathbbm{1}_{0\in \Delta(T'_1)} + \mathbbm{1}_{0\in \Delta(T'_2)})\\
 &\le  \sum_{\Gamma'} \vartheta_{N_k,n_k,\varepsilon}(\Gamma')\sum_{T\in\Gamma}\sum_{T'_1\neq T'_2 \in \Gamma'}\mathbbm{1}_{0\in \Delta(\Gamma')}\\
 &\le |\sf^*(\Gamma)| \sum_{\Gamma'} \vartheta_{N_k,n_k,\varepsilon}(\Gamma')\sum_{T'_1\neq T'_2 \in \Gamma'}\mathbbm{1}_{0\in \Delta(\Gamma')}\\
 &\le  \frac{4H^f_{\Lambda_{N_k}}(\Gamma)}{J} \sum_{\Gamma'} \vartheta_{N_k,n_k,\varepsilon}(\Gamma')  |\sf^*(\Gamma')|^2\mathbbm{1}_{0\in \Delta(\Gamma')}.
 \end{align*}
 Thus, we get
 \small 

\begin{align*}
  &\sum_{\Gamma'\nsim \Gamma}\vartheta_{N_k,n_k,\varepsilon}(\Gamma')\\
   &\le 
  \frac{4H^f_{\Lambda_{N_k}}(\Gamma)}{J} \left(   4c\sum_{\Gamma'\ni 0} |\Gamma'|^3\vartheta_{N_k,n_k,\varepsilon}(\Gamma')  +\sum_{\Gamma'} \vartheta_{N_k,n_k,\varepsilon}(\Gamma')\mathbbm{1}_{\{0\in \Delta(\Gamma')\}}+\sum_{\Gamma'} \vartheta_{N_k,n_k,\varepsilon}(\Gamma')  |\sf^*(\Gamma')|^2\mathbbm{1}_{0\in \Delta(\Gamma')} \right).
 \end{align*}

\normalsize
 Since $1\le  |\sf^*(\Gamma')|^2 \le c|\Gamma'|^3$, we conclude
$$
\sum_{\Gamma'\nsim \Gamma}\vartheta_{N_k,n_k,\varepsilon}(\Gamma') \le 
6c  \frac{4H^f_{\Lambda_{N_k}}(\Gamma)}{J}\sum_{\Gamma'\ni 0} |\Gamma'|^3\vartheta_{N_k,n_k,\varepsilon}(\Gamma') =
\frac{24cH^f_{\Lambda_{N_k}}(\Gamma)}{J}\sum_{\Gamma'\ni 0} |\Gamma'|^3\vartheta_{N_k,n_k,\varepsilon}(\Gamma').
$$
 By Remark 5.3 in \cite{LP}, using the fact that for $\alpha_0>0$, 
$$
 |\Gamma|\le 6\left(\frac{\alpha(1-\alpha)}{2}\right)^{1/\alpha_0}H^f_{\Lambda_{N_k}}(\Gamma)^{1/\alpha_0}
$$
 for all contours $\Gamma$, we have
$$
 |\Gamma|^3 \le 6^3\left(\frac{\alpha(1-\alpha)}{2}\right)^{3/\alpha_0} \exp\left(\frac{3}{\alpha_0} \ln H^f_{\Lambda_{N_k}}(\Gamma)\right).
$$
 Using that $\ln x \le x$ for all $x>0$, and reminding ourselves that $1-2K_c(\alpha)<0$, we have, for sufficiently large $\beta>0$,
\small
 \begin{align*}
 |\Gamma'|^3\vartheta_{N_k,n_k,\varepsilon}(\Gamma') 
 &\le 6^3\left(\frac{\alpha(1-\alpha)}{2}\right)^{3/\alpha_0} \exp\left( \frac{3}{\alpha_0}H^f_{\Lambda_{N_k}}(\Gamma) -\beta(2K_c(\alpha)-1)H^f_{\Lambda_{N_k}}(\Gamma) +2\beta\sum_{T\in \Gamma }\sum_{x\in T}\tilde{h}_{N_k,n_k,\varepsilon}(x)\right)\\
\normalsize
 &\le  \exp\left( \frac{-\beta}{2}(2K_c(\alpha)-1)H^f_{\Lambda_{N_k}}(\Gamma) +2\beta\sum_{T\in \Gamma }\sum_{x\in T}\tilde{h}_{N_k,n_k,\varepsilon}(x)\right)
 \end{align*}
 as we desired.
\end{proof}

\begin{theorem}
For every $\frac{1}{2}<\alpha<1$ and sufficiently large $\beta$, the series (\ref{absconvbarxi}) is absolutely convergent. Moreover,
$$
 \log \Upsilon_{N_k,n_k,\varepsilon} = \sum_{n=1}^{\infty} \vartheta_{N_k,n_k,\varepsilon}(\Gamma)(1+\mathcal{R}_{N_k,n_k,\varepsilon}(\Gamma)),
$$
where
$$ 
\mathcal{R}_{N_k,n_k,\varepsilon}(\Gamma) = \sum_{n=2}^{\infty}\frac{1}{n!}\sum_{(\Gamma_2,\ldots,\Gamma_n) \in \un{\Gamma}_{\Lambda_{N_k}}^{n-1}}\phi^T(\Gamma,\Gamma_2,\ldots,\Gamma_n) \prod_{i=2}^n \vartheta_{N_k,n_k,\varepsilon}(\Gamma_i),
$$
and the following inequality holds,
\begin{equation}\label{eq:rbound}
|\mathcal{R}_{N_k,n_k,\varepsilon}(\Gamma)| \le (e^{H^f_{\Lambda_{N_k}}[\Gamma]}-1) g(\beta,\alpha)
\end{equation}
where
$$
g(\beta,\alpha)=\frac{\frac{12c}{J}\rho_{N_k,n_k,\varepsilon}\left(\beta,\alpha,\frac{1}{4}\right)}{1-\frac{12c}{J}\rho_{N_k,n_k,\varepsilon}\left(\beta,\alpha,\frac{1}{4}\right)}
$$
and $\beta$ is chosen large enough such that $\frac{6c}{J}\rho_{N_k,n_k,\varepsilon}(\beta,\alpha,1/4)<1$.
\end{theorem}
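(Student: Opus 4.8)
The plan is to establish absolute convergence of the cluster expansion (\ref{absconvbarxi}) by verifying a Koteck\'y--Preiss type criterion for the hard-core polymer gas whose activities are the $\vartheta_{N_k,n_k,\varepsilon}(\Gamma)$, using the contour energy $a(\Gamma):=H^f_{\Lambda_{N_k}}[\Gamma]$ as weight. Concretely, I would first upgrade Proposition \ref{incompatible} to the \emph{weighted} incompatibility bound
$$
\sum_{\Gamma'\nsim\Gamma}\vartheta_{N_k,n_k,\varepsilon}(\Gamma')\,e^{H^f_{\Lambda_{N_k}}[\Gamma']}\le \frac{12c}{J}\,\rho_{N_k,n_k,\varepsilon}\!\left(\beta,\alpha,\tfrac14\right)H^f_{\Lambda_{N_k}}[\Gamma].
$$
Once $\beta$ is large enough that $\tfrac{6c}{J}\rho_{N_k,n_k,\varepsilon}(\beta,\alpha,\tfrac14)<1$ (possible since Theorem \ref{thm:rho} forces $\rho\to0$ as $\beta\to\infty$), the criterion $\sum_{\Gamma'\nsim\Gamma}\vartheta(\Gamma')e^{a(\Gamma')}\le a(\Gamma)$ holds with room to spare, which is exactly the hypothesis guaranteeing absolute convergence of the Ursell series and hence validity of the logarithmic expansion for $\log\Upsilon_{N_k,n_k,\varepsilon}$.

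The heart of the argument is this weighted bound, which is a re-run of the proof of Proposition \ref{incompatible} carrying the extra factor $e^{H^f[\Gamma']}$. The only new point is that this factor, together with the combinatorial weight $|\Gamma'|^3$ controlled via Remark 5.3 of \cite{LP} by $\exp(\tfrac{3}{\alpha_0}H^f_{\Lambda_{N_k}}[\Gamma'])$, shifts the exponent of the activity by $(1+\tfrac{3}{\alpha_0})H^f[\Gamma']$; since the leading term of the exponent is $-\beta(2K_c(\alpha)-1)H^f[\Gamma']$, for $\beta$ large one has $-\beta(2K_c(\alpha)-1)+1+\tfrac{3}{\alpha_0}\le-\tfrac14\beta(2K_c(\alpha)-1)$, so summing over $\Gamma'\ni0$ produces $\rho_{N_k,n_k,\varepsilon}(\beta,\alpha,\tfrac14)$ in place of the $\rho(\cdot,\tfrac12)$ of Proposition \ref{incompatible}. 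I record the slack in the convergence condition by writing $x:=\tfrac{12c}{J}\rho_{N_k,n_k,\varepsilon}(\beta,\alpha,\tfrac14)<1$.

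Granting convergence, the logarithmic expansion (\ref{6-40}) for $\log\tilde\Xi^\eta_{N_k,n_k}$ and its majorant $\log\Upsilon_{N_k,n_k,\varepsilon}$ follow from standard cluster-expansion theory. I would then reorganise the series by pinning one contour: isolating the single-contour ($n=1$) term yields the factor $\vartheta(\Gamma)$, while the connected clusters containing $\Gamma$ together with at least one further contour assemble into $\vartheta(\Gamma)\,\mathcal{R}_{N_k,n_k,\varepsilon}(\Gamma)$, giving the claimed identity $\log\Upsilon=\sum_{\Gamma}\vartheta(\Gamma)(1+\mathcal{R}(\Gamma))$. For the remainder estimate (\ref{eq:rbound}) I would apply the tree-graph (Penrose) inequality to $|\phi^T(\Gamma,\Gamma_2,\dots,\Gamma_n)|$ and organise the pinned sum according to the number $m\ge1$ of contours attached to $\Gamma$ through the spanning tree: each successive level is summed using the weighted incompatibility bound, which carries the $e^{a(\cdot)}$-weight required to perform the next summation and contributes a factor $\le x$, while the root $\Gamma$ contributes the Koteck\'y--Preiss factor $e^{H^f_{\Lambda_{N_k}}[\Gamma]}-1$. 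Summing the geometric series $\sum_{m\ge1}x^m$ yields precisely $g(\beta,\alpha)=x/(1-x)$.

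The main obstacle is the weighted incompatibility bound and, more subtly, propagating the $e^{a(\cdot)}$-weight cleanly through the tree so that the per-level factor is genuinely $x=\tfrac{12c}{J}\rho_{N_k,n_k,\varepsilon}(\beta,\alpha,\tfrac14)$ and the geometric resummation closes; this is exactly why the exponent must be softened from $\rho(\cdot,\tfrac12)$ to $\rho(\cdot,\tfrac14)$, since without the extra exponential weight one cannot iterate the incompatibility estimate. The remaining ingredients---the existence of $\beta$ making $\rho$ small (Theorem \ref{thm:rho}), the Peierls and entropy bounds (Proposition \ref{prop:energyhamiltonian} and Theorem \ref{thm:entropy}), and the translation invariance of $\vartheta_{N_k,n_k,\varepsilon}$---are already in place, so the remaining calculation is routine once the weighted estimate is set up.
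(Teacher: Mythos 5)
Your proposal is correct and takes essentially the same route as the paper: its proof likewise rests on the Rota/Penrose spanning-tree bound for $\phi^T$, an incompatibility estimate carrying the extra weight $e^{H^f_{\Lambda_{N_k}}[\Gamma']}$ (obtained there inline via $(H^f_{\Lambda_{N_k}}[\Gamma'])^{d-1}\le (d-1)!\,e^{H^f_{\Lambda_{N_k}}[\Gamma']}$, which is exactly what forces the softening from $\rho(\cdot,\tfrac12)$ to $\rho(\cdot,\tfrac14)$), iteration of this estimate from the leaves of the tree to the root $\Gamma$, and a final geometric-type resummation yielding $g(\beta,\alpha)=x/(1-x)$ with $x=\frac{12c}{J}\rho_{N_k,n_k,\varepsilon}(\beta,\alpha,\tfrac14)$. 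The only cosmetic differences are that you package convergence as a Koteck\'y--Preiss criterion while the paper extracts convergence and the bound (\ref{eq:rbound}) from one and the same tree computation, and that you place the factor $4$ inside the weighted incompatibility constant rather than in the tree-counting step; such constant bookkeeping is immaterial since $\rho_{N_k,n_k,\varepsilon}\to 0$ as $\beta\to\infty$.
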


\begin{proof}
For $n\ge 2$, define
$$ 
\mathcal{R}_{n}(\Gamma)=\sum_{(\Gamma_2,\ldots,\Gamma_n) \in \un{\Gamma}_{\Lambda_{N_k}}^{n-1}}\phi^T(\Gamma,\Gamma_2,\ldots,\Gamma_n) \prod_{i=2}^n \vartheta_{N_k,n_k,\varepsilon}(\Gamma_i),
$$
Assuming that $\mathcal{G}_{\{\Gamma_1,\ldots,\Gamma_n\}}$ is connected, we have
$$
\phi^T(\Gamma_1,\ldots,\Gamma_n) = \sum_{\substack{G\in \mathcal{G}_{\{\Gamma_1,\ldots,\Gamma_n\}}}}(-1)^{e(G)}
$$
Define $\mathcal{G}_n$ to be the set of connected graphs with $n$ vertices. Note that
$$
\sum_{(\Gamma_2,\ldots,\Gamma_n) \in \un{\Gamma}_{\Lambda_{N_k}}^{n-1}}\sum_{\substack{G\in \mathcal{G}_{\{\Gamma,\Gamma_2,\ldots,\Gamma_n\}}}} (\cdot)= \sum_{G^* \in \mathcal{G}_n} \sum_{\substack{(\Gamma_2,\ldots,\Gamma_n) \in \un{\Gamma}_{\Lambda_{N_k}}^{n-1} \\ \mathcal{G}_{\{\Gamma,\Gamma_2,\ldots,\Gamma_n\}}=G^*}} \sum_{\substack{G\in \mathcal{G}_n \\ G\subset G^*}}(\cdot)
$$
By using the Rota formula,
$$
\left| \sum_{\substack{G\in \mathcal{G}_{n} \\ G\subset G^*}}(-1)^{e(G)} \right|
\le N(G^*)=\sum_{\substack{\tau \in T_n \\ \tau \subset G^*}}1,
$$
where $N(G^*)$ is the number of spanning tree graphs in $G^*$ and $T_n$ is the set of trees with $n$ vertices, the bound gives us
$$
|\mathcal{R}_{n}(\Gamma)|\le 
\sum_{G^* \in \mathcal{G}_n} \sum_{\substack{\tau \in T_n \\ \tau \subset G^*}}\sum_{\substack{(\Gamma_2,\ldots,\Gamma_n) \in \un{\Gamma}_{\Lambda_{N_k}}^{n-1} \\ \mathcal{G}_{\{\Gamma,\Gamma_2,\ldots,\Gamma_n\}} \supset \tau}} \prod_{i=2}^n \vartheta_{N_k,n_k,\varepsilon}(\Gamma_i).
$$
Define for any subtree $\tau$, 
$$
D_{\Gamma}(\tau)=\sum_{\substack{(\Gamma_2,\ldots,\Gamma_n) \in \un{\Gamma}_{\Lambda_{N_k}}^{n-1} \\ \mathcal{G}_{\{\Gamma,\Gamma_2,\ldots,\Gamma_n\}} \supset \tau}} \prod_{i=2}^n \vartheta_{N_k,n_k,\varepsilon}(\Gamma_i).
$$
Let us bound $D_{\Gamma}(\tau)$. Fix the root on the vertex associated with the contour $\Gamma$.  Let $V_1$ be the vertices in $\tau$ with degree one, that is, the  leaves.  Define $\Gamma_{V_1}$ to be the contours except $\Gamma$ connected to some vertex of $V_1$. For each $\Gamma_{i,1}$ in $\Gamma_{V_1}$, we have, by Proposition \ref{incompatible}, that
\be\label{eq:boundnsim}
\sum_{\Gamma'\nsim \Gamma_{i,1}}\vartheta_{N_k,n_k,\varepsilon}(\Gamma')\le \frac{3c}{J}\rho_{N_k,n_k,\varepsilon}\left(\beta,\alpha,\frac{1}{2}\right)H^f_{\Lambda_{N_k}}[\Gamma_{i,1}].
\ee
Note that if $d$ is the incidence number of $\Gamma_{i,1}$, then it is associated with $d-1$ leaves, so it receives a contribution
$$
\left( \frac{3c}{J}\rho_{N_k,n_k,\varepsilon}\left(\beta,\alpha,\frac{1}{2}\right)H^f_{\Lambda_{N_k}}[\Gamma_{i,1}] \right)^{d-1}.
$$
To iterate the procedure, note that $u^d/n! \le e^u$ for all $u\ge 0$ and $d\ge 0$. By (\ref{eq:boundnsim}), we have
\begin{align*}
&\sum_{\Gamma'\nsim \Gamma_{i,1}}(H^f_{\Lambda_{N_k}}[\Gamma'])^{d-1}\vartheta_{N_k,n_k,\varepsilon}(\Gamma')
\le (d-1)!\sum_{\Gamma'\nsim \Gamma_{i,1}} e^{H^f_{\Lambda_{N_k}}[\Gamma']}\vartheta_{N_k,n_k,\varepsilon}(\Gamma')\\
&\le (d-1)!\sum_{\Gamma'\nsim \Gamma_{i,1}}
\exp\left(-(\beta(2K_c(\alpha)-1)-1)H^f_{\Lambda_{N_k}}[\Gamma] +2\beta \sum_{T\in \Gamma_0}\sum_{x\in T}\tilde{h}^{\eta}_{N_k,n_k,\varepsilon}(x)\right)\\
&\le (d-1)!\sum_{\Gamma'\nsim \Gamma_{i,1}}
\exp\left(-\beta\frac{(2K_c(\alpha)-1)}{2}H^f_{\Lambda_{N_k}}[\Gamma] +2\beta \sum_{T\in \Gamma_0}\sum_{x\in T}\tilde{h}^{\eta}_{N_k,n_k,\varepsilon}(x)\right)\\
&\le (d-1)!\frac{3c}{J}\rho_{N_k,n_k,\varepsilon}\left(\beta,\alpha,\frac{1}{4}\right)H^f_{\Lambda_{N_k}}[\Gamma_{i,1}].
\end{align*}
For the root $\Gamma$, we should replace $d-1$ by $d$. Calling $C^{(3)}(\beta,\alpha)=\frac{3c}{J}\rho\left(\beta,\alpha,\frac{1}{4}\right)$, we have, for a fixed tree $\tau$,
\begin{align*}
D_{\Gamma}(\tau) &\le \{C^{(3)}(\beta,\alpha)H^f_{\Lambda_{N_k}}[\Gamma]\}^{d_1}\prod_{i=2}^n (d_i-1)!C^{(3)}(\beta,\alpha)^{d_i-1}\\
& = C^{(3)}(\beta,\alpha)^{n-1}H^f_{\Lambda_{N_k}}[\Gamma]^{d_1}\prod_{i=2}^n(d_i-1)!.
\end{align*}
Since there are exactly $\binom{n-2}{d_1,d_2-1,\ldots, d_n-1}$ of those trees, it holds
\begin{align*}
\sum_{G^* \in \mathcal{G}_n} \sum_{\substack{\tau\in T_n \\ \tau \subset G^*}}D_{\Gamma}(\tau) 
&\le C^{(3)}(\beta,\alpha)^{n-1} \sum_{\substack{d_1+\ldots+d_n=2n-2 \\ d_i\ge 1}}H^f_{\Lambda_{N_k}}[\Gamma]^{d_1}\binom{n-2}{d_1,d_2-1,\ldots, d_n-1}\prod_{i=2}^n(d_i-1)!\\
&\le (n-2)! C^{(3)}(\beta,\alpha)^{n-1} \sum_{d_1=1}^{\infty}\frac{H^f_{\Lambda_{N_k}}[\Gamma]^{d_1}}{d_1!}\sum_{\substack{d_2+\ldots+d_n=2n-2-d_1 \\ d_i\ge 1}}1\\
&\le (n-2)! C^{(3)}(\beta,\alpha)^{n-1} \sum_{d_1=1}^{\infty}\frac{H^f_{\Lambda_{N_k}}[\Gamma]^{d_1}}{d_1!}\binom{2n-3}{n-2}\\
&\le (n-2)! (4C^{(3)}(\beta,\alpha))^{n-1} \sum_{d_1=1}^{\infty}\frac{H^f_{\Lambda_{N_k}}[\Gamma]^{d_1}}{d_1!}\\
&\le (n-2)! (4C^{(3)}(\beta,\alpha))^{n-1} (e^{H^f_{\Lambda_{N_k}}[\Gamma]}-1)
\end{align*}
Thus,
\begin{align*}
\mathcal{R}_{N_k,n_k,\varepsilon}(\Gamma) &= \sum_{n=2}^{\infty}\frac{\mathcal{R}_n(\Gamma)}{n!}\\
&\le \sum_{n=2}^{\infty}\frac{1}{n!}(n-2)! (4C^{(3)}(\beta,\alpha))^{n-1} (e^{H^f_{\Lambda_{N_k}}[\Gamma]}-1)\\
&\le (e^{H^f_{\Lambda_{N_k}}[\Gamma]}-1)\sum_{n=2}^{\infty}\frac{(4C^{(3)}(\beta,\alpha))^{n-1}}{n(n-1)}.
\end{align*}
Taking $\beta$ large enough such that $C^{(3)}(\beta,\alpha)<1/4$, we conclude that the series absolutely converges.
\end{proof}

\begin{proposition}\label{propclusterexp}
For every $\frac{1}{2}<\alpha<1$, if $\beta>0$ is such that $g(\beta,\alpha)<1/2$, the series
\be \label{Exppluseta}
\log \tilde{\Xi}^{\eta}_{N_k,n_k} = \sum_{n=1}^{\infty}\frac{1}{n!}\sum_{R_1}\ldots \sum_{R_n}\phi^T(R_1,\ldots,R_n)\prod_{i=1}^n \xi^{\eta}_{N_k,n_k}(R_i)
\ee
is absolutely convergent.
\end{proposition}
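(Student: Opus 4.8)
The plan is to deduce the absolute convergence of (\ref{Exppluseta}) from the contour-level estimates already established, using the domination bound (\ref{ineq:eta}). Recall that the preceding theorem, via the remainder bound (\ref{eq:rbound}), shows that the contour series (\ref{absconvbarxi}) for $\log\Upsilon_{N_k,n_k,\varepsilon}$ converges absolutely precisely under the smallness condition on $g(\beta,\alpha)$; since $\tilde{\Xi}^{\eta}_{N_k,n_k}\le\Upsilon_{N_k,n_k,\varepsilon}$ and, more importantly, $|\xi^{\eta}_{N_k,n_k}(R)|\le\prod_{\Gamma\in R}\vartheta_{N_k,n_k,\varepsilon}(\Gamma)$, the same analytic input should control the polymer series. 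Concretely, absolute convergence of (\ref{Exppluseta}) amounts to the finiteness of
$$
\sum_{n\ge 1}\frac{1}{n!}\sum_{R_1,\ldots,R_n}\left|\phi^T(R_1,\ldots,R_n)\right|\prod_{i=1}^n\left|\xi^{\eta}_{N_k,n_k}(R_i)\right|,
$$
which I would bound by verifying a Koteck\'y--Preiss criterion for the polymer gas.

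First I would invoke the tree-graph (Penrose/Rota) inequality to bound $|\phi^T(R_1,\ldots,R_n)|$ by the number of spanning trees of the incompatibility graph of $(R_1,\ldots,R_n)$, reducing the task to the control of iterated sums $\sum_{R'\nsim R}|\xi^{\eta}_{N_k,n_k}(R')|\,e^{a(R')}$ for a suitable weight. The natural choice is $a(R)=\sum_{\Gamma\in R}H^{f}_{\Lambda_{N_k}}[\Gamma]$; applying (\ref{ineq:eta}) gives
$$
|\xi^{\eta}_{N_k,n_k}(R')|\,e^{a(R')}\le\prod_{\Gamma'\in R'}\vartheta_{N_k,n_k,\varepsilon}(\Gamma')\,e^{H^{f}_{\Lambda_{N_k}}[\Gamma']}.
$$
The extra factors $e^{H^{f}_{\Lambda_{N_k}}[\Gamma']}$ are absorbed exactly as in the proof of the preceding theorem, where they shift the relevant single-contour sums from $\rho_{N_k,n_k,\varepsilon}(\beta,\alpha,1/2)$ to $\rho_{N_k,n_k,\varepsilon}(\beta,\alpha,1/4)$, and the latter is made arbitrarily small for large $\beta$ by Theorem \ref{thm:rho}.

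Next I would perform the geometric reduction from polymer incompatibility to contour incompatibility. By definition, $R'\nsim R$ forces either a pair $\Gamma\in R$, $\Gamma'\in R'$ with $\Gamma\nsim\Gamma'$, or a failure of the nesting alternative in condition (ii), which likewise localizes $\Delta(R')$ against some contour of $R$. Because the polymer activity factorizes over its contours, each such sum can be performed one contour at a time and dominated by the contour-level quantity $\sum_{\Gamma'\nsim\Gamma}\vartheta_{N_k,n_k,\varepsilon}(\Gamma')$, for which Proposition \ref{incompatible} yields the bound $\frac{3c}{J}\rho_{N_k,n_k,\varepsilon}(\beta,\alpha,1/2)\,H^{f}_{\Lambda_{N_k}}[\Gamma]$. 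Feeding these bounds into the tree sum and resumming over tree shapes --- exactly as in the preceding theorem, where the multinomial and tree counting produce a factor $\sum_{n\ge 2}(4C^{(3)}(\beta,\alpha))^{n-1}/(n(n-1))$ with $C^{(3)}(\beta,\alpha)=\frac{3c}{J}\rho_{N_k,n_k,\varepsilon}(\beta,\alpha,1/4)$ --- gives a convergent geometric series precisely in the regime $g(\beta,\alpha)<1/2$, i.e.\ $C^{(3)}(\beta,\alpha)$ small enough.

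The step I expect to be the main obstacle is this last geometric bookkeeping: unlike the single-contour gas underlying $\Upsilon_{N_k,n_k,\varepsilon}$, the polymer compatibility relation carries the additional nesting condition (ii), so one must check that every way in which two polymers can be incompatible still reduces to a controllable contour-incompatibility or contour-containment event, without double counting and while preserving the product structure $\prod_{\Gamma'\in R'}\vartheta_{N_k,n_k,\varepsilon}(\Gamma')$ needed to apply Proposition \ref{incompatible}. Once this reduction is secured, no analytic input beyond Theorem \ref{thm:rho} and Proposition \ref{incompatible} is needed, and the absolute convergence of (\ref{Exppluseta}) follows from that of (\ref{absconvbarxi}) under the stated hypothesis $g(\beta,\alpha)<1/2$.
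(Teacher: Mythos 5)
Your proposal is correct and follows essentially the same route as the paper: the paper proves Proposition \ref{propclusterexp} by using the domination (\ref{ineq:eta}) to reduce absolute convergence of (\ref{Exppluseta}) to that of the contour series (\ref{absconvbarxi}) for $\log\Upsilon_{N_k,n_k,\varepsilon}$, which is then established via Proposition \ref{incompatible}, Theorem \ref{thm:rho}, and the Rota/tree-graph bookkeeping you describe. The only difference is presentational — you carry out the tree-graph estimates directly at the polymer level and explicitly flag the polymer-to-contour incompatibility reduction, a step the paper treats implicitly in the one-line remark preceding (\ref{absconvbarxi}).
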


Note that
$$ 
\log  \tilde{\Xi}^{\eta}_{N_k,n_k}-\log  \tilde{\Xi}^{-\eta}_{N_k,n_k} =\int_{-1}^1 \frac{\partial}{\partial t}\log   \tilde{\Xi}^{\eta}_{N_k,n_k}(t) \d t,
$$
where $\tilde{\Xi}^{\eta}_{N_k,n_k}(t)$ is defined by
$$
 \tilde{\Xi}^{\eta}_{N_k,n_k}(t) = \sum_{\sigma \in \Omega^+_{\Lambda}}e^{-\beta H^f_{\Lambda_{N_k}}(\sigma) - 2\beta t \tilde{E}^{\eta}_{N_k,n_k}(\sigma)}.
$$
Then, using that 
$$
|\tilde{E}^{\eta}_{N_k,n_k}(\sigma)| = \left|\sum_{x\in \Lambda_{N_k}}\tilde{h}^{\eta}_{N_k,n_k}(x)\mathbbm{1}_{\sigma_x=-1} \right| \le \sum_{x\in \Lambda_{N_k}}|\tilde{h}^{\eta}_{N_k,n_k}(x)|
$$
 for all $\sigma\in \Omega^+_{\Lambda}$,
$$
|\log  \tilde{\Xi}^{\eta}_{N_k,n_k}-\log  \tilde{\Xi}^{-\eta}_{N_k,n_k} | \le 2\beta  \sum_{x\in \Lambda_{N_k}}|\tilde{h}^{\eta}_{N_k,n_k}(x)|.
$$
Therefore, conditioned to the good boundary conditions $\eta\in \Omega_0$,
\be \label{good-field}
\left|\frac{1}{2\beta}(\log  \tilde{\Xi}^{\eta}_{N_k,n_k}-\log  \tilde{\Xi}^{-\eta}_{N_k,n_k}) \right| \le  \sum_{x\in \Lambda_{N_k}}|\tilde{h}^{\eta}_{N_k,n_k}(x)|\le \frac{2(3^{\alpha-\frac{1}{2}+\varepsilon})}{\alpha-\frac{1}{2}+\varepsilon} (N_k-n_k)^{\alpha-\frac{1}{2}+\varepsilon}.
\ee

This estimate (\ref{good-field}) for good boundary conditions $\eta$'s and  decoupled enough subsequences $(n_k,N_k)_k$ enables us to show  that,  
depending on the asymptotic behavior of the extra stochastic term $W^\eta$, we  get either a well-defined distribution on the asymptotic random weights in the extremal decomposition of the mixed states in the support of the metastate or degeneracy of this limiting distribution which leads to a concentration of the metastate  on the extremal maximal and minimal states $\mu^+$ and $\mu^-$ (similar to higher-dimensional systems, occurring for slow decays). 

To conclude our proofs, it remains to prove Propositions 1 and 5 to use the WLLT in the intermediate case (in next Section) and to put afterwards everything together to get Theorem 2 and 3 (in the Appendix Section 9).

\section{Proof of Propositions \ref{mainprop} and \ref{prop_convergence_nu}}\label{sec:proofs}

We detail now the proof of  Propositions \ref{mainprop}, and describe the one of Proposition \ref{prop_convergence_nu} afterwards.

Let us choose $k, N_k, n_k, \beta,\epsilon$ such that the measure $\mu$ is after decoupling ($n_k <N_k^{\frac{1}{2} ( \alpha- \frac{1}{2})}$), at a temperature low enough such that a cluster expansion converges for good $\eta's$ from Proposition 6, and call $\mathcal{G}$ the $\sigma$-algebra generated by such disorder variables $\eta \in \Omega_0$. In particular, we shall use that for such good $\eta$'s, and for $N=N_k-n_k$,
$$
\frac{1}{2 \beta} \Big| \ln \Xi_N^{\eta} - \ln \Xi_N^{-,\eta}\Big| \leq C_0 \cdot N^{\alpha-\frac{1}{2}+\varepsilon}
$$

As seen in Section 5, it  is equivalent (to proving) that for all $\epsilon >0$ and $\alpha > \frac{1}{2}$,

$$
\limsup_N N^{\alpha- \frac{1}{2} -\epsilon}  \mathbb{P} \Big[ \mid F_N^\eta \mid \leq  K_N(\beta, \tau)\Big] <  \infty
$$

with $K_N(\beta, \tau) = \frac{1}{2\beta}  \log \left( \frac{\lVert \nu^{\eta}_{\Lambda_N}-\nu^{-\eta}_{\Lambda_N} \rVert_X}{\tau-\ell}-1\right)$ and $F_N^\eta= \ln \mathcal{Z}_N^{+,\eta}$ can be expressed at low $T$ (high $\beta_1$)  as
$$
F_N^\eta = W_N^\eta + \frac{1}{2 \beta} \Big( \ln \Xi_N^{\eta} - \ln \Xi_N^{-,\eta}\Big)
$$

To do so, we shall use our criteria for the WLLT at $T=T_1=\frac{1}{\beta_1}>0$ and  find $A_N^T, \delta_N^T, \tau_N^T$ satisfying the conditions of Proposition \ref{LLT} with the correct speed (and $\beta$ adjusted so that $K_N(\beta,\tau)$ is of the order of $\delta_N^T$).

 Denote 
by $\Psi_N^T$ the characteristic function of $F_N$ at (low) temperature $T$, for $t$ in an admissible interval. Thus, to extend the inner-temperature-$T_1=0$ toy-models case to low nonzero temperatures via cluster expansions, we get quantities such that 

\be \label{ANT}
\frac{A_N^T}{\delta_N^T}=C_1 N^{\alpha- \frac{1}{2} -\epsilon }
\ee

and
\begin{enumerate}\label{ToolWLLT}
    \item[(i)] $\limsup_{N\to \infty}A_N^T\int_{-\tau_N^T}^{\tau_N^T}|\psi_N^T(t)|dt \le 2\pi$
    \item[(ii)] There exists $k>1$ such that $\lim_{n\to \infty}\frac{A_n^T}{\delta_n^k \tau_n^{k-1}}=0$
\end{enumerate}

At $T=0$, denote $\Psi_N^0, \tau_N^0, A_N^0$ the corresponding quantities used for the toy model in Section 4, so that

$$
\Psi_N^0(t)= \mathbb{E}\big[e^{i t W_N^\eta }\big],\; A_N^0 = C_2 \cdot N^{\alpha-\frac{1}{2}}, \tau_N^0 = C_3 \cdot N^{1-\alpha}
$$
and $(i), (ii)$ below are satisfied.

To get the correct $A_N^T$ and $\delta_N^T$ so that $(i)$ holds, we can proceed as in \cite{ENS1} by a Mayer expansion of the characteristic function at low temperature, where our corridor $N=N_k-n_k$ plays the role of the boundary $\partial \Lambda \setminus \partial \Lambda_C$ in (87) of \cite{ENS1}, page 1032. Write, for $t$ in some admissible interval  $[-\tau_N^T, + \tau_N^T]$,

$$
\Psi_N^T(t) = \mathbb{E}\big[e^{i t F_N^\eta} \big]= \mathbb{E}\big[e^{i t W_N^\eta + \frac{it}{2 \beta} \big( \ln \Xi_N^{\eta} - \ln \Xi_N^{-,\eta} \big) } \big]
$$

and condition first on (typically good) $\eta's$,

$$
\Psi_N^T(t) =  \mathbb{E}\big[\mathbb{E} \big[e^{i t W_N^\eta + \frac{it}{2 \beta} \big( \ln \Xi_N^{\eta} - \ln \Xi_N^{-,\eta} \big) } | \mathcal{G}\big] (\eta)  \big]
$$




First, consider adapted quantities of the form
$
\tau_N^T = \tau_N^0 \cdot N^{-\delta'}
$
, with $\delta'>0$ such that (\ref{ToolWLLT}) holds, and
$
A_N^T = A_N^0 \cdot I_\beta(N)
$
where
\begin{equation}\label{IBetaN}
I_\beta(N) = \frac{\int_{-\tau_N^T}^{\tau_N^T} \mid \Psi_N^T(t) \mid dt}{{\int_{-\tau_N^0}^{\tau_N^0} \mid \Psi_N^0(t) \mid dt}{} }
\end{equation}

Then one proves the result using the low-temperature Mayer expansion of the characteristic function of the free energy similar to the higher dimensional cases of \cite{ENS1} -- admissible thanks to our cluster expansions valid at low temperature from Section 6 --, and we end up (from there) with the result that we can take $I(N) = N^{-\delta}$ for any $\delta < \alpha - \frac{1}{2}$. 

Therefore, at low enough temperature and for $\alpha > \frac{1}{2}$, for all $\epsilon >0$ we can choose $\delta < \alpha - \frac{1}{2}$ and $\delta'$ such that conditions $(i),(ii)$ hold for the WLLT at low temperature with
$$
A_N^T = A_N^0 \cdot N^{-\delta}, \; \tau_N^T = \tau_N^0 \cdot N^{-\delta'}, \; \delta_N^T = N^{\epsilon} \delta_N^0
$$   
leading the an analog control of the characteristic functions as in (95) of \cite{ENS1} p 1034, with our corridor $N$, and for good $\eta$'s :
$$
\mid \Psi_N^\eta(t) \mid \leq e^{\frac{1}{2} \beta^2 t^2 N}.
$$

From this, we get the WLLT at low $T_1$ with
$$
\limsup_N \frac{A_N^T}{\delta_N^T} \mathbb{P} [a \delta_N^T \leq F_N^\eta \leq b \delta_N^T] < b-a
$$

or with the choices above with $\epsilon < \alpha - \frac{1}{2}$, for our subsequences $N$,

$$
\mathbb{P} [a N^{\epsilon} \leq F_N^\eta \leq b N^{\epsilon} ] = \mathcal{O} \Big(N^{\frac{1}{2} - \alpha - \epsilon}\Big)
$$
which leads to Proposition \ref{mainprop}. We  derive the Mayer expansions in Appendix 4 and describe more precisely the use of free boundary condtions and boundary contours.



To get Proposition \ref{prop_convergence_nu} requires proving that the constrained  measure $\nu_N^\eta$ converges to the extremal state $\mu^+$ and is based on the idea that the characteristic function  of the free energy difference behaves similarly to the case of  the toy model of Section \ref{SectionToyModel}. 

Denote again $F_N$ for our free energy $F=\ln Z$. As approximately $$\mu_N = \frac{1}{(e^{\beta F_N} + e^{- \beta F_N)}}( (e^{\beta F_N}) \mu_N^{+} +( e^{- \beta F_N}))  \mu_N^{-})$$  the weights will only be non-trivial if the sequence $|F_N|$ does not diverge. Its behaviour follows now from the following scheme, already described in various places of this paper :

\begin{enumerate}
	\item The free energy (and thus its associated characteristic function)  can be developed into a low-temperature cluster expansion with the (ground state) energy being the leading term. The characteristic function thus becomes a product of terms for individual sites, times well-controled weights containing the interactions.
	\item This holds for "good" boundary conditions $\eta$, and then the contour energies are analytic in $t$ in some interval, as follows from the arguments  in  \cite {BEEKR}, 
	\item Moreover, the probability of $\eta$ being bad is small enough. Here 
	"Good" means that$ |h_i|< |i|^{\frac{1}{2} - \alpha + \varepsilon}$ for all $i$ larger than $n$ and $n = N^{\delta}$, and one uses it to extract good subsequences using again a Borel-Cantelli framework.
	\item In addition, the contribution of the first $n$ sites can be neglected in comparison to the total $N+n$, which is shown by the rigorous derivation of cluster expansions with or without decoupling from Section 6.
\end{enumerate}
 This led to Proposition \ref{prop_convergence_nu} and can be formally performed in the same way as Proposition 1, but with the additional constraint of having a mixed $(+,\eta)$-b.c. and not only $\eta$'s. Proposition 1 provides a speed of convergence, but with the additional constraint here we only need to get convergence, which is easier. 
\section{Conclusion}
In this paper we have analysed the metastate behaviour of a class of simple disordered spin systems,  one-dimensional long-range ferromagnets with polynomially decaying interaction.  Equivalently we can consider a long-range  model with Mattis disorder at fixed boundary conditions. We chose the decay slow enough that a phase transition occurs, and found that there are two different possibilities depending on the decay rate. 

 When the decay is slow enough, the metastate is dispersed and  has its support on  the plus and the minus state. This behaviour, as well as its proof, is, up to some adaptations and modifications,  analogous to what happens for nearest-neighbor models in higher dimensions, as was analysed in \cite{ENS1}.

When the decay rate is faster, the boundary terms remain bounded by an  almost surely finite random variable, and in that case the somewhat unusual phenomenon occurs that the metastate is dispersed and supported by mixed Gibbs states, rather than by  extremal ones. 

Our estimates are closely related to those of \cite{JOP} and the fact that there is a change at the value $\alpha=1/2$ (which corresponds to their $\alpha=3/2$) has  essentially the same origin.  This also appears in \cite{PreprintEFMV}.
These papers consider  one-sided (half-line) long-range extensions of dynamical systems results, answering  deep questions involving Ruelle operators, but in contrast to our work these results apply at high temperatures only.    

The same threshold,  but from a somewhat different origin, also shows up  in the Aizenman-Wehr dimensional  reduction for long-range RFIM (\cite{COP,COP1}), and also in other places involving long-range Ising models, see  \cite{Chat23, PreprintEFMV, Read}. 

\section{Appendices}\label{Appendix}
\section*{Appendix 1: Proof of Theorem \ref{thma} and \ref{thmnull}}\label{app1}

\begin{proof}[Proof of Theorem \ref{thma}]
For a fixed $1/2<\alpha<1$ and $a>0$, choose 
$$
\varepsilon< \frac{a\lf \alpha-\frac{1}{2} \ri}{a+\lf \alpha-\frac{1}{2} \ri^{-1}}
$$
and define $\beta_1(a)=\beta_0(\varepsilon)$. Let $\beta>\beta_1(a)$ and $N_k>k^{(\alpha-\frac{1}{2})^{-1}+a}$.

For a fixed $\mu\notin \{\mu^+,\mu^-\}$, there exists an open set $B\subset \mathcal{M}_1(\Omega)$ such that $\mu \in B$ but $\mu^+,\mu^-\notin \bar{B}$. Thus, there exist $X\Subset \mathbb{Z}$ and $\tau>0$ such that $B^{\tau}_X(\mu^+) \cap B=\emptyset$ and $B^{\tau}_X(\mu^-) \cap B= \emptyset$. By Proposition \ref{mainprop},
\begin{align*}
\mathbb{P}(\mu^{\eta}_{\Lambda_{N_k}}\in B)&\le
\mathbb{P}(\mu^{\eta}_{\Lambda_{N_k}} \notin B^{\tau}_X(\mu^+)\cup B^{\tau}_X(\mu^-))\\
&\le C k^{\left((\alpha-\frac{1}{2})^{-1}+a\right) \left( \frac{1}{2}-\alpha+\varepsilon \right)}\\
&= C k^{-1-a\left( \frac{1}{2}-\alpha - \varepsilon \right)+\varepsilon(\alpha-\frac{1}{2})^{-1}}.
\end{align*}
Thus
$$
\sum_{k=1}^{\infty}\mathbb{P}\left(\mu^{\eta}_{\Lambda_{N_k}}\in B\right)<\infty.
$$
By Borel-Cantelli, the set $B$ does not contain any limit points of the sequence $(\mu^{\eta}_{\Lambda_{N_k}})_{k\ge 1}$ $\mathbb{P}$-almost surely. Therefore, $\mu$ is not a limit point $\mathbb{P}$-almost surely.

Let us show that $\mu^+$ and $\mu^-$ are limit points  $\mathbb{P}$-almost surely. Note that the events $\{\mu^{\eta}_{\Lambda_{N_k}}\in B^{\tau}_X(\mu^+)\}$ are not independent for all $k\ge 1$. To solve the problem, define
$$ 
\tilde{W}^{\eta}_{N_k,N_{k+1}}(\sigma) = \sum_{N_k<|y|\le N_{k+1}}\sum_{|x|\le N_k}J_{xy}\sigma_x\eta_y
$$
and the probability measure
$$
\tilde{\mu}^{\eta}_{N_k,N_{k+1}}(\sigma)=\frac{1}{\tilde{Z}^{\eta}_{N_k,N_{k+1}}}e^{-\beta H^{\free}_{\Lambda_{N_k}}(\sigma)+\beta \tilde{W}^{\eta}_{N_k,N_{k+1}}(\sigma)}
$$
where $\tilde{Z}^{\eta}_{N_k,N_{k+1}}$ is the partition function given by
$$
\tilde{Z}^{\eta}_{N_k,N_{k+1}} = \sum_{\sigma \in \Omega_{\Lambda_{N_k}}}e^{-\beta H^{\free}_{\Lambda_{N_k}}(\sigma)+\beta \tilde{W}^{\eta}_{N_k,N_{k+1}}(\sigma)}.
$$
Note that
$$  
W^{\eta}_{\Lambda_{N_k}}(\sigma) = \sum_{|y|>N_{k+1}}\sum_{|x|\le N_k}J_{xy}\sigma_x\eta_y +\tilde{W}^{\eta}_{N_k,N_{k+1}}(\sigma),
$$
and
\begin{align}
\left| \sum_{|y|>N_{k+1}}\sum_{|x|\le N_k}J_{xy}\sigma_x\eta_y \right|
&\le  \sum_{|y|>N_{k+1}}\sum_{|x|\le N_k}J_{xy} \nonumber\\
&= \sum_{y>N_{k+1}}\sum_{|x|\le N_k} \frac{1}{(y-x)^{2-\alpha}} + \sum_{y<-N_{k+1}}\sum_{|x|\le N_k}\frac{1}{(x-y)^{2-\alpha}} \nonumber\\
&= \sum_{y>N_{k+1}}\sum_{|x|\le N_k} \frac{1}{(y-x)^{2-\alpha}} + \sum_{y>N_{k+1}}\sum_{|x|\le N_k}\frac{1}{(x+y)^{2-\alpha}} \nonumber\\
&\le \sum_{y>N_{k+1}} \frac{2N_k+1}{(y-N_k)^{2-\alpha}} + \sum_{y>N_{k+1}}\frac{2N_k+1}{(y-N_k)^{2-\alpha}} \nonumber\\
&=2(2N_k+1)\sum_{y>N_{k+1}} \frac{1}{(y-N_k)^{2-\alpha}} \nonumber\\
&\le 6N_k \int_{N_{k+1}}^{\infty}\frac{1}{(u-N_k)^{2-\alpha}} du\nonumber\\
&\le \frac{6}{1-\alpha} \frac{N_k}{(N_{k+1}-N_k)^{1-\alpha}} \nonumber.
\end{align}
The bound above implies
\begin{align}
\left|W^{\eta}_{\Lambda_{N_k}}(\sigma) \right| &\le  \frac{6}{1-\alpha} \frac{N_k}{(N_{k+1}-N_k)^{1-\alpha}} + \tilde{W}^{\eta}_{N_k,N_{k+1}}(\sigma) \label{ineqtildew1}\\
\tilde{Z}^{\eta}_{N_k,N_{k+1}}  &\ge \exp\left(  -\frac{6\beta}{1-\alpha} \frac{N_k}{(N_{k+1}-N_k)^{1-\alpha}} \right)Z^{\eta}_{\Lambda_{N_k}} \label{ineqtildew2}
\end{align}

Let us control $\lVert \tilde{\mu}^{\eta}_{N_k,N_{k+1}} - \mu^+\rVert_X$ with the following bound coming from the  triangle inequality,
$$ 
\lVert \tilde{\mu}^{\eta}_{N_k,N_{k+1}} - \mu^+\rVert_X \le 
\sup_{\substack{\lVert f\rVert =1 \\ D_f\subset X}} \left|\tilde{\mu}^{\eta}_{N_k,N_{k+1}}(f)- \mu^\eta_{\Lambda_{N_k}}(f)  \right|+
\lVert \mu^{\eta}_{\Lambda_{N_k}} - \mu^+\rVert_X.
$$
Using (\ref{ineqtildew1}) and (\ref{ineqtildew2}), for any $f:\Omega \to \mathbb{R}$ such that $\lVert f\rVert=1$ and $D_f\subset X$,
\begin{align*}
&\left|\tilde{\mu}^{\eta}_{N_k,N_{k+1}}(f)- \mu^\eta_{\Lambda_{N_k}}(f)  \right|
=
\left| \sum_{\sigma \in \Omega_{\Lambda_{N_k}}} f(\sigma)e^{-\beta H^{\free}_{\Lambda_{N_k}}(\sigma)} \left( \frac{e^{\beta \tilde{W}^{\eta}_{N_k,N_{k+1}}(\sigma)}}{\tilde{Z}^{\eta}_{N_k,N_{k+1}}}
- \frac{e^{\beta W^{\eta}_{\Lambda_{N_k}}(\sigma)}}{Z^{\eta}_{\Lambda_{N_k}}} \right) \right|\\
&\le  \sum_{\sigma \in \Omega_{\Lambda_{N_k}}} e^{-\beta H^{\free}_{\Lambda_{N_k}}(\sigma)} \left| \frac{e^{\beta \tilde{W}^{\eta}_{N_k,N_{k+1}}(\sigma)}}{\tilde{Z}^{\eta}_{N_k,N_{k+1}}}
- \frac{e^{\beta W^{\eta}_{\Lambda_{N_k}}(\sigma)}}{Z^{\eta}_{\Lambda_{N_k}}} \right|\\
&\le 
 \sum_{\sigma \in \Omega_{\Lambda_{N_k}}} e^{-\beta H^{\free}_{\Lambda_{N_k}}(\sigma)+\beta \tilde{W}^{\eta}_{N_k,N_{k+1}}(\sigma)} \left| \frac{1}{\tilde{Z}^{\eta}_{N_k,N_{k+1}}}
- \frac{\exp\left(  \frac{12\beta}{1-\alpha} \frac{N_k}{(N_{k+1}-N_k)^{1-\alpha}} \right)}{\tilde{Z}^{\eta}_{N_k,N_{k+1}}} \right|\\
&= \exp\left(  \frac{12\beta}{1-\alpha} \frac{N_k}{(N_{k+1}-N_k)^{1-\alpha}} \right)-1.
\end{align*}
Choose $N_{k+1}=N_{k+1}(\alpha,\beta)>k^{(\alpha-\frac{1}{2})^{-1}+a}$ such that
$$ 
\exp\left(  \frac{12\beta}{1-\alpha} \frac{N_k}{(N_{k+1}-N_k)^{1-\alpha}} \right)-1<\frac{\tau}{2}.
$$
Thus,
\be\label{inclusiontilde}
\left\{\lVert \tilde{\mu}^{\eta}_{N_k,N_{k+1}} - \mu^+\rVert_X\ge \tau\right\} \subseteq \left\{\lVert \mu^{\eta}_{\Lambda_{N_k}} - \mu^+\rVert_X \ge \frac{\tau}{2}\right\}
\ee
and the same inclusion is true for the minus boundary condition. Thus,
{\small $$
\mathbb{P}\left( \lVert\tilde{\mu}^{\eta}_{N_k,N_{k+1}}-\mu^+ \rVert_X \land \lVert \tilde{\mu}^{\eta}_{N_k,N_{k+1}}-\mu^- \rVert_X \ge \tau\right) 
\le
\mathbb{P}\left( \lVert \mu^{\eta}_{\Lambda_{N_k}}-\mu^+ \rVert_X \land \lVert \mu^{\eta}_{\Lambda_{N_k}}-\mu^- \rVert_X \ge \frac{\tau}{2}\right).
$$}
By Proposition \ref{mainprop},
\be\label{ineq:tildemu}
\limsup_{N\to \infty}N^{\alpha-\frac{1}{2}-\varepsilon}\mathbb{P}\left( \lVert\tilde{\mu}^{\eta}_{N_k,N_{k+1}}-\mu^+ \rVert_X \land \lVert \lVert\tilde{\mu}^{\eta}_{N_k,N_{k+1}}-\mu^- \rVert_X \ge \tau\right)<\infty.
\ee

For every $X\Subset \mathbb{Z}$ and $\tau>0$ such that $B^{\tau}_X(\mu^+)\cap B^{\tau}_X(\mu^-)=\emptyset$, by the symmetry of the distribution,
$$
\mathbb{P}\left(\tilde{\mu}^{\eta}_{N_k,N_{k+1}}\in B^{\tau}_X(\mu^+)\right)=\mathbb{P}\left(\tilde{\mu}^{\eta}_{N_k,N_{k+1}}\in B^{\tau}_X(\mu^-)\right).
$$
by (\ref{ineq:tildemu}),
$$
\lim_{k\to \infty}\mathbb{P}\left(\tilde{\mu}^{\eta}_{N_k,N_{k+1}}\in B^{\tau}_X(\mu^+)\right)=\lim_{k\to \infty}\mathbb{P}\left(\tilde{\mu}^{\eta}_{N_k,N_{k+1}}\in B^{\tau}_X(\mu^-)\right)=\frac{1}{2}.
$$
Since the events $\left\{ \tilde{\mu}^{\eta}_{N_k,N_{k+1}}\in B^{\tau}_X(\mu^\pm) \right\}$ are independent for $k\ge 1$, by the  Borel–Cantelli Theorem,
$$ 
\mathbb{P}\left( \tilde{\mu}^{\eta}_{N_k,N_{k+1}}\in B^{\tau}_X(\mu^\pm) \ \text{ i.o.} \right)=1.
$$
Using
\begin{align*}
\lVert \mu^{\eta}_{\Lambda_{N_k}} - \mu^+\rVert_X 
&\le 
\sup_{\substack{\lVert f\rVert =1 \\ D_f\subset X}} \left|\tilde{\mu}^{\eta}_{N_k,N_{k+1}}(f)- \mu^\eta_{\Lambda_{N_k}}(f)  \right|+
\lVert \tilde{\mu}^{\eta}_{N_k,N_{k+1}} - \mu^+\rVert_X\\
&\le \frac{\tau}{2}+\lVert \tilde{\mu}^{\eta}_{N_k,N_{k+1}} - \mu^+\rVert_X,
\end{align*}
 we conclude
$$ 
\mathbb{P}\left( \mu^{\eta}_{\Lambda_{N_k}}\in B^{3\tau/2}_X(\mu^\pm) \ \text{ i.o.} \right)=1.
$$
By compactness arguments, the weak closure $B^{3\tau/2}_X(\mu^{\pm})$ contains a limit point $\mathbb{P}$-almost surely. The proof is finished noticing that $\{\mu^+,\mu^-\}=\cap_{X,\tau} \bar{B}^{3\tau/2}_X(\mu^{\pm})$.
\end{proof}

\begin{proof}[Proof of Theorem \ref{thmnull}]
Assume that $B$ does not contain either $\mu^+$ or $\mu^-$. To show
$$
\mathbb{P}\left( \displaystyle\lim_{k\to \infty}\frac{1}{N_k}\displaystyle\sum_{n=1}^{N_k}\mathbbm{1}_{\mu^{\eta}_{\Lambda_{n}} \in B} =0 \right)=1,
$$
it is enough to show that, for every $\delta'>0$,
$$
\mathbb{P}\left( \frac{1}{N_k}\displaystyle\sum_{n=1}^{N_k} \mathbbm{1}_{\mu^{\eta}_{\Lambda_{n}} \in B} >\delta' \ \text{i.o.} \right) =0.
$$
Note that, by Markov's inequality and by Proposition \ref{mainprop}, there exists $n_0\ge 1$  such that, for every  $n\ge n_0$, we have $\mathbb{P}\left(\mu^{\eta}_{\Lambda_n}\in B \right)\le Dn^{-\alpha+\frac{1}{2}-\varepsilon}$ for a fixed $D>0$. Then,
{\small \begin{align*}
\mathbb{P}\left( \frac{1}{N_k}\displaystyle\sum_{n=1}^{N_k} \mathbbm{1}_{\mu^{\eta}_{\Lambda_{n}} \in B} >\delta' \right) \le  \frac{1}{\delta' N_k}\displaystyle\sum_{n=1}^{N_k} \mathbb{P}\left( \mu^{\eta}_{\Lambda_{n}} \in B\right)
< \frac{n_0}{\delta' N_k}+
D\frac{N_k^{-\alpha+\frac{3}{2}-\varepsilon}}{\delta' N_k} 
=\frac{n_0}{\delta' N_k}+ \frac{D}{\delta'} N_{k}^{-\alpha+\frac{1}{2}-\varepsilon}.
\end{align*}}
Thus, using $N_k$ in the same way as in the proof of Theorem \ref{thma},
$$
\sum_{k=1}^{\infty}N_{k}^{-\alpha+\frac{1}{2}-\varepsilon}<\infty.
$$
Therefore,
$$
\displaystyle\sum_{k=1}^{\infty}\mathbb{P}\left( \displaystyle\frac{1}{N_k}\displaystyle\sum_{n=1}^{N_k} \mathbbm{1}_{\mu^{\eta}_{\Lambda_{n}} \in B} >\delta' \right) <\infty,
$$
and the proof finishes by Borel-Cantelli.

Assume $\mu^+,\mu^-\in B$. Choose $\tau>0$ and $X\Subset \mathbb{Z}$ such that  $B^{\tau}_X(\mu^+)\cup B^{\tau}_X(\mu^-)\subseteq B$ and $B^{\tau}_X(\mu^+)\cap B^{\tau}_X(\mu^-)=\emptyset$.
We want to show
$$ 
\mathbb{P}\left( \displaystyle\lim_{k\to \infty} \frac{1}{N_k}\displaystyle\sum_{n=1}^{N_k} \mathbbm{1}_{\mu^{\eta}_{\Lambda_n} \in B^{\tau}_X(\mu^+)\cup B^{\tau}_X(\mu^-)}=1 \right)=1
$$
and it is enough to show that, for every $\delta'>0$,
$$
\mathbb{P}\left( \left| \displaystyle\frac{1}{N_k}\displaystyle\sum_{n=1}^{N_k} \mathbbm{1}_{\mu^{\eta}_{\Lambda_n} \in B^{\tau}_X(\mu^+)\cup B^{\tau}_X(\mu^-)}-1\right|>\delta' \ \text{i.o.} \right)=0.
$$

To apply Borel-Cantelli, we should prove
$$
\displaystyle\sum_{k=1}^{\infty}\mathbb{P}\left( \left| \displaystyle\frac{1}{N_k}\displaystyle\sum_{n=1}^{N_k} \mathbbm{1}_{\mu^{\eta}_{\Lambda_n} \in B^{\tau}_X(\mu^+)\cup B^{\tau}_X(\mu^-)}-1\right|>\delta' \right) <\infty.
$$

Note that, by Markov's inequality and by Proposition \ref{mainprop}, using $$\mathbb{P}(\mu^{\eta}_{\Lambda_n} \notin B^{\tau}_X(\mu^+)\cup B^{\tau}_X(\mu^-)) \le Dn^{-\alpha +\frac{1}{2}+\varepsilon}$$  for every $n\ge n_0$,
\begin{align*}
\mathbb{P}\left( \left| \displaystyle\frac{1}{N_k}\displaystyle\sum_{n=1}^{N_k} \mathbbm{1}_{\mu^{\eta}_{\Lambda_n} \in B^{\tau}_X(\mu^+)\cup B^{\tau}_X(\mu^-)}-1\right|>\delta' \right)
&= \mathbb{P}\left( \displaystyle\frac{1}{N_k}\displaystyle\sum_{n=1}^{N_k} \left(1-\mathbbm{1}_{\mu^{\eta}_{\Lambda_n} \in B^{\tau}_X(\mu^+)\cup B^{\tau}_X(\mu^-)}\right)>\delta' \right)\\
&= \mathbb{P}\left( \displaystyle\frac{1}{N_k}\displaystyle\sum_{n=1}^{N_k} \mathbbm{1}_{\mu^{\eta}_{\Lambda_n} \notin B^{\tau}_X(\mu^+)\cup B^{\tau}_X(\mu^-)}>\delta' \right)\\
&\le \frac{1}{\delta' N_k}\sum_{n=1}^{N_k}\mathbb{P}(\mu^{\eta}_{\Lambda_n} \notin B^{\tau}_X(\mu^+)\cup B^{\tau}_X(\mu^-))\\
&\le \frac{n_0}{\delta' N_k}+ \frac{D}{\delta'} N_{k}^{-\alpha+\frac{1}{2}-\varepsilon}.
\end{align*}
Using $N_k>k^{\left( \alpha-\frac{1}{2} \right)^{-1}+a}$ for a fixed $a>0$,
$$
\sum_{k=1}^{\infty}\left(\frac{n_0}{\delta' N_k}+ \frac{D}{\delta'} N_{k}^{-\alpha+\frac{1}{2}-\varepsilon} \right)<\infty.
$$
Therefore,
$$
\mathbb{P}\left( \displaystyle\lim_{k\to \infty} \frac{1}{N_k}\displaystyle\sum_{n=1}^{N_k} \mathbbm{1}_{\mu^{\eta}_{\Lambda_n} \in B}=1 \right)=1.
$$

By symmetry of the distribution, we have
$$ 
\mathbb{P}\left( \displaystyle\lim_{k\to \infty} \frac{1}{N_k}\displaystyle\sum_{n=1}^{N_k} \mathbbm{1}_{\mu^{\eta}_{\Lambda_n} \in B^{\tau}_X(\mu^+)}=\displaystyle\lim_{k\to \infty} \frac{1}{N_k}\displaystyle\sum_{n=1}^{N_k} \mathbbm{1}_{\mu^{\eta}_{\Lambda_n} \in  B^{\tau}_X(\mu^-)} \right)=1.
$$
Therefore, since the balls $B^{\tau}_X(\mu^+)$ and $B^{\tau}_X(\mu^-)$ are disjoint, 
$$
\mathbb{P}\left( \displaystyle\lim_{k\to \infty} \frac{1}{N_k}\displaystyle\sum_{n=1}^{N_k} \mathbbm{1}_{\mu^{\eta}_{\Lambda_n} \in B^{\tau}_X(\mu^+)}=\frac{1}{2} \right)=1
$$
and
$$
\mathbb{P}\left( \displaystyle\lim_{k\to \infty} \frac{1}{N_k}\displaystyle\sum_{n=1}^{N_k} \mathbbm{1}_{\mu^{\eta}_{\Lambda_n} \in B^{\tau}_X(\mu^-)}=\frac{1}{2} \right)=1.
$$
This implies that
$$
\mathbb{P}\left( \displaystyle\lim_{k\to \infty} \frac{1}{N_k}\displaystyle\sum_{n=1}^{N_k} \mathbbm{1}_{\mu^{\eta}_{\Lambda_n} \in B}=\frac{1}{2} \right)=1
$$
for every $B$ such that $\mu^{\pm}\in B$ and $\mu^{\mp}\notin B$.
\end{proof}

\section*{Appendix 2: Proof of Proposition \ref{toy_model_prop}}

\begin{proof}[Proof of Proposition \ref{toy_model_prop}]
Item (1): Take a subsequence $N_k=k^{2(\alpha-\frac{1}{2})^{-1}}$, for $\alpha > \frac{1}{2}$ so that $N_k$ is indeed a diverging increasing subsequence of integers.

Consider first  $\mu \not \in \{\delta_+,\delta_-\}$. Then there exists a weakly open set $U\subset \mathcal{M}_1(\Omega)$ such that $\mu\in U$ and $\delta_+,\delta_-\notin U$. Choosing a finite set $X\subset \Z$ and $\varepsilon>0$ such that 
$$
B_X^{\varepsilon}(\delta_+)\cap U = \emptyset \quad \text{and} \quad B_X^{\varepsilon}(\delta_-)\cap U = \emptyset.
$$
 We have
\beq 
\mathbb{P}\left(\mu^{\eta}_{\Lambda_{N_k}}\in U\right) &\le & \bbP\left( \mu^{\eta}_{\Lambda_{N_k}} \notin B_X^{\varepsilon}(\delta_+)\cup B_X^{\varepsilon}(\delta_-) \right) \nonumber\\
&= &\mathbb{P}\left( \lVert \mu^{\eta}_{\Lambda_{N_k}} - \delta_+ \rVert_X \land \lVert \mu^{\eta}_{\Lambda_{N_k}} - \delta_- \rVert_X \ge \varepsilon  \right)\nonumber\\
&\le& C \cdot N_k^{\frac{1}{2}-\alpha}=  C \cdot k^{-2}\label{Page15},
\eeq
where $C=C(\alpha)>0$ is a constant. For every $\theta\in (0,1)$, let us show that
\be\label{eq2}
\bbp\left( \kappa^{\emp}_{N_k}(U) >\theta \text{ i.o.} \right)=0.
\ee
We have
$$
\bbp\left( \kappa^{\emp}_{N_k}(U) >\theta \right) \le \frac{1}{\theta}\E( \kappa^{\emp}_{N_k}(U)) = \frac{1}{\theta N_k} \sum_{n=1}^{N_k} \E(\delta_{\mu_{\Lambda_n}}(U)).
$$
Since  $\E(\delta_{\mu_{\Lambda_n}}(U))= \bbp(\mu_{\Lambda_n} \in U)$,
$$
\bbp\left( \kappa^{\emp}_{N_k}(U) >\theta \right) \le\frac{1}{\theta N_k}\sum_{n=1}^{N_k}\frac{1}{n^{\alpha-\frac{1}{2}}} \le \frac{C}{\theta N_k}N_k^{\frac{3}{2}-\alpha} = \frac{C}{\theta}N_k^{\frac{1}{2}-\alpha}.
$$
Thus, for $N_k = k^{2(\alpha-\frac{1}{2})^{-1}}$,
$$
\sum_{k=1}^{\infty} \bbp\left( \kappa^{\emp}_{N_k}(U) >\theta \right) \le \frac{C}{\theta}\sum_{k=1}^{\infty}\frac{1}{k^2}<\infty.
$$
By the Borel-Cantelli Lemma, we conclude (\ref{eq2}). Therefore
$$
\bbp\left( \lim_{k\to\infty}\kappa^{\emp}_{N_k}(U) =0  \right)=1,
$$
i.e., $\kappa^{\emp}_{N_k}(U)$ converges to 0 almost surely.

Now, choose $X$ and $\eps$ such that $B^{\eps}_X(\delta_+) \cap B^{\eps}_X(\delta_-)=\emptyset$. We have to show
\be\label{eq3}
\bbp\left( \lim_{k\to\infty}\kappa^{\emp}_{N_k}(B^{\eps}_X(\delta_+)) =\frac{1}{2}  \right)=\bbp\left( \lim_{k\to\infty}\kappa^{\emp}_{N_k}(B^{\eps}_X(\delta_-)) =\frac{1}{2}  \right)=1.
\ee

Let us first show that 
 for every $\theta \in (0,1)$,
\be\label{eq3-1}
\bbp\left( \left| \kappa^{\emp}_{N_k}(B^{\eps}_X(\delta_+) \cup B^{\eps}_X(\delta_-)) -1\right| >
\theta  \text{ i.o.} \right)=0.
\ee
First, by the triangle inequality,
$$
\bbp\left( \left| \kappa^{\emp}_{N_k}(B^{\eps}_X(\delta_+) \cup B^{\eps}_X(\delta_-)) -1\right| >
\theta \right) \le \frac{1}{\theta N_k}\sum_{n=1}^{N_k}\E ( | \delta_{\mu_{\Lambda_n}}( B^{\eps}_X(\delta_+) \cup B^{\eps}_X(\delta_-) )-1 | ).
$$
Since
$$
\E ( | \delta_{\mu_{\Lambda_n}}( B^{\eps}_X(\delta_+) \cup B^{\eps}_X(\delta_-) )-1 | ) = 
\bbp (\mu_{\Lambda_n} \notin   B^{\eps}_X(\delta_+) \cup B^{\eps}_X(\delta_-)),
$$
and, that the latter is dominated by $C n^{1/2- \alpha}$) by (\ref{Page15}). So , choosing again $N_k = k^{2(\alpha-\frac{1}{2})^{-1}}$,
$$
\bbp\left( \left| \kappa^{\emp}_{N_k}(B^{\eps}_X(\delta_+) \cup B^{\eps}_X(\delta_-)) -1\right| >
\theta \right) \le 
\frac{1}{\theta N_k}\sum_{n=1}^{N_k}C n^{\frac{1}{2}-\alpha}\le \frac{C}{\theta N_k} N_k^{\frac{3}{2}-\alpha} =  \frac{C}{\theta} k^{-2}.
$$
and  (\ref{eq3-1}) holds for slow decays as we desired, by Borel-Cantelli lemma.
Thus, we get
$$
\bbp\left( \lim_{k\to\infty}\kappa^{\emp}_{N_k}(B^{\eps}_X(\delta_+) \cup B^{\eps}_X(\delta_-)) = 1  \right)=1.
$$
or equivalently
$$
\bbp\left( \lim_{k\to\infty}\kappa^{\emp}_{N_k}(B^{\eps}_X(\delta_+)) +  \kappa^{\emp}_{N_k} (B^{\eps}_X(\delta_-)) =1  \right)=1.
$$
Since the distribution of the random boundary condition is symmetric, we have
\be\label{eq4}
\kappa^{\emp}_{N_k}[-\eta](B^{\eps}_X(\delta_+)) = \kappa^{\emp}_{N_k}[\eta](B^{\eps}_X(\delta_-)).
\ee
so
\be\label{eq5}
\bbp\left( \lim_{k\to\infty}\kappa^{\emp}_{N_k}(B^{\eps}_X(\delta_+)) =\frac{1}{2}  \right)=\bbp\left( \lim_{k\to\infty}\kappa^{\emp}_{N_k}(B^{\eps}_X(\delta_-)) =\frac{1}{2}  \right).
\ee

All together, 
(\ref{eq3}) holds.

{\em 2. Intermediate decays $ 0 \leq \alpha < \frac{1}{2}$ : bounded boundary energies.}

In this regime there is actually no major difference between the arguments for the toy model, which provide a metastate supported by mixed ground states, and those for the positive-low-temperature model (described in next Section, item (b) of Theorem \ref{thm1}, which provides  a "new" type of metastate supported by mixed Gibbs measures.

In both cases, the almost surely finiteness, asymptotically,  of the  boundary energies along  subsequences provides at low temperatures absolute continuity of the limiting measures in the thermodynamc limit. The proof uses an adaptation of the argument of "Equivalence of boundary conditions" of  \cite{BLP}.
\end{proof}

\section*{Appendix 3: Proof of Theorem \ref{thm1}}

\begin{itemize}
\item Item (a) is a direct consequence of Theorem \ref{thma}, proven in Section 7.

\item Let us prove item (b). Although it is the main new phenomenon occurring in our long-range model with random boundary condition, its proof appears to be less painful than proving item (a), which is an extension to long-range of the higher dimensional ($d \geq 2$) behaviors in the n.n. case. Here, for intermediate decays, a.s. convergence of the random boundary energies allows to proceed without invoking the contour machinery, by more abstract equivalence of b.c. type arguments.

We will say that two boundary conditions left of the origin,  $\eta$ and $\eta'$, have a \emph{finite energy difference} if
\be \label{FiniteNRJ}
C_{\eta,\eta'}=\sup_{N \in \mathbb{N}}\sup_{\sigma \in \Omega_{\Lambda_N}}\left| W^{\eta}_N(\sigma) - W^{\eta'}_N(\sigma) \right|<\infty.
\ee

For any Gibbs measure $\rho \in \mathcal{G}(\gamma)$, there exists a unique probability measure $\mu_{\rho}$ concentrated on $\ex \mathcal{G}(\gamma)$, such that
$$
\rho= \int_{\ex\mathcal{G}(\gamma)} \nu\mu_{\rho}(d\nu).
$$
Two Gibbs states $\rho$ and $\rho$ are \emph{equivalent} (mutually absolutely continuous) if and only if $\mu_{\rho}$ and $\mu_{\rho'}$ are equivalent. This means that they have the same support. 

The following result \cite{BLP} proves that two Gibbs measures with  b.c. $\eta$ and $\eta'$ of finite energy difference have the same support in their mixtures.

\begin{theorem}
Given two
boundary conditions
 $\eta$ and $\eta'$ having a finite energy difference and a sequence $\Lambda_n\uparrow \mathbb{Z}$ such that
$$
\lim_{n\to \infty}\mu^{\eta}_{\Lambda_n}=\rho, \quad\quad \lim_{n\to \infty}\mu^{\eta'}_{\Lambda_n}=\rho',
$$
then $\mu$ and $\mu'$ are equivalent and so are $\mu_{\rho}$ and $\mu_{\rho'}$. In particular, if $\rho$ is extremal, $\rho=\rho'$.
\end{theorem}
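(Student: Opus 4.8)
The plan is to exploit the finite-volume Radon--Nikodym derivative between $\mu^\eta_\Lambda$ and $\mu^{\eta'}_\Lambda$ and to show it is bounded above and below \emph{uniformly} in $\Lambda$, so that the bound survives in the weak limit. Using the relation $H^\eta_\Lambda(\sigma)=H^f_\Lambda(\sigma)-W^\eta_\Lambda(\sigma)$, one computes
$$
\frac{\mu^\eta_\Lambda(\sigma)}{\mu^{\eta'}_\Lambda(\sigma)}=\frac{Z^{\eta'}_\Lambda}{Z^\eta_\Lambda}\,e^{\beta\left(W^\eta_\Lambda(\sigma)-W^{\eta'}_\Lambda(\sigma)\right)}.
$$
Since $Z^\eta_\Lambda/Z^{\eta'}_\Lambda=\mu^{\eta'}_\Lambda\bigl(e^{\beta(W^\eta_\Lambda-W^{\eta'}_\Lambda)}\bigr)$, the finite-energy-difference hypothesis (\ref{FiniteNRJ}) immediately yields $e^{-\beta C_{\eta,\eta'}}\le Z^\eta_\Lambda/Z^{\eta'}_\Lambda\le e^{\beta C_{\eta,\eta'}}$, and hence
$$
e^{-2\beta C_{\eta,\eta'}}\le \frac{\mu^\eta_\Lambda(\sigma)}{\mu^{\eta'}_\Lambda(\sigma)}\le e^{2\beta C_{\eta,\eta'}}
$$
for every $\sigma$ and every $\Lambda$. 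This is the crux: the bound is completely uniform in the volume, precisely because the boundary-energy difference is bounded uniformly in $N$.

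The second step is to pass to the limit. For any nonnegative local function $f$ the pointwise bound above gives $e^{-2\beta C_{\eta,\eta'}}\mu^{\eta'}_\Lambda(f)\le\mu^\eta_\Lambda(f)\le e^{2\beta C_{\eta,\eta'}}\mu^{\eta'}_\Lambda(f)$. Evaluating along $\Lambda=\Lambda_n\uparrow\mathbb{Z}$ and using the weak convergences $\mu^\eta_{\Lambda_n}\to\rho$ and $\mu^{\eta'}_{\Lambda_n}\to\rho'$, I would obtain $e^{-2\beta C_{\eta,\eta'}}\rho'(f)\le\rho(f)\le e^{2\beta C_{\eta,\eta'}}\rho'(f)$ for all nonnegative local $f$. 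A standard monotone-class argument then promotes this to the inequality of measures $e^{-2\beta C_{\eta,\eta'}}\rho'\le\rho\le e^{2\beta C_{\eta,\eta'}}\rho'$ on all of $\mathcal{F}$, so that $\rho$ and $\rho'$ are mutually absolutely continuous with Radon--Nikodym densities bounded away from $0$ and $\infty$.

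Finally I would transfer equivalence to the extremal decompositions. In full generality this rests on the fact that the decomposition measure $\mu_\rho$ is the law, under $\rho$, of the tail-measurable map sending a configuration to the extremal state selected by its tail, so that measures with the same tail-null sets induce equivalent decomposition measures. In the present model this is elementary: since $\ex\mathcal{G}(\gamma)=\{\mu^+,\mu^-\}$ one has $\rho=\lambda\mu^++(1-\lambda)\mu^-$ and $\rho'=\lambda'\mu^++(1-\lambda')\mu^-$, with $\mu_\rho=\lambda\delta_{\mu^+}+(1-\lambda)\delta_{\mu^-}$. Mutual absolute continuity of $\rho,\rho'$ together with the mutual singularity of $\mu^+$ and $\mu^-$ forces $\lambda$ and $\lambda'$ to lie simultaneously in $\{0\}$, $(0,1)$, or $\{1\}$, whence $\mu_\rho\sim\mu_{\rho'}$; and if $\rho$ is extremal ($\lambda\in\{0,1\}$) then $\lambda'=\lambda$ and $\rho'=\rho$. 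The main obstacle is the limiting step: weak convergence controls only local functions, so some care (the monotone-class/approximation argument sketched above, as carried out in \cite{BLP}) is needed to upgrade the uniform pointwise bound to genuine mutual absolute continuity of the infinite-volume measures, and thereby to their extremal decompositions.
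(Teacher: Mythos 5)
Your proof is correct and follows essentially the same route as the paper: a two-sided bound, uniform in the volume, on the finite-volume Radon--Nikodym derivative $\mu^{\eta}_\Lambda/\mu^{\eta'}_\Lambda$, obtained from the partition-function ratio together with the pointwise bound $|W^{\eta}_\Lambda-W^{\eta'}_\Lambda|\le C_{\eta,\eta'}$, followed by passage to the weak limit. The paper (following \cite{BLP}) records exactly this mechanism via the cylinder-event comparison $e^{-2\beta C_{\eta,\eta'}}\mu^{\eta}_\Lambda(E_{\Lambda'})\le \mu^{\eta'}_\Lambda(E_{\Lambda'})\le e^{2\beta C_{\eta,\eta'}}\mu^{\eta}_\Lambda(E_{\Lambda'})$ and leaves the transfer to the extremal decompositions implicit, which you spell out correctly using the two-point structure of $\ex\mathcal{G}(\gamma)=\{\mu^+,\mu^-\}$ and the mutual singularity of the two pure phases.
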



In our simple situation all Gibbs measures are of the form 
$$
\mu = \lambda \mu^{+} +(1-\lambda) \mu{-}
$$
When one has free boundary conditions, the limit  Gibbs measure is symmetric and thus one has $\lambda= \frac{1}{2}$.
When one has that $W^\eta$ is finite, almost surely, this means that the corresponding weights $\lambda[\eta]$ are different from $0$ and $1$, almost surely. Thus the limit states on which the metastate lives are the mixed ones. The weights with which the different mixtures occur can depend on $\alpha$ and the temperature $\beta$, see the decomposition (\ref{decompo}) or more formally
\begin{equation}\label{expwe}
\mu^{\eta}_\Lambda = \lambda_\Lambda[\eta] \cdot \mu_\Lambda^+ + \big(1-\lambda_\Lambda[\eta]\big) \cdot  \mu_\Lambda^-
\end{equation} and the formal weights 
\be \label{weights}
\lambda_\Lambda[\eta]\sim \frac{e^{-\beta F(W_{\Lambda}^\eta  )}}{e^{-\beta F(W_{\Lambda}^\eta)} + e^{\beta F(W_{\Lambda}^\eta)}}=\frac{1}{1+e^{-2\beta F(W_{\Lambda}^\eta)}}.
\ee

This is the case for the intermediate range of decays  $0 \leq \alpha < \frac{1}{2}$. For such $\alpha$'s, consider typical random b.c. $\eta$ and $\eta'$ with finite random boundary energy differences, then   (\ref{FiniteNRJ}) holds.:
\begin{equation}\label{FiniteNRJDiff}
C_{\eta,\eta'}= \sup_{N \in \mathbb{N}}\sup_{\sigma \in \Omega_{\Lambda_N}} \Big| W_N^\eta(\sigma) - W_N^\eta(\sigma) \Big| < \infty
\end{equation}

To see that this case leads to equivalent infinite-volume limits of $\mu_\Lambda^\eta$ and $\mu_{\Lambda}^{\eta'}$ -- and thus our  result of a fully dispersed metastate on mixed states in this regime --, consider another volume $\Lambda'$ finite and $E_{\Lambda'}$ a cylinder on it. Then the ratios of partition functions at different volumes $\Delta=\Lambda,\Lambda'$ satisfy the bound
$$
\frac{Z_\Delta^\eta}{Z_\Delta^{\eta'}} = \frac{\int_{\Omega_\Delta} e^{-\beta \big (H_\Delta^\eta(\sigma) - H_\Delta^{\eta'}(\sigma) \big)}  e^{-\beta H_\Delta^{\eta'} (\sigma)} \d \sigma_\Delta}{\int_{\Omega_\Delta}  e^{-\beta H^{\eta'}_\Delta (\sigma) \d \sigma_\Delta} }\leq e^{\beta C_{\eta,\eta'}}
$$
to eventually get, as in \cite{BLP}, that for any $\Lambda \supset \Lambda'$
$$
e^{-2 \beta C_{\eta,\eta'} }\mu_\Lambda^\eta(E_{\Lambda'}) \leq \mu_{\Lambda}^{\eta'}(E_{\Lambda'}) \leq e^{-2 \beta C_{\eta,\eta'}} \mu_\Lambda^\eta(E_{\Lambda'}) 
$$
proving our claims. 

\section*{Appendix 4:  Mayer expansion of the characteristic function}


Let us first sketch how one needs and uses a low-temperature Mayer expansion to conclude by using a WLLT, as described in various parts  of the paper.

For free boundary conditions there is a convergent low-temperature expansion. Contours have an energy growing with their length, with a constant about half of that for pure boundary conditions.  So the low-temperature expansion converges for temperatures low enough. (This  works immdiately  for $\alpha > \alpha^{*}$, to go  beyond that the extensions due to Kimura or Littin-Picco, and we also provide the extension in Section 6). The sum over all configurations is twice the sum over all contour configurations  and thus one writes $ Z_N^{f} = Z_{N,+}^{f} + Z_{N,-}^{f} = 2 Z_{N,+}^{f}$.
The subscripts $+$ and $-$ indicate the signs of the spins external to the contours. Boundary condition $\eta$ on volume $\Lambda$ induces equivalently a collection of correlated external fields $h_i(\eta) = \sum_{j \in \Lambda^c} J(i-j) \eta_j $. If $i$ is the distance to the boundary,  $|h_i|= O(|i|^{\alpha - \frac{3}{2}}) $.

Let a boundary condition be "$k$-good" if all the local fields at distance $j$,  with $ j > n_k$,  from the boundary are in absolute strength less than $| j|^{\alpha - \frac{3}{2} + \varepsilon} $. We will choose $n= n(k)$ large enough, so that the probability of NOT being $k$-good is less than $\frac{1}{k^2}$. 
This is possible due to  an exponential Chebyshev inequality and a union bound, and  summing over $k$ afterwards allows us the use of Borel-Cantelli arguments. Moreover we require that adding the truncated field terms, that is all terms 
 at distance larger than $n_k$ from the boundary for both $Z_{N,+}^{\eta}$ and $Z_{N,-}^{\eta}$ results in a convergent cluster expansion by the arguments of \cite{BEEKR}. This corresponds to  the Mayer expansion part of \cite{ENS1}. Notice that the decay suffices --the Imry-Ma regime of \cite{BEEKR} is covered-- and that due to the truncation a weakness condition holds, while we choose later the volume $N=N_k$, dependent on $n_k$ to be large enough. 
By Borel-Cantelli there will be a measure one subset of good boundary conditions, each of which will be $k$-good for a certain $k$ and have that $n^2 < N^{\delta}$.\\

 Thus we will have that the truncated difference in free energies $F_N^{\eta}$  between $+$-phase  and $-$-phase, which is a difference of terms of the form 
$\mu_{N, \pm}^{f} \big[e^{W_{n,N}^{\eta} }\big]$, will be outside a window of size
 $N^{\epsilon}$  with probability $N^{\alpha - \frac{1}{2} + \epsilon}$. So the probability that  $\eta$ is good is close to 1, and for a good $\eta$ the average over such $\eta$ of 
$e^{i s  F^\eta_N}$  is of leading order $e^{- s^2 N^{2 \alpha -1}}$, which gives the WLLT.  We can integrate over an $s$-interval which does not extend to infinity,  but is large enough to give the Gaussian answer. Then we get a Mayer expansion to implement the WLLT as follows.

Let us precise a bit more with the notations of Section 6 (cluster expansion with decoupling intervals). Since
$$
\log \tilde{\Xi}^{\eta}_{N_k,n_k} =\sum_{n=1}^{\infty} \frac{1}{n!}\sum_{R_1}\ldots\sum_{R_n}\phi^T(R_1,\ldots,R_n)\prod_{i=1}^n\xi^{\eta}_{N_k,n_k}(R_i)
$$
is absolutely convergent for sufficiently large $\beta$ and restricted to good boundary conditions $
\eta\in \Omega_0$, we have
$$
\log \tilde{\Xi}^{\eta}_{N_k,n_k}  = \sum_{R}\xi^{\eta}_{N_k,n_k}(R)(1+\Pi^{\eta}_{N_k,n_k}(R)),
$$
where
$$
\Pi^{\eta}_{N_k,n_k}(R) = \sum_{n=2}^{\infty}\frac{1}{n!}\sum{R_2}\ldots\sum_{R_n}\phi^T(R,R_2,\ldots,R_n) \prod_{i=2}^n\xi^{\eta}_{N_k,n_k}(R_i).
$$
Thus,
\begin{align*}
F^{\eta}_{N_k,n_k}&=\log \tilde{\Xi}^{\eta}_{N_k,n_k} -\log \tilde{\Xi}^{\eta}_{N_k,n_k}\\
&=\sum_{R}(\xi^{\eta}_{N_k,n_k}(R)(1+\Pi^{\eta}_{N_k,n_k}(R))-\xi^{-\eta}_{N_k,n_k}(R)(1+\Pi^{-\eta}_{N_k,n_k}(R))).
\end{align*}
Define
$$
U^{\eta}_{N_k,n_k}(R)=\xi^{\eta}_{N_k,n_k}(R)(1+\Pi^{\eta}_{N_k,n_k}(R))-\xi^{-\eta}_{N_k,n_k}(R)(1+\Pi^{-\eta}_{N_k,n_k}(R)).
$$
We have
\begin{align*}
\psi(t)= \E\lf e^{it W^{\eta}_{N_k}}e^{itF^{\eta}_{N_k,n_k}} e^{it\bar{F}^{\eta}_{N_k,n_k}} \ri.
\end{align*}
Let us control the middle term, applying the Mayer expansion,
\begin{align*}
\psi_{N_k,n_k}(t)&= \E\lf e^{itW^{\eta}_{N_k}+itF^{\eta}_{N_k,n_k}} \ri\\
&=\E\lf e^{itW^{\eta}_{N_k}}\prod_{R}e^{itU^{\eta}_{N_k,n_k}(R)} \ri\\
&=\E\lf e^{itW^{\eta}_{N_k}}\sum_{\calr} \prod_{R\in \calr}(e^{itU^{\eta}_{N_k,n_k}(R)} -1) \ri\\
&=\sum_{\calr}\E\lf  e^{itW^{\eta}_{N_k}} \prod_{R\in \calr}(e^{itU^{\eta}_{N_k,n_k}(R)} -1)\ri\\
&=\sum_{\calr}\E\lf  e^{itW^{\eta}_{N_k}} \prod_{R\in \calr}(e^{itU^{\eta}_{N_k,n_k}(R)} -1) \mathbbm{1}_{\Omega_0}\ri.
\end{align*}
Define
$$
w_{N_k,n_k}(\calr)=\E\lf  e^{itW^{\eta}_{N_k}} \prod_{R\in \calr}(e^{itU^{\eta}_{N_k,n_k}(R)} -1) \mathbbm{1}_{\Omega_0}\ri.
$$
Since the model is long-range, the weight $w$ is not multiplicative as in the nearest neighbor model, but we have derived equivalent cluster expansions with appropriate decoupling intervals in Section 6. 
\end{itemize} 


We have
$$
\mu^{\eta}_{\Lambda_N,\beta}(\sigma) = \mu^f_{\Lambda_N,\beta}\left( e^{-\beta W^{\eta}_{\Lambda_N}(\sigma)} \right).
$$
Define, for $C\subset \Lambda_N$ and $y\notin \Lambda_N$,
$$
h_C(y) = \sum_{x\in C}J_{xy}\sigma_x.
$$
Also, for a set $A\subset \Lambda^c_N$, define
$$
Z^{\eta}_{N,\beta}(C|A) = \sum_{\sigma\in \Omega_{\Lambda_N}} e^{-\beta H_{\Lambda_N}(\sigma)} \prod_{y\in A}e^{\beta h_C(y)\eta_y}
$$ 
and
$$
Z^{\eta}_{N,\beta}(C|\emptyset) = Z_{\Lambda_N,\beta},
$$
that is, the partition function with free boundary condition. Define, for $n>N$
$$
B_n = \{-n,-n+1,\ldots,-N-1\} \cup \{N+1,\ldots, n-1,n\},
$$
and $B_N=\emptyset$.
Note that, by the telescopic sum,
$$
\mu^{\eta}_{\Lambda_N,\beta}(\sigma)  = \prod_{n=N}^{\infty} \frac{Z^{\eta}_{N,\beta}(\Lambda_N|B_{n+1})}{Z^{\eta}_{N,\beta}(\Lambda_N|B_{n})}
$$
The numerator and denominator are partition functions $Z_{N,\beta}(\Lambda_N|B_{n})$ with boundary condition $\eta$ on $B_n$.

We will use the method of telescopic sums to decompose the partition function 
$$
\tilde{\Xi}^{\eta}_{N_k,n_k} = \sum_{\sigma\in \Omega^+_{\Lambda_{N_k}}} \exp\left( -\beta H_{\Lambda_{N_k}}(\sigma) -2\beta \sum_{x\in \Lambda_{N_k}}\tilde{h}^{\eta}_{N_k,n_k}(x)\mathbbm{1}_{\sigma_x=-1}\right).
$$
Since
$$
\tilde{h}^{\eta}_{N_k,n_k}(x) = \sum_{y\in \Lambda_{N_k}^c}J_{xy}\eta_y \quad \text{ for }x\in [-N_k+n_k,N_k-n_k]
$$
and 0 otherwise, let $C_{N_k,n_k} = [-N_k+n_k,N_k-n_k]$, we have
$$
\tilde{\Xi}^{\eta}_{N_k,n_k} = \tilde{\Xi}^{\eta}_{N_k,\beta}(C_{N_k,n_k}| \Lambda^c_N) = \prod_{n=N_k+1}^{\infty} \frac{\tilde{\Xi}^{\eta}_{N_k,\beta}(C_{N_k,n_k}| B_{n})}{\tilde{\Xi}^{\eta}_{N_k,\beta}(C_{N_k,n_k}| B_{n-1})}.
$$
It is also important to see that
$$
\frac{\tilde{\Xi}^{\eta}_{N_k,\beta}(C_{N_k,n_k}| B_{n})}{\tilde{\Xi}^{\eta}_{N_k,\beta}(C_{N_k,n_k}| B_{n-1})} = \tilde{\mu}^{\eta_{B_{n-1}}}_{N_k,n_k} \left( \prod_{y\in \{-n,n\}}e^{\beta \eta_y\sum_{x\in C_{N_k,n_k}}J_{xy}\sigma_x} \right).
$$

Thus, the characteristic function can be written as
\begin{align*}
\psi_{N_k}(t) = \mathbb{E}\left( \prod_{n={N_k+1}}^{\infty}e^{it \sum_{y\in \{-n,n\}}W_{N_k}^{\eta_y}} e^{\frac{it}{2\beta}\mathfrak{T}^{\eta}_{N_k,n_k}(B_{n-1})} e^{\frac{it}{2\beta}\mathfrak{S}^{\eta}_{N_k,n_k}}\right),  
\end{align*}
where
\small
$$
\mathfrak{T}^{\eta}_{N_k,n_k}(B_{n-1})=\log \tilde{\mu}^{\eta_{B_{n-1}}}_{N_k,n_k} \left( \prod_{y\in \{-n,n\}}e^{\beta \eta_y\sum_{x\in C_{N_k,n_k}}J_{xy}\sigma_x} \right)-\log \tilde{\mu}^{-\eta_{B_{n-1}}}_{N_k,n_k} \left( \prod_{y\in \{-n,n\}}e^{\beta \eta_y\sum_{x\in C_{N_k,n_k}}J_{xy}\sigma_x} \right), 
$$
\normalsize
and
$$
\mathfrak{S}^{\eta}_{N_k,n_k} = \log \frac{\tilde{\Xi}^{\eta}_{N_k}}{\tilde{\Xi}^{\eta}_{N_k,n_k}} -\log \frac{\tilde{\Xi}^{-\eta}_{N_k}}{\tilde{\Xi}^{-\eta}_{N_k,n_k}}. 
$$

{\bf Remark:} Note that, for the toy model, we have
$$
\psi_{N_k}(t)  =\mathbb{E}\left( \prod_{n={N_k+1}}^{\infty}e^{it \sum_{y\in \{-n,n\}}W_{N_k}^{\eta_y}}\right) = \prod_{n={N_k+1}}^{\infty}\mathbb{E}\left( e^{it \sum_{y\in \{-n,n\}} \eta_y\sum_{x\in \Lambda_k}J_{xy}}\right).
$$
Note that
$$
\sum_{x\in \Lambda_k}J_{xn} = \sum_{x\in \Lambda_k}J_{x(-n)}
$$
and
$$
\mathbb{E}\left( e^{it \sum_{y\in \{-n,n\}} \eta_y\sum_{x\in \Lambda_k}J_{xy}}\right) = \frac{1}{2} + \frac{1}{2}\cos\left(2t \sum_{x\in \Lambda_k}J_{xy}\right) = \cos^2\left(t \sum_{x\in \Lambda_k}J_{xy}\right).
$$
Thus, the proof of the toy model still holds with correcting terms, since the square does not affect the conclusion.

{\bf Acknowledgments:} We thank Pierre Picco for stimulating discussions and his encouragement  to study Dyson models with random boundary conditions during his visit  to EURANDOM (TU/e, Eindhoven) in May 2018. A.v.E. thanks Roberto Fern\'andez and NYU Shanghai for a very inspiring visit when we made our first main progress. Further steps occurred when  A.v.E. and A.L.N. met in Cr\'eteil, Brest,  Leiden, S\~{a}o Carlos and Oberwolfach. A.v.E. also thanks Evgeny Verbitskiy and Anders Oberg for stimulating exchanges, and A.v.E. and A.L.N.  thank the participants of the Oberwolfach  Mini-Workshop: One-sided and Two-sided Stochastic Descriptions for useful discussions. E.E. thanks Vlad Margarint for fruitful discussions. and eventually A.L.N and E.E. thank NYU Shanghai for the support of a two-weeks visit crucial to conclude this work, while an intermediate Research in Pairs stay in Paris, via an IHP RIP grant in June 2023, has also made it possible.

 The research of A.L.N. has also been partly supported by {\em Labex B\'ezout} at Universit\'e Paris Est (UPE), funded by ANR (reference ANR-10-LABX-58) and by the CNRS   {\em International Research Project B\'ezout-EURANDOM}.

 \addcontentsline{toc}{section}{\bf References}

\end{document}